\renewcommand{\ALG@name}{Construction}
\newbox\BA@first@box
\newtheorem{theorem}{\bf Theorem}[section]
\newtheorem{lemma}[theorem]{\bf Lemma}
\newtheorem{definition}{\bf Definition}[section]
\newtheorem{example}{\bf Example}[section]
\newtheorem{remark}{\bf Remark}[section]
\newtheorem{claim}[theorem]{\bf Claim}
\newcommand{\qedw}{\hfill \ensuremath{\Box}}
\newcommand{\qed}{\nobreak \ifvmode \relax \else
      \ifdim\lastskip<1.5em \hskip-\lastskip
      \hskip1.5em plus0em minus0.5em \fi \nobreak
      \vrule height0.75em width0.5em depth0.25em\fi}
\newenvironment{proof}[1][Proof:]{\begin{trivlist}
\item[\hskip \labelsep {\bfseries #1}]}{\qedw\end{trivlist}}
\newcommand{\vc}[1]{{\textcolor{black}{#1}}}
\newcommand{\bl}[1]{{\textcolor{black}{#1}}}
\newcommand{\bll}[1]{{\textcolor{black}{#1}}}
\newcommand{\fbl}[1]{{\textcolor{black}{#1}}}
\newcommand{\remove}[1]{}
\newcommand{\Out}{\text{Out}}
\begin{document}
\title{Numerically Stable Polynomially Coded Computing}
\fontfamily{cmr} \selectfont
\author{\large Mohammad Fahim and Viveck R. Cadambe\thanks{
M. Fahim and V. Cadambe are with the Department of Electrical Engineering, Pennsylvania State University, University Park, PA 16802.}\thanks{This work will be presented in part at the IEEE International Symposium on Information Theory (ISIT), July 2019.}
}
\date{}
\maketitle
\vspace{-0pt}
\allowdisplaybreaks{

\begin{abstract}
We study the numerical stability of polynomial based encoding methods, which has emerged to be a powerful class of techniques for providing straggler and fault tolerance in the area of coded computing. Our contributions are as follows: 
\begin{enumerate}
    \item  We construct new codes for matrix multiplication that achieve the same fault/straggler tolerance as the previously constructed \emph{MatDot Codes} \bll{and \emph{Polynomial Codes}}. Unlike previous codes that use polynomials expanded in a monomial basis, our codes use a basis of orthogonal polynomials. 
 
 \item We show that the condition number of every $m \times m$ sub-matrix of an $m \times n, n \geq m$ Chebyshev-Vandermonde matrix, evaluated on the $n$-point Chebyshev grid, grows as $O(n^{2(n-m)})$ for $n > m$. An implication of this result is that, when Chebyshev-Vandermonde matrices are used for coded computing, for a fixed number of redundant nodes $s=n-m,$ the condition number grows at most polynomially in the number of nodes $n$. 
 
 \item By specializing our orthogonal polynomial based constructions to Chebyshev polynomials, and using our condition number bound for Chebyshev-Vandermonde matrices, we construct new numerically stable techniques for coded matrix multiplication. We empirically demonstrate that our constructions have significantly lower numerical errors compared to previous approaches which involve inversion of Vandermonde matrices. We generalize our constructions to explore the trade-off between computation/communication and fault-tolerance.
 \item We propose a numerically stable specialization of Lagrange coded computing. Motivated by our condition number bound, our approach involves the choice of evaluation points and a suitable decoding procedure that involves inversion of an appropriate Chebyshev-Vandermonde matrix. Our approach is demonstrated empirically to have lower numerical errors as compared to standard methods.
\end{enumerate}
\end{abstract}
}


\section{Introduction}
The recently emerging area of ``coded computing'' focuses on incorporating redundancy based on coding-theory-inspired strategies to tackle central challenges in distributed computing, including stragglers, failures, processing errors, communication bottlenecks and security issues. 
Such ideas have been applied to different large scale distributed computations such as matrix multiplication \cite{dutta2016short,polynomialcodes, allerton17,arxiv_allerton17,genPolyDot}, gradient methods \cite{tandon2016gradient,tan17,abbegrad}, linear solvers \cite{YangGK17,crit, maity2018robust}  and multi-variate polynomial evaluation \cite{yu2018lagrange}. An important idea that has emerged from this body of the work is the use of novel, Reed-Solomon like  \emph{polynomial} based methods for encoding data. In polynomial based methods, each computation node stores a linearly encoded combination of the data partitions, where data stored at different worker nodes can be interpreted as evaluation of an appropriate polynomial at different points. The nodes then perform computation on these encoded versions of the data, and a central master/fusion node aggregates the outputs of these computations to recover the overall result via a decoding process that inevitably involves polynomial interpolation. Much like Reed Solomon Codes, if the number of nodes performing the computation is higher than the number of evaluation points required for accurate interpolation, the overall computation is tolerant to faults and stragglers. 

Perhaps the most striking application of polynomial based methods comes in the context of matrix multiplication. To multiply two $N \times N$ matrices $\mathbf{A},\mathbf{B},$ assuming that each node stores $1/m$ of each matrix, classical work in algorithm based fault tolerance \cite{Huang_TC_84} outlines a coding based method which has been analyzed in \cite{ProductCodes}. 
Reference \cite{polynomialcodes} showed through \emph{polynomial} based encoding methods that the result of just $m^2$ nodes can be used by the master node to recover the matrix-product. Remarkably, this means that polynomial based codes ensure that  \emph{the recovery threshold} - the worst case number of nodes whose computation suffices to recover the overall matrix-product - does not grow with $P$, the number of the distributed system's worker nodes, unlike the approaches of \cite{Huang_TC_84, ProductCodes}. The recovery threshold for matrix multiplication has been improved to $2m-1$ via a code construction called MatDot Codes in \cite{allerton17}, albeit at a higher communication/computation cost than codes in \cite{polynomialcodes}. 
A second prominent application of polynomial based methods is the idea of \emph{Lagrange coded computing} \cite{yu2018lagrange}, where coding is applied for multi-variate polynomial computing with guarantees of straggler resilience, security and privacy. In addition, polynomial-based methods are also useful for communication-efficient approaches for inverse problems and gradient methods \cite{abbegrad, li2018polynomially, crit}. 


 Despite the enormous success, the scalability of polynomial based methods in practice are limited by an ``inconvenient truth'', their numerical instability. The decoding methods for polynomial based methods require interpolating a degree $K-1$ polynomial using $K$ evaluation points. While this is numerically stable for classical error correcting codes for communication and storage which are implemented over finite fields, we are concerned here for data processing applications where the operations are typically real-valued. The main reason for the instability is that either implicitly or explicitly, interpolation effectively solves a linear system whose transform is characterized by a Vandermonde matrix. It is well known that the condition number of Vandermonde matrices with real-valued nodes grows exponentially in the dimension of the matrix \cite{gautschi1987lower, gautschi1990stable, gautschi1974norm, reichel1991chebyshev}. The large condition number means that small perturbations of the Vandermonde matrix due to numerical precision errors can result in singular matrices \cite{quarteroni2010numerical, trefethen2013approximation}. In practice, this can translate to large numerical errors even when the coded computation is distributed among few tens of nodes\footnote{For example, \cite{seth2018bigdata}, reports that ``In our experiments we observed large floating
point errors when inverting high degree Vandermonde matrices
for polynomial interpolation''.}.  \bll{Conventional intuition dictates that the main scalability bottlenecks in distributed computing include  computation cost per worker, communication bottlenecks, and  stragglers. However, for \emph{polynomially coded computing},  it turns out that numerically stability is also critical and constitutes  a huge bottleneck for scalability of  such codes.
Indeed, a polynomially coded computing scheme that achieves the minimum recovery threshold, and that is optimal computation/communication wise, will simply fail once implemented on a distributed system with tens of computing nodes due to the large numerical errors. 
Thus, the main contribution of our paper is a new numerically stable approach to polynomially coded computing.}


\section{Summary of Contributions}
 In this paper, we develop a new, numerically stable, approach for polynomially coded computing. A significant difference from previous polynomial coding approaches is that we depart from the monomial basis, which allows us to circumvent the inherently ill-conditioned Vandermonde-matrices. We demonstrate our approach through two important applications of polynomially coded computing: matrix multiplication, and Lagrange coded computing.

To illustrate our results, consider the coded matrix multiplication problem, where the goal is to multiply two matrices $\mathbf{A},\mathbf{B}$ over $P$ computation nodes where each node stores $1/m$ of each of the two matrices.  A master node encodes $\mathbf{A},\mathbf{B}$ into $P$ matrices each, and sends these matrices respectively to each worker node. Each worker node multiplies the received encoded matrices, and sends the product back to the fusion node\footnote{The master and fusion nodes are logical entities; in practice, they may be the same node, or may be emulated in a decentralized manner by the computation nodes.}, which aims to recover $\mathbf{A}\mathbf{B}$ from a subset of the worker nodes. The recovery threshold is defined as a number $K$ such that the computation of any set of $K$ worker nodes suffices to recover the product $\mathbf{A}\mathbf{B}.$   The MatDot scheme of \cite{allerton17} achieves the best known recovery threshold of $2m-1$. We begin with an example of MatDot Codes for $m=2.$

\textbf{Example 1: MatDot Codes \cite{allerton17}, recovery threshold = 3:}
\emph{Consider two $N \times N$ matrices} $$\mathbf{A}=\begin{bmatrix}\mathbf{{A}}_1 & \mathbf{{A}}_{2}\end{bmatrix},~~ \mathbf{B} = \begin{bmatrix}\mathbf{{B}}_1 \\ \mathbf{{B}}_{2}\end{bmatrix},$$ where $\mathbf{A}_{1},\mathbf{A}_{2}$ are $N \times N/2$ matrices and $\mathbf{B}_{1},\mathbf{B}_{2}$ are $N/2 \times N$ matrices. \emph{
 Define $p_{\mathbf{A}}(x) = \mathbf{A}_{1} + \mathbf{A}_{2}x$ and $p_{\mathbf{B}}(x) = \mathbf{B}_{1}x+\mathbf{B}_{2},$ and let $x_1, \cdots, x_P$ be distinct real values. Notice that $\mathbf{AB}=\mathbf{A}_{1}\mathbf{B}_1+\mathbf{A}_2\mathbf{B}_2$ is the coefficient of $x$ in polynomial $p_{\mathbf{A}}(x)p_{\mathbf{B}}(x)$. In MatDot Codes, as illustrated in Fig. \ref{fig:Ex1}, worker node $i$ computes $p_{\mathbf{A}}(x_i)p_{\mathbf{B}}(x_i),~ i=1,2, \ldots P,$ so that from any $3$ of the $P$ nodes, the polynomial $p(x) = \mathbf{{A}}_{1}\mathbf{{B}}_2+(\mathbf{{A}}_{1}\mathbf{{B}}_{1} + \mathbf{{A}}_{2}\mathbf{{B}}_{2}) x + \mathbf{{A}}_2\mathbf{{B}}_1 x^{2}$ can be interpolated. Having interpolated the polynomial, the product $\mathbf{A}\mathbf{B}$ is simply the coefficient of $x$.}
\begin{figure}[t]
    \centering
    \includegraphics[scale=0.450]{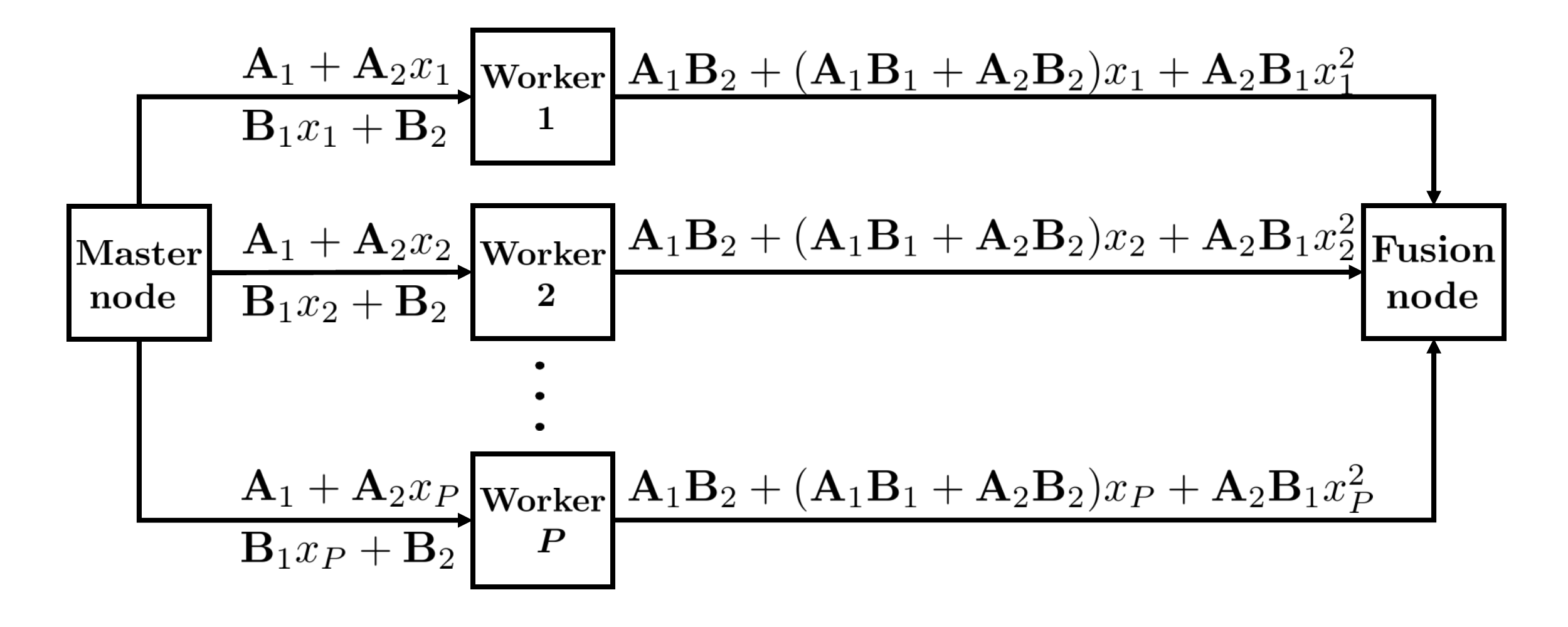}
    \caption{Example of MatDot Codes \cite{allerton17}, with a recovery threshold of $3$. The matrix product $\mathbf{A}\mathbf{B}$ is the coefficient of $x$ in $p_\mathbf{A}(x)p_\mathbf{B}(x)$, and can be recovered at the fusion node upon receiving the output of any $3$ worker nodes and interpolating $p_\mathbf{A}(x)p_\mathbf{B}(x)$. }
    \label{fig:Ex1}
\end{figure}
A generalization of the above example leads to a recovery threshold of $2m-1$, with a decoding process that involves effectively inverting a $2m-1 \times 2m-1$ Vandermonde matrix. It has been shown that the condition number of the $n \times n$ Vandermonde matrix grows exponentially in $n$ with both $\ell_{\infty}$ and $\ell_{2}$ norms  \cite{gautschi1987lower, gautschi1990stable}. The intuition behind the inherent poor conditioning of the monomial basis $\{1,x,x^{2},\ldots, x^{2m-1}\}$ is demonstrated in Fig. \ref{fig:monomialplot} and Fig. \ref{fig:monomialvec}. 

\begin{figure*}[!t]
    \centering
    \begin{minipage}[b]{0.4\textwidth}
    \includegraphics[width=\textwidth]{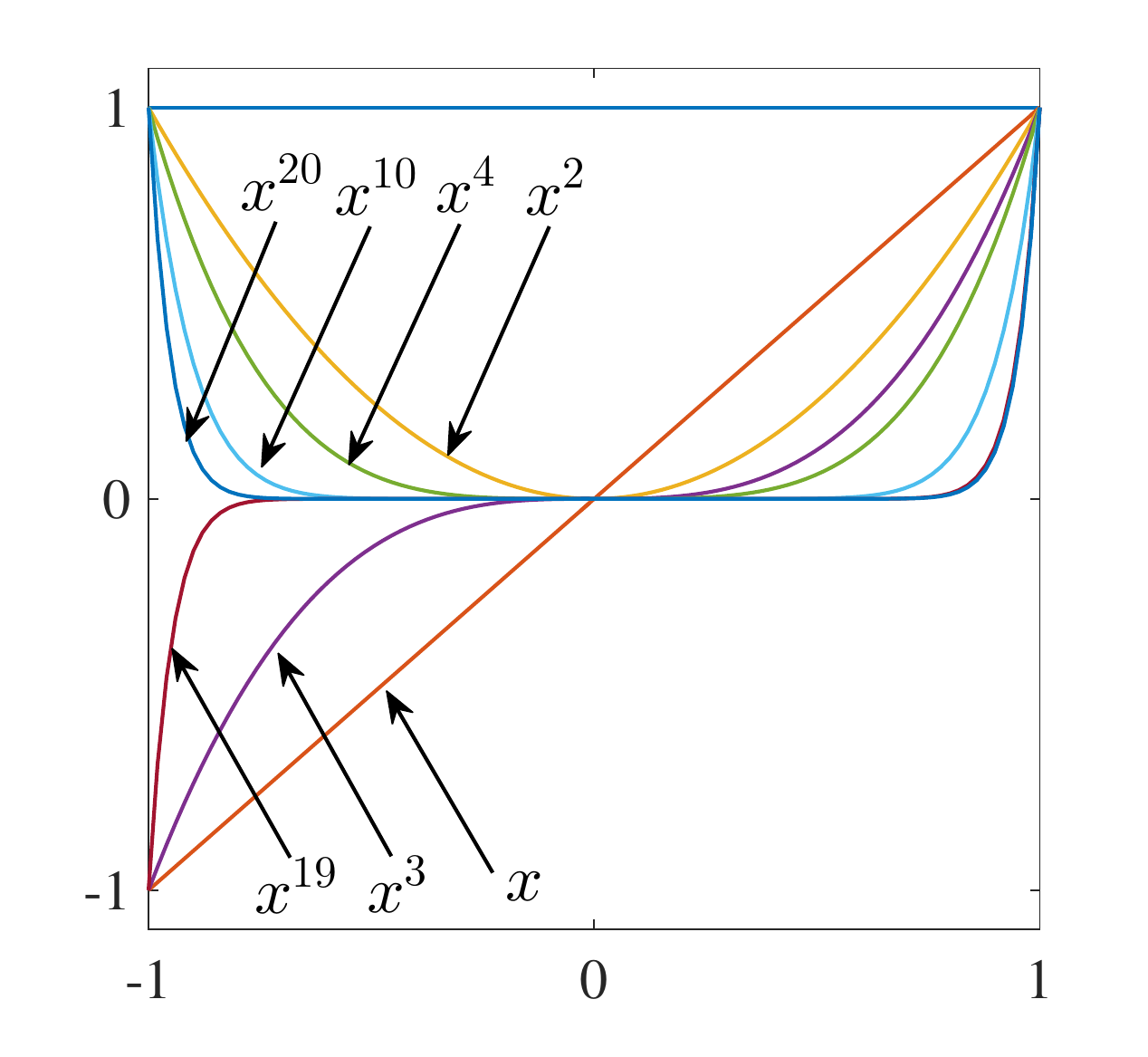}
    \caption{Plot of monomials $1,x,x^2,x^3,x^4,x^{10},x^{19},x^{20}$ versus $x$ for $x\in [-1,1]$. Note that for a large  degree $d,$ small changes in $x$ can lead to large changes in $x^{d};$ this leads to significant numerical errors when working with the monomial basis.}
    \label{fig:monomialplot}
  \end{minipage}
  \hfill
  \begin{minipage}[b]{0.42\textwidth}
    \includegraphics[width=\textwidth]{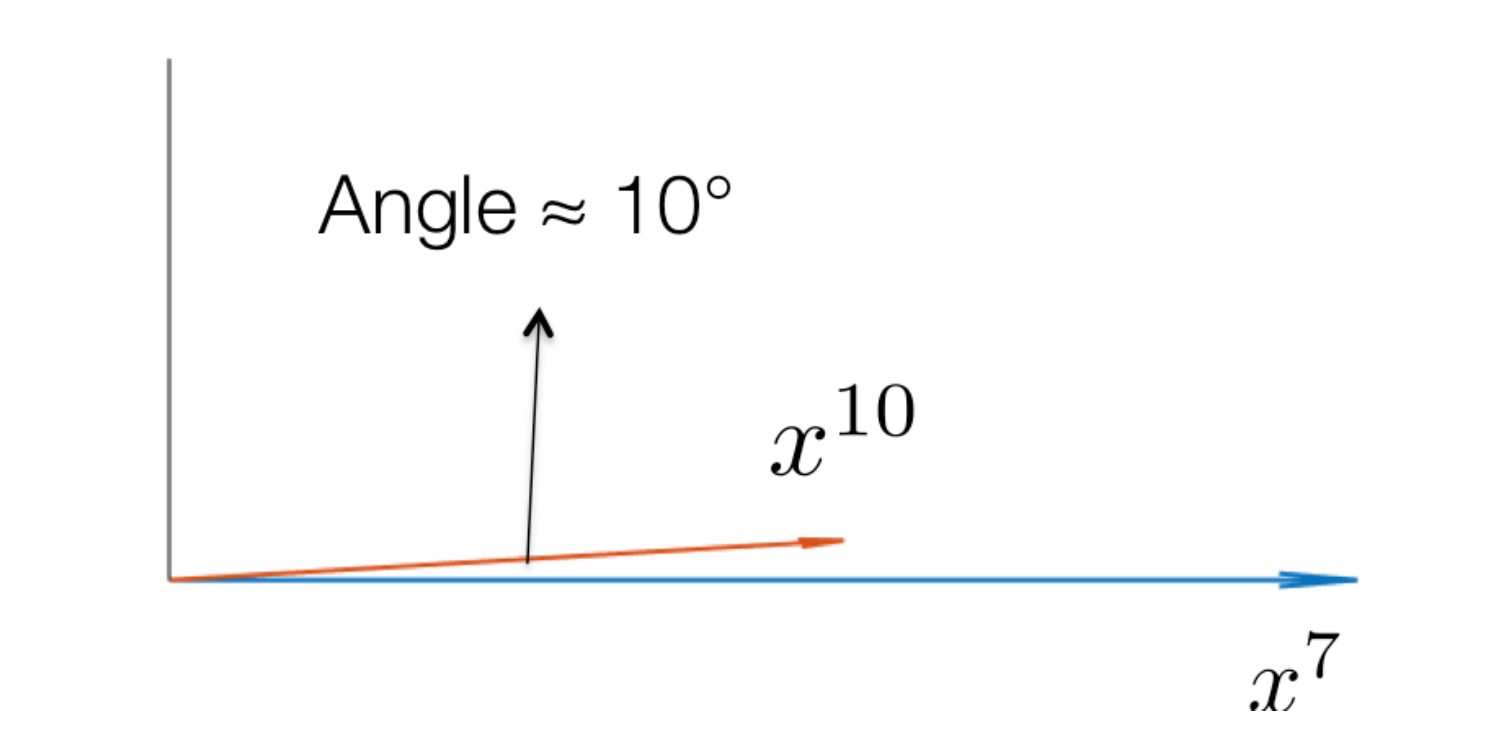}
    \caption{Note that $\{1,x,\ldots,x^{d}\}$ forms a basis for the vector space of $d$-degree polynomials, with the inner-product $\langle f,g\rangle=\int_{-1}^{1}f(x)g(x)dx.$ We have plotted the vectors $x^7$ and $x^{10}$. The small angle between the two vectors leads to numerical errors.}
    \label{fig:monomialvec}
    \hfill
  \end{minipage}
  \begin{minipage}[b]{0.44\textwidth}
  \includegraphics[width=\textwidth]{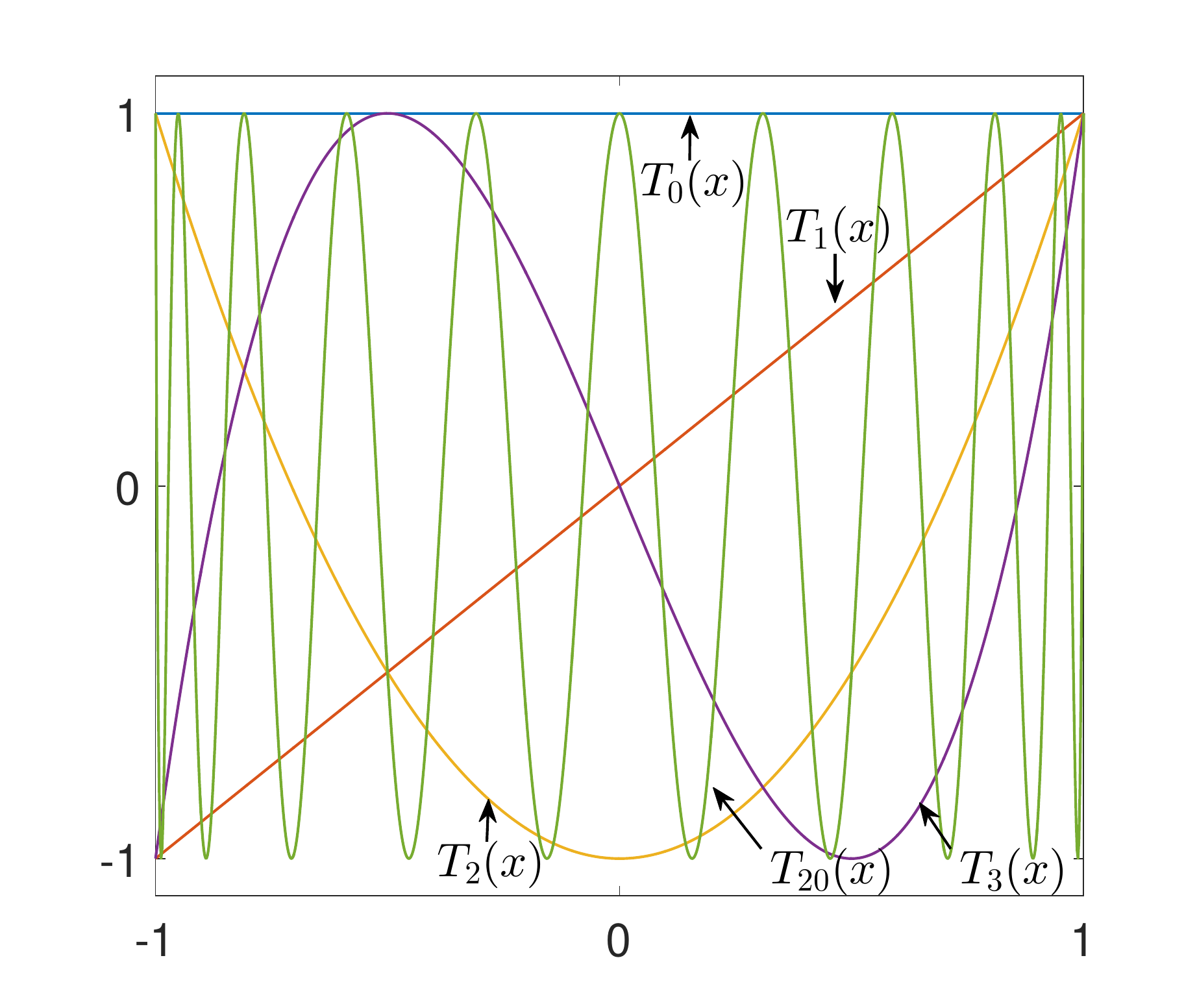}
  \caption{Plot of Chebyshev polynomials $T_0(x),T_1(x),T_2(x),T_3(x),T_{20}(x)$ versus $x$ for $x\in [-1,1]$. }
   \label{fig:chebplot}
  \end{minipage}
\end{figure*}

Motivated by Fig.\ref{fig:monomialvec}, we aim, in this paper, to choose polynomials that are orthonormal.
However, it is not immediately clear whether orthonormal polynomials are applicable for matrix multiplications. We demonstrate the applicability of orthonormal codes for matrix multiplication. For the example below, let $q_0(x), q_1(x)$ denote two orthonormal polynomials such that \begin{equation}
\int_{-1}^{1} q_i(x) q_j(x) dx = \left\{\begin{array}{cc} 0 & \textrm{if }i=j \\ 1 & \textrm{otherwise} \end{array}\right.\label{eq:ortho}
\end{equation}where $q_i(x),i=0,1$ has degree $i$.

\vspace{10mm}
\textbf{Example 2 : \fbl{OrthoMatDot Codes [This paper]}, recovery threshold = 3:}
\emph{For two $N \times N$ matrices $\mathbf{A}=\begin{bmatrix}\mathbf{{A}}_1 & \mathbf{{A}}_{2}\end{bmatrix}, \mathbf{B} = \begin{bmatrix}\mathbf{{B}}_1 \\ \mathbf{{B}}_{2}\end{bmatrix},$
let $p_{\mathbf{A}}(x) = \mathbf{A}_{1} q_0(x) + \mathbf{A}_{2}q_1(x)$ and $p_{\mathbf{B}}(x) = \mathbf{B}_{1}q_0(x)+\mathbf{B}_{2} q_1(x).$ Notice that because of (\ref{eq:ortho}), we have $$\mathbf{A}\mathbf{B} = \int_{-1}^{1} p_{\mathbf{A}}(x) p_{\mathbf{B}}(x) dx.$$  This leads to the following coded computing scheme: worker node $i$ computes $p_{\mathbf{A}}(x_i)p_{\mathbf{B}}(x_i),~ i=1,2, \ldots P,$ where $x_1,\cdots, x_P$ are distinct real values, so that from any $3$ of the $P$ nodes, the fusion node can interpolate $p(x) = p_{\mathbf{A}}(x)p_{\mathbf{B}}(x)$. Having interpolated the polynomial, the fusion node obtains the product $\mathbf{A}\mathbf{B}$ by performing $\int_{-1}^{1}p_{\mathbf{A}}(x)p_{\mathbf{B}}(x)dx$. This example is illustrated in Fig. \ref{fig:Ex2}}.
\begin{figure}[t]
    \centering
    \includegraphics[scale=0.450]{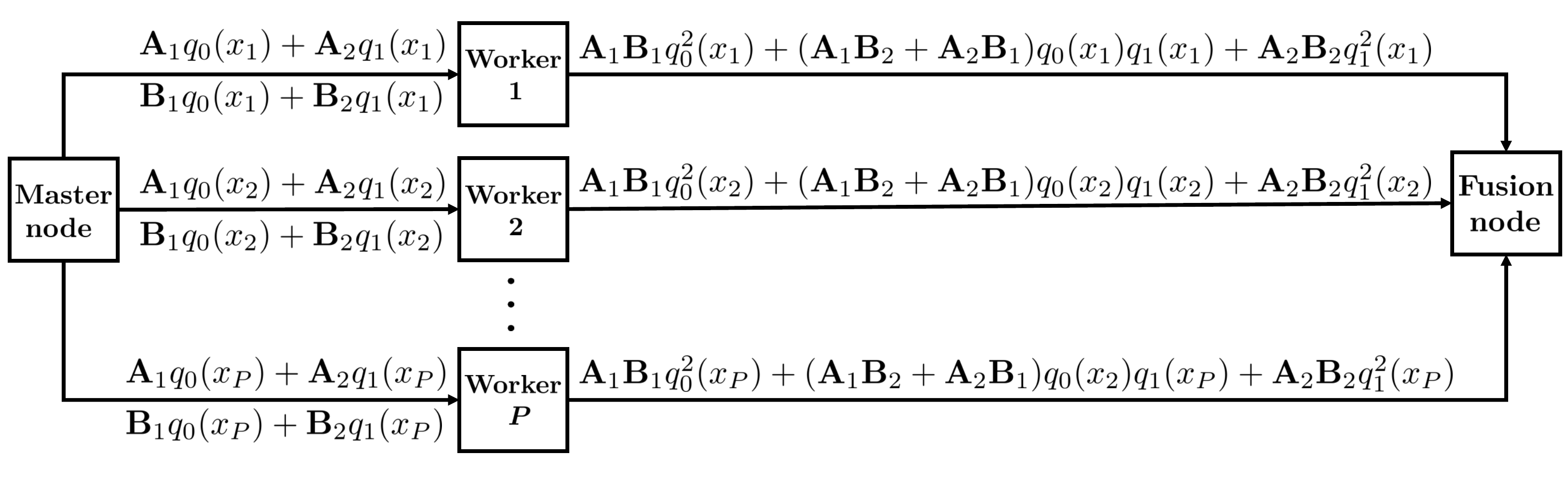}
    \caption{Example of our proposed orthonormal polynomials based codes, with a recovery threshold of $3$. The matrix product $\mathbf{A}\mathbf{B}$ is $\int_{-1}^{1} p_\mathbf{A}(x)p_\mathbf{B}(x)dx $, and can be recovered at the fusion node upon receiving the output of any $3$ worker nodes, then interpolating $p_\mathbf{A}(x)p_\mathbf{B}(x)$, and computing the integral $\int_{-1}^{1} p_\mathbf{A}(x)p_\mathbf{B}(x)dx.$}
    \label{fig:Ex2}
\end{figure}

A simple generalization of the above example, described in Construction \ref{con:orthpoly} in Section \ref{sec:ortho}, leads to a class of codes, \fbl{we refer to it as \emph{OrthoMatDot Codes,}} with recovery threshold of $2m-1$, the same recovery threshold as MatDot Codes. In general, orthonormal polynomials are defined over arbitrary weight measure $\int_{-1}^{1} {\bf \cdot}~ w(x) dx;$ some well known classes of polynomials corresponding to different weight measures $w(x)$ include Legendre, Chebyshev, Jacobi and Laguerre Polynomials \cite{quarteroni2010numerical, trefethen2013approximation} (See Section \ref{sec:prem} for definitions). Our \fbl{OrthoMatDot} Codes in Section \ref{sec:ortho} can use any weight measure, and therefore can be used with different classes of orthonormal polynomials. Of particular interest to our paper are the Chebyshev polynomials (Fig. \ref{fig:chebplot}).

With our basic template, the task of developing numerically stable codes boils down to (A) interpolating $p_{\mathbf{A}}(x)p_{\mathbf{B}}(x)$ in a numerically stable manner, and (B) integrating this polynomial in a numerically stable manner. For task (B), we use a decoding procedure via \emph{Gauss Quadrature} \cite{quarteroni2010numerical, num97, trefethen2013approximation} to recover the integral. Task (A) is particularly challenging in the coding setting, because our goal is to interpolate the coefficients of $p_\mathbf{A}(x)p_{\mathbf{B}}(x)$ - expanded over a series of orthonormal polynomials - from any $2m-1$ points among a set of $P$ points. 

In Section \ref{sec:Chebyshev}, we provide a \fbl{specialization to the class of OrthoMatDot Codes}, a numerically stable matrix multiplication code construction \bll{that has the same recovery threshold and communication/computation cost per worker as MatDot codes}. The construction specializes \fbl{the class of OrthoMatDot Codes}  via the use of Chebyshev polynomials, which are a class of orthogonal polynomials that are ubiquitous in numerical methods and approximation theory \cite{trefethen2013approximation}. Construction \ref{con:chebpoly} also specifies the choice of evaluation points  $x_1, x_2, \ldots, x_P.$ 

The decoding procedure outlined \fbl{for the specialization of OrthoMatDot Codes in Section \ref{sec:Chebyshev}} involves the effective inversion of some $2m-1 \times 2m-1$ sub-matrix of a $2m-1 \times P$ \emph{Chebyshev-Vandermonde} matrix \cite{reichel1991chebyshev}, where each of the $i$-th  column contains evaluations of the first $2m-1$ Chebyshev polynomials at $x_i,i=1,2,\ldots,P$. A key technical result of our paper shows that, with our choice of evaluation points $x_1, x_2, \ldots, x_P,$ every $2m-1 \times 2m-1$ square sub-matrix of the $2m-1 \times P$ Chebyshev-Vandermonde matrix is well-conditioned. More precisely, we show that, with our choice of $x_1, x_2, \ldots, x_P$, the condition number of \emph{any} $2m-1 \times 2m-1$ sub-matrix of the Chebyshev-Vandermonde matrix grows at most polynomially in $P$ when the number of redundant parity nodes $\Delta=P-(2m-1)$ is fixed. Our condition number bound may be viewed as result of independent interest in the area of numerical methods, and requires non-trivial use of techniques from numerical approximation theory. This result is in contrast with the well known exponential growth for Vandermonde systems. We also show the significant improvement in  stability via numerical experiments in Section \ref{sec:numerical}. 
We also provide a preview of the results here in Table \ref{table:errors}, whose results demonstrate that remarkably, our Chebyhev-Vandermonde construction with even $P=150$ nodes has a smaller relative error than the Vandermonde-based MatDot Codes\footnote{We note that the numerical error depends not only on the condition number of the matrix, but also the algorithm used for solving the linear system. However, we are not aware of any approach that can accurately solve, say, a $150 \times 150$ linear system with a Vandermonde matrix (See e.g., \cite{demmel2005accurate, ramamoorthy2019universally})} with $P=30$ nodes.
 \begin{table*}[!htb]
	\centering
    	\caption{A table depicting the relative errors of various schemes for $\Delta=P-(2m-1)=3$ redundant nodes. The error is measured via the Frobenius norm, i.e., $\frac{||\mathbf{A}\mathbf{B}-\hat{\mathbf{C}}||_F}{||\mathbf{A}\mathbf{B}||_F}$. The matrices $\mathbf{A},\mathbf{B}$ are chosen with  entries $\mathcal{N}(0,1).$ The average relative error averages over all possible $3$ node failures, i.e., over every set of $2m-1$ nodes among the $P=2m+2$ nodes; the worst case relative error involves the worst set of $2m-1$ nodes. See Section \ref{sec:numerical} for more details.\newline}  
	\label{table:errors}
	\begin{tabular}{|c|c|c|c|c|}
		\hline
		Number   &{MatDot}  & {{\fbl{OrthoMatDot}}}& {{MatDot }}   &{\fbl{OrthoMatDot}}  \\
	of Workers	& worst case& worst case&average&average
		\\
	$(P)$	&relative error&relative error&relative error&relative error
		\\
		\hline
		30 & $1.54\times 10^{-6}$ & $5.14\times 10^{-11}$ & $1.36 \times 10^{-7}$ & $1.36 \times 10^{-13}$\\
		50 & $8.6 \times 10^{3}$ & $1.27 \times 10^{-9}$ & $2.00 \times 10^2$ & $2.04 \times 10^{-13}$\\
		80 & $2.45 \times 10^{6}$ & $1.98 \times 10^{-8}$ &$ 2.19 \times 10^{2} $& $3.08 \times 10^{-12}$\\
		150 & $3.87 \times 10^{7}$ & $7.84 \times 10^{-7}$ & $8.73 \times 10^{2}$ & $2.03 \times 10^{-11}$\\
		\hline
	\end{tabular}
\end{table*}

While MatDot Codes \cite{allerton17} have an optimal recovery threshold of $2m-1$, they have relatively higher computation cost per worker ($O(N^3/m)$) and worker node to fusion node communication cost ($O(N^2)$) as compared to Polynomial Codes \cite{polynomialcodes} which have a computation cost per worker of $O(N^3/m^2)$ and worker node to fusion node communication cost of $O(N^2/m^2)$. In particular, each worker in MatDot Codes performs an ``outer'' product of an $N \times N/m$ matrix with a $N/m \times N$ matrix, whereas each worker in Polynomial Codes performs an ``inner'' product of a $N/m \times N$ matrix with a $N \times N/m$ matrix. The reduced computation/communication comes at the cost of weaker fault-tolerance - Polynomial Codes have a higher recovery threshold of $m^2$ as compared with MatDot Codes ($2m-1$).  In Section \ref{sec:chebpolycod}, we develop numerically stable codes for matrix multiplication, again via orthogonal polynomials, that achieve the same low computation/communication costs as Polynomial Codes as well as the same recovery threshold, \fbl{we refer to these codes as \emph{OrthoPoly Codes}}.

The trade-off between computation/communication cost and recovery threshold imposed by MatDot Codes and Polynomial Codes has motivated general code constructions that interpolates both of them \cite{allerton17, genPolyDot, entPoly}, albeit using the monomial basis.   In Section \ref{sec:polydot-chebyshev}, we extend our approach to a general matrix multiplication code construction, \fbl{referred to as \emph{Generalized OrthoMatDot},} that offers a computation/communication cost vs recovery threshold trade-off, following the research thread for the monomial basis \cite{allerton17, genPolyDot, entPoly}, however we also target  numerical stability in our proposed construction. While our \fbl{Generalized OrthoMatDot Codes specialize to OrthoMatDot Codes}, i.e.,  they achieve the same optimal recovery threshold as \fbl{OrthoMatDot} Codes when allowing for the same computation/communication cost as \fbl{OrthoMatDot} Codes, they do not specialize to \fbl{OrthoPoly} Codes. Specifically, \fbl{Generalized OrthoMatDot} codes have higher recovery threshold than  \fbl{OrthoPoly} Codes when allowing for the same computation/communication cost as OrthoPoly Codes.  In Section \ref{sec:lcc}, we exploit the result obtained in Theorem \ref{thm:bound} on the condition number of the square $K \times K$ sub-matrices of the $K\times P$ Chebyshev-Vandermonde matrices to propose a numerically stable algorithm for Lagrange coded computing.  In Section \ref{sec:conc}, we conclude with a discussion on other related problems such as matrix-vector multiplication \cite{Huang_TC_84,lee2016speeding}, and describe some related open questions.

\section{Preliminaries on Numerical Analysis and Notations}\label{sec:prem}

 {
We discuss, in this section, the problem of finite precision in representing real numbers on digital machines and how it may horribly affect the output of computation problems performed on these machines. In addition, we also introduce some basic definitions and results from the area of numerical approximation theory that will be used in this paper\cite{num97}, \cite{shortApp}. At the end of this section, we provide most of the common notations that will be used in this paper.
}
\subsection{Preliminaries on Numerical Analysis}
 {
Since digital machines have finite memory, real numbers are digitally stored using a finite number of bits, i.e., finite precision. However, storing  real numbers using a finite number of bits leads to  inevitable errors since a finite number of bits can only represent a finite number of real numbers with no errors. On the other hand, real numbers that cannot be directly represented using the specified finite number of bits have to be either truncated or rounded-off in order to fit in the memory. Although such perturbation  (e.g., truncation/round-off error) of real numbers due to the finite precision of digital machines can be negligibly small, the perturbation of the output of any computation that uses such ``small" perturbed stored real numbers as input does not necessarily be small as well. In fact, a very small perturbation to  the input of some computation may lead to an output that is totally wrong and irrelevant to the correct output. The condition number of a computation problem captures/measures this observation. 
\begin{definition}[Condition Number]
Let $f$ be a function representing a computation problem with input $x$, and let $\delta x$ be a small perturbation of $x$, and define $\delta f=f(x+\delta x)-f(x)$ to be the perturbation of $f$ due to $\delta x$, the condition number of the problem at $x$ with respect to some norm $||\cdot||$ is 
\begin{align}
\kappa(x)=\sup_{\delta x} \left(\frac{||\delta f||}{||f(x)||}\middle/\frac{||\delta x||}{||x||}\right).
\end{align}
\end{definition}
Given the above definition of condition number, a problem is said to be ``ill-conditioned" if small perturbations in the input lead to large perturbation in the output (i.e., the condition number is large). On the other hand, a problem is said to be ``well-conditioned" if small perturbations in the input lead to small perturbations in the output (i.e., the condition number is small).
}
 {
In what follows,  we discuss the condition number of two computation problems: the matrix-vector multiplication and solving a system of linear equations. For both problems, consider the system of linear equations represented in the matrix form $\mathbf{A}\mathbf{x}=\mathbf{y}$, where $\mathbf{A}\in \mathbb{R}^{n,n}$ and non-singular, and $\mathbf{x},\mathbf{y}\in\mathbb{R}^n$, and let $||\cdot||$ be some matrix norm. Then, let $\mathbf{A}$ be fixed, the condition number of this matrix-vector multiplication problem with $\mathbf{y}$ as its output given small perturbations in the input  $\mathbf{x}$ is $\kappa(\mathbf{x})\leq ||\mathbf{A}||||\mathbf{A}^{-1}||$, for any $\mathbf{x} \in \mathbb{R}^n$. Also, for the problem of solving the system of linear equations $\mathbf{A}\mathbf{x}=\mathbf{y}$, with $\mathbf{A}$ still fixed, the condition number of the problem of solving this system of linear equations, given small perturbations in the input $\mathbf{y}$, where $\mathbf{x}$ is the output, is  $\kappa(\mathbf{y}) \leq ||\mathbf{A}||||\mathbf{A}^{-1}||$, for any $\mathbf{y}  \in \mathbb{R}^n$.
}

Since we focus on polynomially coded computing, next, we introduce some basic tools of  numerical approximation theory  that will be used throughout this paper. Notice that, in the following, $C[a,b]$ denotes the vector space of continuous integrable functions defined on the interval $[a,b]$.  
\begin{definition}[Inner Products on \text{$C[a,b]$}]
For any $f,g \in C[a,b]$, and given a non-negative integrable weight function $w$,  
\begin{align}
\langle f,g \rangle=\int_{a}^{b} f(x)g(x)w(x) dx\notag
\end{align}
defines an inner product on $C[a,b]$ relative to $w$.
\end{definition}
\vspace{2mm}
\begin{definition}[Orthogonal Polynomials]
Consider a non-negative integrable weight function $w$, the polynomials $\{q_i\}_{i\geq 0}$ in $C[a,b]$ \vc{where $q_i(x)$ has degree $i$} and 
\begin{align}
    \langle q_i, q_j \rangle= \left\{ \begin{array}{ll}
       c_{i} & \text{if $i=j$,} \\
        0 &\text{otherwise,}
        \end{array}\right.
\end{align}
for some non-zero values $c_{i}$, where the inner product is relative to $w$, are called orthogonal polynomials relative to $w$, .
\end{definition}
\begin{definition}[Orthonormal Polynomials]
Consider a non-negative integrable weight function $w$, the  polynomials $\{q_i\}_{i\geq 0}$, \vc{where $q_i(x)$ has degree $i$}, in $C[a,b]$ such that 
\vspace{-0mm}
\begin{align}
    \langle q_i, q_j \rangle= \left\{ \begin{array}{ll}
       1 & \text{if $i=j$,} \\
        0 &\text{otherwise,}
        \end{array}\right.
\end{align}
where the inner product is relative to $w$, are called orthonormal polynomials relative to $w$.
\end{definition}

\vc{Note that based on the above definitions, if the polynomials $\{q_i\}_{i\geq 0}$ are orthogonal (or orthonormal), then $q_n(x)$ is orthogonal to all polynomials of degree $\leq n-1$, i.e., $\langle p_{n-1}(x), q_{n}(x)\rangle=0$, for any polynomial $p_{n-1}\in C[a,b]$ with degree strictly less than $n$. It's also worth noting that for $w(x)=1, a=-1,b=1$, the orthogonal polynomials are Legendre polynomials, which are derived via Gram-Schmidt procedure applied to $\{1,x,x^2, \ldots,\}$ sequentially. In addition, the following is an important class of orthogonal polynomials in our paper.}

\begin{example}[Chebyshev polynomials of the first kind]\label{ex:cheb}
The following recurrence relation  defines the Chebyshev polynomials of the first kind: 
$$T_n(x)=2xT_{n-1}(x)-T_{n-2}(x),$$
where, $T_0(x)=1, T_1(x)=x$. These Chebyshev polynomials are the corner stone of modern numerical approximation theory and practice with applications to numerical integration, and least-square approximations of continuous functions \cite{num97},\cite{shortApp}.
  $\frac{1}{\sqrt{2}} T_0, T_1, T_2, \cdots$ are orthonormal relative to the weight function $\frac{2}{\pi \sqrt{1-x^2}}$. In general, Chebyshev polynomials are defined over $x\in \mathbb{R}$. However, for $x\in [-1,1]$, $T_n(x) = \cos(n\arccos(x))$, for any   $n \in \mathbb{N}$.  For the rest of this paper, unless otherwise is stated, whenever  Chebyshev polynomials are used, they are restricted only to  the range $[-1,1]$.
\end{example}

\vc{We state, next, two results from \cite{shortApp} in Theorems  \ref{thm:integration} and \ref{thm:GQ}.}
\begin{theorem}
Let $w$ be a weight function on the range $[a,b]$, \bl{i.e., $w$ is a non-negative integrable function on $[a,b]$}, and let $x_1, \cdots, x_n$ be distinct real numbers such that $a < x_1 < \cdots < x_n < b$, there exist unique weights $a_1, \cdots, a_n$ such that 
$$\int_a^b f(x) w(x) dx =\sum_{i=1}^n a_i f(x_i),$$
for all polynomials $f$ with degree less than $n$.
\label{thm:integration}
\end{theorem}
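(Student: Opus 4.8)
The plan is to prove Theorem~\ref{thm:integration} --- existence and uniqueness of quadrature weights $a_1,\dots,a_n$ exactly integrating all polynomials of degree $<n$ against $w$ --- by reducing it to a linear-algebra statement about a Vandermonde system. First I would fix the monomial basis $\{1,x,\dots,x^{n-1}\}$ of the space $\mathcal{P}_{n-1}$ of polynomials of degree $<n$. Requiring $\int_a^b f(x)w(x)\,dx = \sum_{i=1}^n a_i f(x_i)$ for all $f\in\mathcal{P}_{n-1}$ is, by linearity, equivalent to requiring it only for the $n$ basis elements $f(x)=x^{k}$, $k=0,1,\dots,n-1$. This gives the linear system $\sum_{i=1}^n x_i^{k} a_i = \mu_k$ for $k=0,\dots,n-1$, where $\mu_k := \int_a^b x^k w(x)\,dx$ are the (finite, well-defined) moments of $w$ on $[a,b]$.

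Next I would observe that the coefficient matrix of this system, with rows indexed by $k$ and columns by $i$, is exactly the $n\times n$ Vandermonde matrix $V = (x_i^{k})_{0\le k\le n-1,\,1\le i\le n}$ built on the nodes $x_1,\dots,x_n$. Since the $x_i$ are distinct, $\det V = \prod_{i<j}(x_j - x_i) \neq 0$, so $V$ is invertible. Hence the system $V\mathbf{a} = \boldsymbol{\mu}$ has a unique solution $\mathbf{a} = V^{-1}\boldsymbol{\mu}$, which furnishes the unique weights $a_1,\dots,a_n$. Unwinding the reduction: these weights make the quadrature formula exact on the monomial basis, and therefore, by linearity of both the integral and the finite sum, exact on every $f\in\mathcal{P}_{n-1}$; conversely any weights achieving exactness on $\mathcal{P}_{n-1}$ in particular satisfy the $n$ monomial equations, hence must equal this solution, giving uniqueness.

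There is essentially no hard step here --- the result is a standard consequence of Vandermonde invertibility --- but the one point that deserves a line of care is that the distinctness hypothesis $a<x_1<\dots<x_n<b$ (in particular that the nodes are pairwise distinct) is precisely what is needed for $\det V\neq 0$; without it the system could be inconsistent or underdetermined. I would also note in passing that the ordering and the strict inclusion in $(a,b)$, while stated, are not actually used for this existence/uniqueness claim (they matter for the positivity of Gauss quadrature weights in Theorem~\ref{thm:GQ}), so I would simply remark that only pairwise distinctness of the $x_i$ is invoked. That completes the proof.
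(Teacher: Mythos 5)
Your proof is correct and complete. The paper itself does not prove Theorem~\ref{thm:integration} (it is cited from \cite{shortApp}), but the one-paragraph intuition it offers is phrased via Lagrange interpolation: the integral $f\mapsto \int_a^b f\,w\,dx$ is a linear functional on $\mathcal{P}_{n-1}$, and since the evaluation map $f\mapsto (f(x_1),\dots,f(x_n))$ is an isomorphism of $\mathcal{P}_{n-1}$ onto $\mathbb{R}^n$ (by Lagrange interpolation at $n$ distinct nodes), the functional must be a linear combination of the point evaluations; indeed Remark~\ref{rmk:orth}(2) records the resulting explicit weights $a_i=\int_a^b L_i(x)w(x)\,dx$ in terms of the Lagrange basis. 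Your route is the dual presentation of the same fact: instead of inverting the evaluation map abstractly through the Lagrange basis, you pick the monomial basis, write out the $n$ moment equations $\sum_i x_i^k a_i=\mu_k$, and invoke invertibility of the Vandermonde matrix $V=(x_i^k)$. The two arguments are essentially a change of basis apart. The Lagrange formulation buys an explicit closed form for the weights with no matrix inversion; your Vandermonde formulation buys a clean, self-contained uniqueness statement in one determinant computation. Either is perfectly acceptable here, and your remark that only pairwise distinctness of the nodes (not their ordering or strict containment in $(a,b)$) is used is accurate --- those extra hypotheses only become relevant for Gauss quadrature in Theorem~\ref{thm:GQ}.
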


\vc{Theorem \ref{thm:integration} is not surprising - the left hand side of the equation stated in the theorem is a linear operator on the vector space of $n-1$-degree polynomials. Because of Lagrange-interpolation, the space of $n-1$-degree polynomials is itself a linear transformation on its evaluation at $n$ points. Therefore, the left hand side can be expressed as an inner product of the functions evaluations at $n$ points. We next state a remarkable result by Gauss which states conditions under which the expression of Theorem \ref{thm:integration} is exact for polynomials of degree up to $2n-1,$ even though the number of evaluation points is just $n$.}

\begin{theorem}[Gauss Quadrature]\label{thm:GQ}
Fix a weight function $w$, and let $\{q_i\}_{i\geq 0}$ be a set of orthonormal polynomials in $C[a,b]$ relative to $w$. Given $n$, let $\eta_1, \cdots, \eta_n$ be  the roots of $q_n$ such that $a\leq \eta_1< \eta_2 < \cdots <\eta_n\leq b$, and choose  real values $a_1, \cdots, a_n$ such that $\sum_{i=1}^{n} a_i f(\eta_i) = \int_{a}^{b} f(x) w(x) dx$, for any $f \in C[a,b]$ with degree less than $n$. Then, $\sum_{i=1}^{n} a_i f(\eta_i) = \int_{a}^{b} f(x) w(x) dx$, for any polynomial $f$ with degree less than $2n$. 
\end{theorem}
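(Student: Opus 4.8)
The plan is to exploit the key structural fact about the nodes $\eta_1,\dots,\eta_n$: they are precisely the roots of the degree-$n$ orthonormal polynomial $q_n$, so $q_n(x) = c\prod_{i=1}^n (x-\eta_i)$ for some nonzero constant $c$. Given an arbitrary polynomial $f$ of degree less than $2n$, I would divide $f$ by $q_n$ using the division algorithm for polynomials, writing $f(x) = s(x) q_n(x) + r(x)$, where the quotient $s$ and remainder $r$ both have degree less than $n$ (the quotient because $\deg f < 2n$ and $\deg q_n = n$; the remainder by definition of polynomial division). This decomposition is the heart of the argument.

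Next I would integrate both sides against the weight $w$ over $[a,b]$. The cross term vanishes: since $s$ has degree at most $n-1$ and $q_n$ is orthogonal to every polynomial of degree strictly less than $n$ (as noted in the remark following the definition of orthonormal polynomials), we have $\int_a^b s(x) q_n(x) w(x)\,dx = \langle s, q_n\rangle = 0$. Hence $\int_a^b f(x) w(x)\,dx = \int_a^b r(x) w(x)\,dx$. Now $r$ has degree less than $n$, so by the choice of the weights $a_1,\dots,a_n$ (exactness for all polynomials of degree less than $n$, guaranteed by Theorem \ref{thm:integration} applied to the particular nodes $\eta_1,\dots,\eta_n$), this equals $\sum_{i=1}^n a_i r(\eta_i)$.

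Finally I would evaluate the decomposition $f = s q_n + r$ at the nodes themselves: since $q_n(\eta_i) = 0$ for every $i$, we get $f(\eta_i) = r(\eta_i)$ for all $i = 1,\dots,n$. Therefore $\sum_{i=1}^n a_i f(\eta_i) = \sum_{i=1}^n a_i r(\eta_i) = \int_a^b r(x) w(x)\,dx = \int_a^b f(x) w(x)\,dx$, which is the claimed identity. (One should also check the minor point that the roots of $q_n$ are real, simple, and lie in $(a,b)$, so that the hypotheses of Theorem \ref{thm:integration} genuinely apply; this is a standard consequence of orthogonality, but I would at least remark on it.)

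The argument has no serious obstacle — it is the classical two-line proof of Gauss quadrature. The only place demanding a moment's care is ensuring the quotient $s$ really has degree at most $n-1$ so that orthogonality kills the cross term; this follows immediately from the degree bookkeeping $\deg(s\,q_n) = \deg s + n \le \deg f < 2n$. The rest is assembling Theorem \ref{thm:integration} and the orthogonality property already recorded in the excerpt.
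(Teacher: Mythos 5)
Your argument is correct, and it is the classical textbook proof of Gauss quadrature: divide $f$ by $q_n$, kill the cross term by orthogonality of $q_n$ to lower-degree polynomials, and observe that $f$ and its remainder $r$ agree at the nodes. The degree bookkeeping $\deg s \le n-1$ and the appeal to Theorem~\ref{thm:integration} for exactness on polynomials of degree $<n$ are both handled properly, and your closing caveat about the roots of $q_n$ being real, simple, and interior to $(a,b)$ is exactly the right thing to flag. Note, however, that the paper itself does not prove this theorem --- it is stated as a cited result from an approximation-theory reference (and the companion Remark~\ref{rmk:orth} records the needed facts about the roots) --- so there is no in-paper proof to compare against; your proof simply supplies the standard argument the paper takes for granted.
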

\vspace{8mm}
\begin{remark}\label{rmk:orth}
\begin{enumerate}
\item Consider any orthonormal polynomials $\{q_i\}_{i>0}$. For any $n \in \mathbb{N}$, the set  $\{q_0, q_1, \cdots, q_{n-1}\}$ forms a basis for the vector space of polynomials with degree less than $n$.

\item In Theorem \ref{thm:GQ},
 $a_1, \cdots, a_n$  can be chosen as 
\begin{align}\label{eq:ais}
a_i= \int_{a}^{b} \bigg(\prod_{j\in[n]-i} \frac{x-\eta_j}{\eta_i-\eta_j}\bigg) w(x)dx ,~ i\in[n].
\end{align}

\item In Theorem \ref{thm:GQ}, the roots of $q_n$, i.e., $\eta_1, \cdots ,\eta_n$ are, in fact,  real and distinct. Moreover, the Chebyshev polynomial of the first kind $T_n$ has the following roots
\begin{align}\label{eq:Chebnodes}
 \rho^{(n)}_i =  \cos\left(\frac{2i-1}{2n} \pi\right),~ i\in [n].
\end{align}
The  set  $\{\rho^{(n)}_1, \cdots, \rho^{(n)}_n\}$ is often called the $n$-point Chebyshev grid, and its elements $\rho^{(n)}_1, \cdots, \rho^{(n)}_n$ are called ``Chebyshev nodes'' of degree $n$. We here discard the term ``node''  and use the term ``Chebyshev points''  to avoid confusion with computation nodes. We also denote by $\boldsymbol{\rho}^{(n)}$ the vector $(\rho_1^{(n)}, \cdots, \rho_n^{(n)})$. It is useful to note that $T_n(x)$ can be written as
\begin{align}
T_n(x)=2^{n-1}\prod_{i=1}^{n} (x-\rho_i^{(n)}),    
\end{align} and for $T_n(x)$, the weights $a_i$ in (\ref{eq:ais}) are all equal to ${2}/{n}$ when $w(x) =\frac{2}{\pi \sqrt{1-x^2}}.$
\end{enumerate}
\end{remark}
\subsection{Notations}\label{sec:notation}
Throughout this paper, we  use lowercase bold letters to denote vectors and uppercase bold letters to denote matrices. In addition,  for any positive integers $k,n$, and given a  set of orthogonal polynomials $q_0,q_1, \cdots, q_{k-1}$ on the interval $[a,b]$, let $\mathbf{x}=(x_1, \cdots, x_n)$ be a vector with entries in $[a,b]$, we define the $k\times n$ matrix $\mathbf{Q}^{(k,n)}(\mathbf{x})$ as:
\begin{align}
  \mathbf{Q}^{(k,n)}(\mathbf{x})= \left(\hspace{-2mm}\begin{array}{ccc} q_0(x_1)&  \cdots & q_{0}(x_n)\\\vdots & \ddots &\vdots \\ 
   q_{k-1}(x_1)&\cdots& q_{k-1}(x_n)\end{array}\hspace{-2mm}\right).
\end{align}
 For any subset $\mathcal{S}=\{s_1, \cdots, s_r\}\subset [n]$, we denote by $\mathbf{Q}^{(k,n)}_\mathcal{S}(\mathbf{x})$ the sub-matrix of $\mathbf{Q}^{(k,n)}(\mathbf{x})$ formed by concatenating columns with indices in $\mathcal{S}$, i.e.,    
\begin{align}
  \mathbf{Q}_\mathcal{S}^{(k,n)}(\mathbf{x})= \left(\hspace{-2mm}\begin{array}{ccc} q_0(x_{s_1})&  \cdots & q_{0}(x_{s_r})\\\vdots & \ddots &\vdots \\ 
   q_{k-1}(x_{s_1})&\cdots& q_{k-1}(x_{s_r})\end{array}\hspace{-2mm}\right).
\end{align}

For the special case where the orthogonal polynomials are the  Chebyshev polynomials of the first kind $T_0, T_1, \cdots, T_{k-1}$, we define the $k\times n$ matrix $\mathbf{G}^{(k,n)}(\mathbf{x})$ as:
\begin{align}\label{eq:GTT}
  \mathbf{G}^{(k,n)}(\mathbf{x})= \left(\hspace{-2mm}\begin{array}{ccc} T_0(x_1)&  \cdots & T_{0}(x_n) \\ 
  \vdots & \ddots &\vdots \\ 
   T_{k-1}(x_1)&\cdots& T_{k-1}(x_n)\end{array}\hspace{-2mm}\right),
\end{align}
we denote by $\mathbf{G}^{(k,n)}_\mathcal{S}(\mathbf{x})$    the sub-matrix of $\mathbf{G}^{(k,n)}(\mathbf{x})$ formed by concatenating columns with indices in $\mathcal{S}$, i.e.,    
\begin{align}
\mathbf{G}_\mathcal{S}^{(k,n)}(\mathbf{x})= \left(\hspace{-2mm}\begin{array}{ccc}
  T_0(x_{s_1})&  \cdots & T_{0}(x_{s_r})\\\vdots & \ddots &\vdots \\ 
   T_{k-1}(x_{s_1})&\cdots& T_{k-1}(x_{s_r})\end{array}\hspace{-2mm}\right).
\end{align}

Also, for the case where the orthogonal polynomials are the ``orthonormal'' Chebyshev polynomials $\frac{1}{\sqrt{2}}T_0, T_1, \cdots, T_{k-1}$, we define the $k\times n$ matrix $\tilde{\mathbf{G}}^{(k,n)}(\mathbf{x})$ as:
\begin{align}\label{eq:GT}
  \tilde{\mathbf{G}}^{(k,n)}(\mathbf{x})= \left(\hspace{-2mm}\begin{array}{ccc} T_0(x_1)/\sqrt{2}&  \cdots & T_{0}(x_n)/\sqrt{2} \\ 
   T_1(x_1)&  \cdots & T_{1}(x_n)\\\vdots & \ddots &\vdots \\ 
   T_{k-1}(x_1)&\cdots& T_{k-1}(x_n)\end{array}\hspace{-2mm}\right),
\end{align}
and we denote by  $\tilde{\mathbf{G}}^{(k,n)}_\mathcal{S}(\mathbf{x})$  the sub-matrix of  $\tilde{\mathbf{G}}^{(k,n)}(\mathbf{x})$ formed by concatenating columns with indices in $\mathcal{S}$, i.e.,    
\begin{align}
  \tilde{\mathbf{G}}_\mathcal{S}^{(k,n)}(\mathbf{x})= \left(\hspace{-2mm}\begin{array}{ccc} T_0(x_{s_1})/\sqrt{2}&  \cdots & T_{0}(x_{s_r})/\sqrt{2}\\
  T_1(x_{s_1})&  \cdots & T_{1}(x_{s_r})\\\vdots & \ddots &\vdots \\ 
   T_{k-1}(x_{s_1})&\cdots& T_{k-1}(x_{s_r})\end{array}\hspace{-2mm}\right).
\end{align}
Wherever there is no ambiguity on  $\mathbf{x}$, it may be dropped from the notation.


In the next section, we show that orthonormal polynomials can be used for designing codes for the distributed  large scale matrix multiplication problem. 

\section{\fbl{OrthoMatDot:} Orthonormal Polynomials based Codes for Distributed Matrix Multiplication}
\label{sec:ortho}

In this section,  we present a new orthonormal polynomials based class of codes for  matrix-multiplication \fbl{called OrthoMatDot}. These codes  achieve the same  recovery threshold as MatDot Codes, and have similar computational complexity as MatDot. The main advantage of the proposed codes is that they avoid dealing with the ill-conditioned monomial basis used in previous work (e.g., in \cite{allerton17,polynomialcodes,genPolyDot,entPoly}). In Section \ref{sec:Chebyshev}, \fbl{OrthoMatDot} Codes will be specialized and demonstrated to have higher numerical stability as compared with state of the art.  We begin with a formal problem formulation in Section \ref{sec:sysprob}, and describe our codes in Section \ref{sec:orthosub}. 

\subsection{System Model and Problem Formulation}\label{sec:sysprob}
\subsubsection{System Model}\label{sec:sysmod}
\begin{figure}[t]
    \centering
    \includegraphics[scale=0.450]{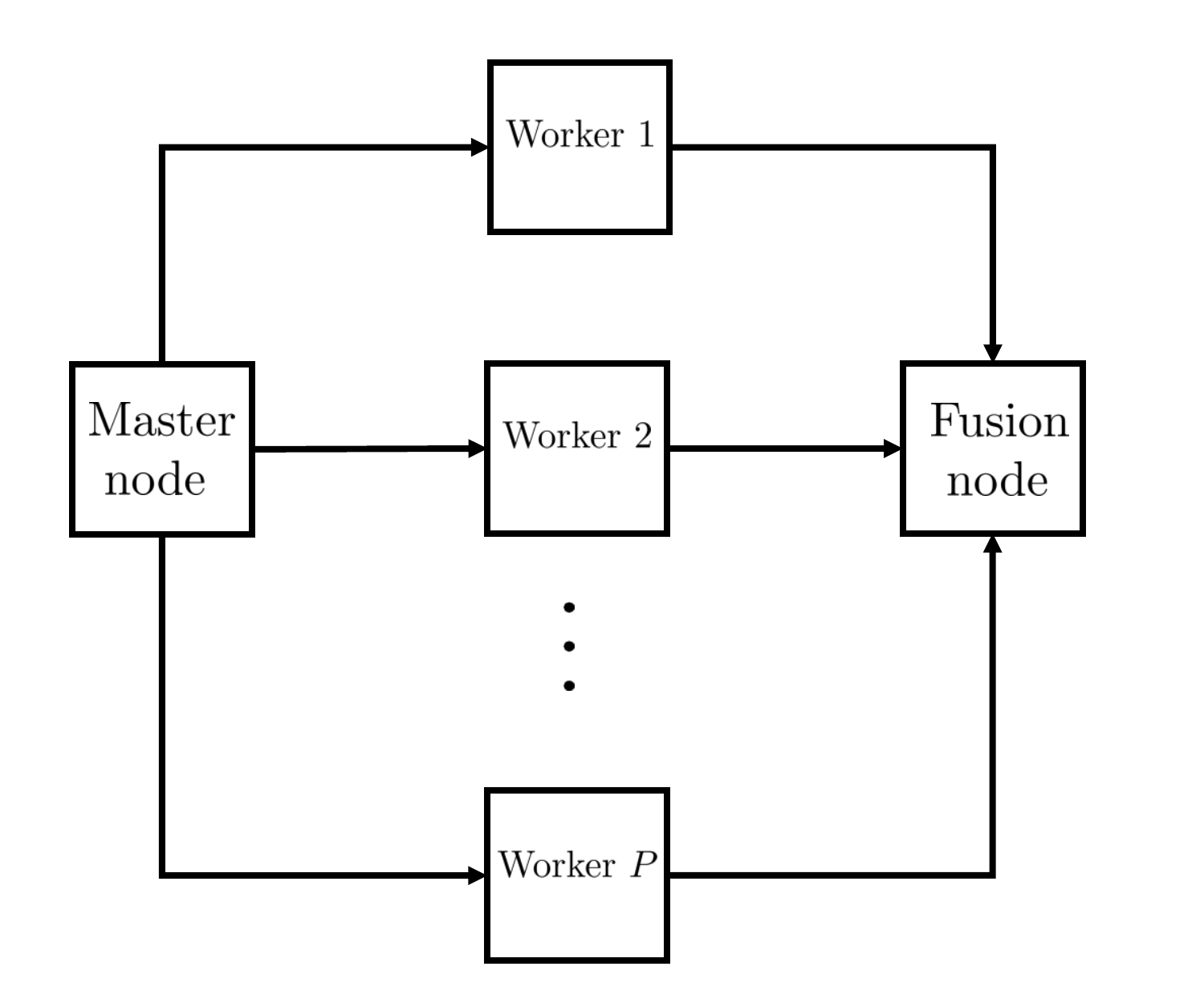}
    \caption{The distributed system framework}
    \label{fig:Model}
\end{figure}
We consider the distributed framework depicted in Fig. \ref{fig:Model} that consists of a master node, $P$ worker nodes, and a fusion node where the only communication allowed  is from the master node to the different worker nodes and from the worker nodes to the fusion node. It can happen that the fusion node and the master node be represented by the same node. In this case, the only communication allowed is the communication between the master node and every worker node. 

\subsubsection{Problem Formulation}\label{sec:probfor}
The master node possesses two real-valued input matrices $\mathbf{A}$, $\mathbf{B}$ with dimensions $N_1\times N_2$, $N_2\times N_3$, respectively. Every worker node receives from the master node an encoded matrix of $\mathbf{A}$ of dimension $N_1 \times N_2/m$ and  an encoded matrix of $\mathbf{B}$ of dimension $N_2/m \times N_3$, and performs matrix multiplication of these two received inputs. Upon performing the matrix multiplication, each worker node sends the result to the fusion node. 
The fusion node needs to recover the matrix multiplication $\mathbf{A}\mathbf{B}$ once it receives the results of any $K$ worker nodes, where $K \leq P$. In this case, $K$ is denoted by \emph{the recovery threshold} of the distributed computing scheme. 

\subsection{\fbl{OrthoMatDot} Code Construction}\label{sec:orthosub}
Our result regarding the existence of achievable codes solving the distributed matrix multiplication problem using orthonormal polynomials is stated in the following theorem. 
\vspace{0.0mm}

\begin{theorem}
\label{thm:orthpoly}
For the matrix multiplication problem described in Section \ref{sec:probfor} computed on the system defined in Section \ref{sec:sysmod},  a recovery threshold of 
$2m-1$ is achievable using any set of orthonormal polynomials $\{q_i\}_{i\geq 0}$  relative to some weight polynomial $w$ and defined on a range $[a,b]$.
\end{theorem}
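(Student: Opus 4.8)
The plan is to generalize Example 2 to arbitrary $m$. First I would partition the matrices into $m$ block-columns and $m$ block-rows respectively, writing $\mathbf{A} = \begin{bmatrix} \mathbf{A}_1 & \cdots & \mathbf{A}_m \end{bmatrix}$ with each $\mathbf{A}_i$ of size $N_1 \times N_2/m$, and $\mathbf{B} = \begin{bmatrix} \mathbf{B}_1^T & \cdots & \mathbf{B}_m^T \end{bmatrix}^T$ with each $\mathbf{B}_j$ of size $N_2/m \times N_3$, so that $\mathbf{A}\mathbf{B} = \sum_{i=1}^m \mathbf{A}_i \mathbf{B}_i$. Given orthonormal polynomials $\{q_i\}_{i \geq 0}$ relative to weight $w$ on $[a,b]$, I would define the encoding polynomials $p_{\mathbf{A}}(x) = \sum_{i=1}^m \mathbf{A}_i\, q_{i-1}(x)$ and $p_{\mathbf{B}}(x) = \sum_{j=1}^m \mathbf{B}_j\, q_{j-1}(x)$, and have worker node $\ell$ compute $p_{\mathbf{A}}(x_\ell) p_{\mathbf{B}}(x_\ell)$ for distinct evaluation points $x_1, \ldots, x_P \in [a,b]$.

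Next I would verify the key algebraic identity: $p_{\mathbf{A}}(x) p_{\mathbf{B}}(x) = \sum_{i=1}^m \sum_{j=1}^m \mathbf{A}_i \mathbf{B}_j\, q_{i-1}(x) q_{j-1}(x)$, so that by the orthonormality relation $\langle q_{i-1}, q_{j-1} \rangle = \int_a^b q_{i-1}(x) q_{j-1}(x) w(x)\, dx = \delta_{ij}$, integrating against $w$ yields $\int_a^b p_{\mathbf{A}}(x) p_{\mathbf{B}}(x) w(x)\, dx = \sum_{i=1}^m \mathbf{A}_i \mathbf{B}_i = \mathbf{A}\mathbf{B}$. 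Then I would argue decodability: the product $p_{\mathbf{A}}(x) p_{\mathbf{B}}(x)$ is a polynomial of degree at most $2m-2$ (since $\deg q_{m-1} = m-1$), hence has $2m-1$ coefficients when expanded in the basis $\{q_0, q_1, \ldots, q_{2m-2}\}$ — which is a legitimate basis for the space of polynomials of degree $\leq 2m-2$ by Remark \ref{rmk:orth}. Given the evaluations at any $2m-1$ distinct points $x_{\ell_1}, \ldots, x_{\ell_{2m-1}}$, the coefficient vector is recovered by inverting the corresponding $(2m-1) \times (2m-1)$ generalized Vandermonde matrix $\mathbf{Q}^{(2m-1, 2m-1)}$, which is nonsingular because a nonzero polynomial of degree $\leq 2m-2$ cannot vanish at $2m-1$ distinct points. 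Once the coefficients are known, $\mathbf{A}\mathbf{B}$ is obtained either directly as the expansion coefficients combination $\int_a^b p_{\mathbf{A}} p_{\mathbf{B}} w\, dx$ (which is just a fixed linear functional of the coefficients, since $\int_a^b q_k(x) w(x)\, dx$ is a known constant for each $k$) or, more elegantly, via Gauss quadrature (Theorem \ref{thm:GQ}) applied at the roots of $q_m$, which evaluates the degree-$(2m-2)$ integrand exactly.

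I do not expect a serious obstacle here; this theorem is essentially a correctness/existence statement and the argument is a direct generalization of the worked example. The one point requiring mild care is confirming that the interpolation step genuinely recovers the right object: one should note that interpolating $p_{\mathbf{A}}(x) p_{\mathbf{B}}(x)$ from $2m-1$ evaluations of the \emph{product} is well-defined precisely because the product has degree $\leq 2m-2$, independent of how the factors were split, and that the decoding map (interpolate, then apply the integration functional) is the same for every choice of $2m-1$ surviving nodes. A secondary detail is handling the edge case $m=1$ (recovery threshold $1$, trivial) and making explicit that the weight $w$ being a polynomial, as stated in the theorem, is not actually needed for this construction — any non-negative integrable weight works, matching the generality claimed in the surrounding text — so I would simply carry the general $w$ through. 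Finally, I would remark that numerical stability is deliberately \emph{not} addressed by this theorem; it is established later (Section \ref{sec:Chebyshev}) by specializing $\{q_i\}$ to Chebyshev polynomials and choosing the $x_\ell$ carefully.
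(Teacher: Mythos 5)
Your proposal is correct and follows essentially the same route as the paper: encode with the first $m$ orthonormal polynomials, interpolate the degree-$(2m-2)$ product $p_{\mathbf{A}}(x)p_{\mathbf{B}}(x)$ from any $2m-1$ evaluations at distinct points, and recover $\mathbf{A}\mathbf{B}$ via the orthonormality identity $\int_a^b p_{\mathbf{A}}(x)p_{\mathbf{B}}(x)w(x)\,dx=\mathbf{A}\mathbf{B}$ evaluated exactly by Gauss quadrature at the $m$ roots of $q_m$ (this is precisely the paper's Claim~\ref{cl:orthAB}). Your side observations---that one could instead apply the known linear functional $\int_a^b q_k w$ to the coefficients directly, and that $w$ need only be a non-negative integrable weight rather than literally a polynomial---are accurate but do not alter the argument.
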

\vspace{2.0mm}

Before proving this theorem, we first present \fbl{OrthoMatDot}, a code construction that achieves the recovery threshold of $2m-1$ given any set $\{q_i\}_{i\geq 0}$ of orthonormal polynomials relative to a weight polynomial $w(x)$ and defined on a range $[a,b]$. In our code construction, we assume that matrix $\mathbf{A}$  is split vertically into $m$ equal sub-matrices, of dimension $N_1\times N_2/m$ each, and matrix $\mathbf{B}$ is split horizontally into $m$ equal sub-matrices, of dimension $N_2/m \times N_3 $ each, as follows:
\begin{equation}\label{eq:spltA}
\mathbf{A} = \left(\mathbf{A}_0 \ \mathbf{A}_1 \ \ldots \ \mathbf{A}_{m-1}\right),\;\;\; \mathbf{B}=\left(\begin{array}c \mathbf{B}_0\\\mathbf{B}_1\\\vdots \\\mathbf{B}_{m-1}\end{array}\right),
\end{equation}
we also define a set of $P$ distinct real numbers $x_1, \cdots, x_P$ in the range $[a,b]$, and define two encoding polynomials $p_\mathbf{A}(x)=\sum_{i=0}^{m-1} \mathbf{A}_i q_{i}(x)$ and $p_\mathbf{B}(x)=\sum_{i=0}^{m-1} \mathbf{B}_i q_{i}(x),$ and let $p_\mathbf{C}(x)=p_{\mathbf{A}}(x)p_\mathbf{B}(x)$. 

In the following, we briefly describe the \fbl{OrthoMatDot}  construction. First, for every $r \in[P]$, the master node sends to the $r$-th worker node evaluations of $p_\mathbf{A}(x),p_\mathbf{B}(x)$ at $x=x_r$, that is, it sends $p_\mathbf{A}(x_r)$ and $p_\mathbf{B}(x_r)$ to the $r$-th worker node. Next,  for every $r \in [P]$, the $r$-th worker node computes the matrix product $p_{\mathbf{C}}(x_r)=p_\mathbf{A}(x_r) p_\mathbf{B}(x_r)$ and sends the result to the fusion node. Once the fusion node receives the output of any $2m-1$ worker nodes, it interpolates the polynomial $p_{\mathbf{C}}(x)=p_{\mathbf{A}}(x)p_{\mathbf{B}}(x)$, and evaluates  $p_{\mathbf{C}}(x)$ at $\eta_1, \cdots, \eta_{m}$, where $\eta_1, \cdots, \eta_m$ are the roots of $q_m$. Then, it performs the summation $\sum_{r=1}^{m} a_r\hspace{0.5mm} p_{\mathbf{C}}(\eta_r)$,  where $a_1, \cdots, a_m$ are as in (\ref{eq:ais}). 

We formally present \fbl{OrthoMatDot code} in Construction \ref{con:orthpoly}. Construction \ref{con:orthpoly} uses the following notation. The output of the algorithm is the  $N_1\times N_3$ matrix $\hat{\mathbf{C}}.$ The $(i,j)$-th entries of  the matrix polynomial $p_{\mathbf{C}}(x)$ and the matrix $\hat{\mathbf{C}}$ are respectively denoted as  $p^{(i,j)}_{\mathbf{C}}(x)$ and $\hat{C}(i,j).$
The reader may also recall the definition of matrices $\mathbf{Q}^{(2m-1,P)}(\mathbf{x})$ and $\mathbf{Q}_{\mathcal{R}}^{(2m-1,P)}(\mathbf{x}),$ for any subset $\mathcal{R}=\{{r_1}, \cdots, {r_{2m-1}}\}  \subset [P]$. $\boldsymbol{\eta}=(\eta_1, \cdots, \eta_m)$ is the vector of the roots of $q_m$. Based on  Construction \ref{con:orthpoly}, we state the following claim.

\begin{algorithm}
\caption{\fbl{OrthoMatDot}: \textbf{Inputs}: $\mathbf{A}, \mathbf{B}$,~~\textbf{Output}: $\hat{\mathbf{C}}$} \label{con:orthpoly}
\begin{algorithmic}[1]
\Procedure{MasterNode}{$\mathbf{A},\mathbf{B}$}\Comment{The master node's procedure}
\State $r \gets 1$
\While{$r\not=P+1$}
\State $p_\mathbf{A}(x_r) \gets \sum_{i=0}^{m-1} \mathbf{A}_i q_{i}(x_r)$ 
\State $p_\mathbf{B}(x_r) \gets \sum_{i=0}^{m-1} \mathbf{B}_i q_{i}(x_r)$
\State \textbf{send} $p_{\mathbf{A}}(x_r), p_{\mathbf{B}}(x_r)$ \textbf{to worker node} $r$
\State $r \gets r+1$
\EndWhile
\EndProcedure
\State
\Procedure{WorkerNode}{$p_{\mathbf{A}}(x_r), p_{\mathbf{B}}(x_r)$}\Comment{The procedure of worker node $r$}
\State $p_\mathbf{C}(x_r) \gets p_{\mathbf{A}}(x_r) p_{\mathbf{B}}(x_r)$
\State \textbf{send} $p_{\mathbf{C}}(x_r)$ \textbf{to the fusion node}
\EndProcedure
\State
\Procedure{FusionNode}{$\{p_{\mathbf{C}}(x_{r_1}), \cdots, p_{\mathbf{C}}(x_{r_{2m-1}})\}$}\Comment{The fusion node's procedure, $r_i$'s are distinct}
\State  $\mathbf{Q}_{\operatorname{inv}}\gets\left(\mathbf{Q}_{\mathcal{R}}^{(2m-1,P)}\right)^{-1}$
\For{$i\in [N_1]$}
\For{$j\in [N_3]$}
\State $(c^{(i,j)}_0, \cdots, c^{(i,j)}_{2m-2}) \gets (p_{\mathbf{C}}^{(i,j)}(x_{r_1}), \cdots,p_{\mathbf{C}}^{(i,j)}(x_{r_{2m-1}}))\mathbf{Q}_{\operatorname{inv}}$
\State $(p_\mathbf{C}^{(i,j)}(\eta_1), \cdots, p_{\mathbf{C}}^{(i,j)}(\eta_m))\gets (c^{(i,j)}_0, \cdots, c^{(i,j)}_{2m-2}) \mathbf{Q}^{(2m-1,m)}(\boldsymbol{\eta})$ 
\State $\hat{C}(i,j)\gets (p_\mathbf{C}^{(i,j)}(\eta_1), \cdots, p_{\mathbf{C}}^{(i,j)}(\eta_m)) (a_1, \cdots, a_m)^T $\Comment{$a_i$'s are as defined in (\ref{eq:ais})}
\EndFor
\EndFor
\State \textbf{return} $\hat{\mathbf{C}}$
\EndProcedure
\end{algorithmic}
\end{algorithm}
\begin{claim}\label{cl:orthAB}
 $\mathbf{A}\mathbf{B}=\sum_{r=1}^{m} a_r\hspace{0.5mm} p_{\mathbf{C}}(\eta_r).$
\end{claim}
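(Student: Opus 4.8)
The plan is to unwind the definitions and reduce the claim to the defining orthonormality property~\eqref{eq:ortho} of the $q_i$ combined with Gauss Quadrature (Theorem~\ref{thm:GQ}). First I would write out $p_\mathbf{A}(x)p_\mathbf{B}(x) = \left(\sum_{i=0}^{m-1}\mathbf{A}_i q_i(x)\right)\left(\sum_{j=0}^{m-1}\mathbf{B}_j q_j(x)\right) = \sum_{i=0}^{m-1}\sum_{j=0}^{m-1}\mathbf{A}_i\mathbf{B}_j\, q_i(x)q_j(x)$, which is a matrix-valued polynomial of degree at most $2m-2$. Integrating against the weight $w$ over $[a,b]$ and using linearity of the integral together with $\langle q_i,q_j\rangle = \int_a^b q_i(x)q_j(x)w(x)\,dx = \delta_{ij}$ gives $\int_a^b p_\mathbf{A}(x)p_\mathbf{B}(x) w(x)\,dx = \sum_{i=0}^{m-1}\mathbf{A}_i\mathbf{B}_i = \mathbf{A}\mathbf{B}$, where the last equality is just block multiplication of the partition in~\eqref{eq:spltA}.

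Next I would invoke Theorem~\ref{thm:GQ} with $n=m$: since $\eta_1,\dots,\eta_m$ are the roots of $q_m$ and $a_1,\dots,a_m$ are chosen (as in~\eqref{eq:ais}) so that $\sum_{r=1}^m a_r f(\eta_r) = \int_a^b f(x)w(x)\,dx$ for every polynomial $f$ of degree less than $m$, Gauss Quadrature upgrades this to exactness for every polynomial of degree less than $2m$. The polynomial $p_\mathbf{C}(x) = p_\mathbf{A}(x)p_\mathbf{B}(x)$ has degree at most $2m-2 < 2m$, so applying the quadrature rule entrywise to $p_\mathbf{C}$ yields $\sum_{r=1}^m a_r\, p_\mathbf{C}(\eta_r) = \int_a^b p_\mathbf{C}(x)w(x)\,dx$. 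Combining this with the identity from the previous step gives $\sum_{r=1}^m a_r\, p_\mathbf{C}(\eta_r) = \mathbf{A}\mathbf{B}$, which is exactly the claim.

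The argument is essentially a two-line reduction, so there is no serious obstacle; the only points requiring a little care are (i) making sure the degree bookkeeping is right, i.e.\ that $\deg p_\mathbf{C} \le 2m-2$ strictly less than the $2m$ threshold in Theorem~\ref{thm:GQ}, and (ii) noting that all of these statements about scalar polynomials and quadrature apply entrywise to the matrix-valued polynomial $p_\mathbf{C}(x)$, since integration, evaluation, and the weighted sum $\sum_r a_r(\cdot)$ all act componentwise. I would also remark in passing that this is precisely why the fusion-node procedure in Construction~\ref{con:orthpoly} is correct: once $p_\mathbf{C}(x)$ is interpolated from any $2m-1$ evaluations (the polynomial has $2m-1$ coefficients in the basis $q_0,\dots,q_{2m-2}$, so $2m-1$ points suffice), evaluating at the $\eta_r$ and forming $\sum_r a_r p_\mathbf{C}(\eta_r)$ recovers $\mathbf{A}\mathbf{B}$ exactly in exact arithmetic.
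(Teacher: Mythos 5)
Your proof is correct and follows the same two-step route as the paper's Appendix~\ref{app:ortho}: first using orthonormality to show $\int_a^b p_\mathbf{A}(x)p_\mathbf{B}(x)w(x)\,dx=\mathbf{A}\mathbf{B}$, then applying Gauss Quadrature (Theorem~\ref{thm:GQ}) with $n=m$ to the degree-$(2m-2)$ polynomial $p_\mathbf{C}$. The extra remarks on entrywise applicability and degree bookkeeping are consistent with, and implicit in, the paper's argument.
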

The proof of Claim \ref{cl:orthAB} is provided in Appendix \ref{app:ortho}. 

Now, we can prove Theorem \ref{thm:orthpoly}.
\begin{proof}[Proof of Theorem \ref{thm:orthpoly}:]
In order to prove the theorem, it suffices to show that Construction \ref{con:orthpoly} is a valid construction with a recovery threshold of $2m-1$. Therefore, in the following, we prove that Construction \ref{con:orthpoly} can recover $\mathbf{A}\mathbf{B}$ after the fusion node receives the output of at most $2m-1$ worker nodes. 
Assume that the fusion node has already received the results of any  $2m-1$ worker nodes. Now, because the polynomial $p_{\mathbf{C}}(x)$ has degree $2m-2$,  the evaluations of $p_{\mathbf{C}}(x)$ at any $2m-1$ distinct points is sufficient to interpolate the polynomial, and since $x_1, \cdots, x_P$ are distinct, the fusion node can interpolate  $p_{\mathbf{C}}(x)$ once it receives  the output of any $2m-1$ worker nodes. Afterwards, given that $\mathbf{A}\mathbf{B}=\sum_{r=1}^{m} a_r\hspace{0.5mm} p_{\mathbf{C}}(\eta_r)$ (Claim \ref{cl:orthAB}), the fusion node can evaluate $p_{\mathbf{C}}(\eta_1), \cdots, p_{\mathbf{C}}(\eta_m)$ and perform the scaled summation $\sum_{r=1}^{m} a_r\hspace{0.5mm} p_{\mathbf{C}}(\eta_r)$  to recover $\mathbf{A}\mathbf{B}$.
\end{proof}
\vspace{02mm}
\begin{remark}
In Construction \ref{con:orthpoly}, setting $x_1, \cdots, x_m$ to be the roots of $q_m$ leads to a faster decoding for the scenarios in which the first $m$ worker nodes send their results but only  less than $2m-1$ workers succeed to send their outputs. For such scenarios, we have $\sum_{r=1}^{m} a_r\hspace{0.5mm} p_{\mathbf{C}}(x_r)=$ $\sum_{r=1}^{m} a_r\hspace{0.5mm} p_{\mathbf{C}}(\eta_r)=\mathbf{A}\mathbf{B}$, where the last equality follows from Claim \ref{cl:orthAB}.
\end{remark}

Next, we study the computational and communication costs of \fbl{OrthoMatDot}. 
 
\subsubsection{Complexity Analyses of \fbl{OrthoMatDot}}\label{sec:con1complexity}
\hspace{2mm}

\textbf{Encoding Complexity}: 
Encoding for each worker requires performing two additions, each adding $m$ scaled matrices of size $N_1 N_2/m$ and $N_2 N_3 /m$,  for an overall encoding complexity for each worker of $O( N_1 N_2+N_2N_3)$. Therefore, the overall computational complexity of encoding for $P$ workers is $O(N_1N_2P+N_2N_3P)$.

\textbf{Computational Cost per Worker}: Each worker multiplies two matrices of  dimensions $N_1\times N_2/m$ and $N_2/m\times N_3$, requiring $O(N_1N_2N_3/m)$ operations.

\textbf{Decoding Complexity}:
Since $p_\mathbf{C}(x)$ has degree $2m-2$, the interpolation of $p_\mathbf{C}(x)$ requires the inversion of a $2m-1 \times 2m-1$ matrix, with complexity $O(m^3)$, and performing $N_1N_3$ matrix-vector multiplications, each of them is between the inverted  matrix and a  column vector of length $2m-1$ of the received evaluations of the matrix polynomial $p_\mathbf{C}(x)$ at some position $(i,j) \in [N_1] \times [N_3]$, with complexity $O(N_1N_3m^2)$. Next, the evaluation of the polynomial $p_\mathbf{C}(x)$ at $\eta_1, \cdots, \eta_m$ requires a complexity of $O(N_1N_3m^2)$. Finally,  performing the summation $\sum_{r=1}^m a_r p_\mathbf{C}(\eta_r)$ requires a complexity of $O(N_1N_3m)$. Thus, assuming that $m \ll N_1,N_3$, the overall decoding complexity is $O(m^3+2N_1N_3m^2+N_1N_3m)=O(N_1N_3m^2)$.

\textbf{Communication Cost}:
The master node sends $O( N_1N_2P/m+N_2N_3P/m)$ symbols, and the fusion node receives $O(N_1 N_3 m)$ symbols from the successful worker nodes. 

\vspace{3mm}
 
\begin{remark}
With the reasonable assumption that the dimensions of the input matrices $\mathbf{A},\mathbf{B}$ are large enough such that $N_1,N_2,N_3 \gg m,P$, we can conclude that the encoding and decoding costs at the master and fusion nodes, respectively, are negligible compared to the computation cost at each worker node.
\end{remark}

\section{Numerically Stable Codes for Matrix Multiplication via \fbl{OrthoMatDot Codes with Chebyshev Polynomials}}
\label{sec:Chebyshev}

In this section, we specialize \fbl{OrthoMatDot} Codes by restricting the orthonormal polynomials to be Chebyshev polynomials of the first kind $\{T_i\}_{i\geq 0}$ with the evaluation points chosen to be the $P$-dimensional Chebyshev grid, i.e., $x_i=\rho_i^{(P)}, i\in [P]$.  Our \fbl{specialized  OrthoMatDot}, described in Construction \ref{con:chebpoly} in Section \ref{subsec:Chebyshev}, develops a decoding that involves inversion of a $2m-1 \times 2m-1$ sub-matrix of a $2m-1 \times P$ Chebyshev-Vandermonde matrix. One of the main technical results of this section (and paper), presented in Theorem \ref{thm:bound} in Section \ref{subsec:condbound},  is an upper bound to the worst case condition number over all possible $2m-1 \times 2m-1$ sub-matrices of the $2m-1 \times P$ Chebeshev-Vandermonde matrix for the case where the distinct evaluation points $x_1, \cdots, x_P$ are chosen as the Chebyshev points of degree $P$, i.e., $x_i=\rho_i^{(P)}, i\in [P]$. In fact, the derived bound shows that the worst case condition number grows at most polynomially in $P$ at a fixed number of straggler/parity worker nodes. This is in contrast with the monomial basis codes where the  condition number grows exponentially in $P$, even when there is no redundancy \cite{gautschi1987lower, gautschi1990stable, gautschi1974norm, reichel1991chebyshev}. We show through numerical experiments in Section \ref{sec:numerical} that our proposed codes provide significantly lower numerical errors as  compared to  MatDot Codes in \cite{allerton17}.

\subsection{Chebyshev Polynomials based \fbl{OrthoMatDot} Code Construction}
\label{subsec:Chebyshev}
Recalling from Example \ref{ex:cheb} that  $\frac{1}{\sqrt{2}} T_0, T_1, T_2, \cdots$ form an orthonormal polynomial set relative to the weight function $w(x)= \frac{2}{\pi \sqrt{1-x^2}}$, in Construction \ref{con:chebpoly}, we  explain the application of Chebyshev polynomials of the first kind to Construction \ref{con:orthpoly}. Note that, in Construction \ref{con:chebpoly}, we assume that the input matrices $\mathbf{A}$ and $\mathbf{B}$ are also split as in (\ref{eq:spltA}), and let $x_1, x_2, \ldots, x_P$ be  distinct real numbers  in the range $[-1,1]$, and define the encoding functions $p_\mathbf{A}(x), p_{\mathbf{B}}(x)$ as $p_\mathbf{A}(x)=\frac{1}{\sqrt{2}}\mathbf{A}_0 T_0(x)+\sum_{i=1}^{m-1} \mathbf{A}_i T_{i}(x)$ and $p_\mathbf{B}(x)=\frac{1}{\sqrt{2}}\mathbf{B}_0 T_0(x)+\sum_{i=1}^{m-1} \mathbf{B}_i T_{i}(x),$ and let $p_\mathbf{C}(x)=p_\mathbf{A}(x)p_\mathbf{B}(x)$.

The idea of our Chebyshev polynomials based \fbl{OrthoMatDot} code is as follows: First, for every $r \in [P]$, the master node sends to the $r$-th worker node  $p_\mathbf{A}(\rho^{(P)}_r)$ and $p_\mathbf{B}(\rho^{(P)}_r)$. Next, for every $r \in [P]$, the $r$-th worker node computes the matrix product  $p_\mathbf{C}(\rho^{(P)}_r)=p_\mathbf{A}(\rho^{(P)}_r)p_\mathbf{B}(\rho^{(P)}_r)$ and sends the result to the fusion node. Once the fusion node receives the output of any $2m-1$ worker nodes, it  interpolates $p_{\mathbf{C}}(x)$. Then,  it evaluates  $p_{\mathbf{C}}(x)$ at $\rho^{(m)}_1, \cdots, \rho^{(m)}_m,$ \vc{ where $\rho^{(m)}_i$'s are as defined in (\ref{eq:Chebnodes}),} and computes $\sum_{i=1}^{m} a_i\hspace{0.5mm} p_{\mathbf{C}}(\rho^{(m)}_i)$, \vc{where $a_i = {2}/{m}, i\in[m]$ based on 3) in Remark \ref{rmk:orth}.} 
 
A formal description of our Chebyshev polynomials based \fbl{OrthoMatDot} code is provided in Construction \ref{con:chebpoly}. \bl{Construction \ref{con:chebpoly} uses the following notation. We let the $(i,j)$-th entry of the matrix polynomial $p_\mathbf{C}(x)$ be denoted $p^{(i,j)}_{\mathbf{C}}(x)$ and written as $p^{(i,j)}_{\mathbf{C}}(x)=\frac{1}{\sqrt{2}} c_0^{(i,j)} T_{0}(x)+\sum_{l=1}^{2m-2} c_l^{(i,j)} T_{l}(x)$. Also, following the notation in Section \ref{sec:notation}, we define the   {Chebyshev-Vandermonde} matrices  $\tilde{\mathbf{G}}^{(2m-1,P)}(\boldsymbol{\rho}^{(P)}),$ 
and $\tilde{\mathbf{G}}_{\mathcal{R}}^{(2m-1,P)}(\boldsymbol{\rho}^{(P)})$, for any subset $\mathcal{R}=\{{r_1}, \cdots, {r_{2m-1}}\}  \subset [P]$, we also define the matrix $\tilde{\mathbf{G}}^{(2m-1,m)}(\boldsymbol{\rho}^{(m)})$.} Finally, we assume that our construction returns an $N_1\times N_3$ matrix $\hat{\mathbf{C}}$ representing the result of the product $\mathbf{A}\mathbf{B}$, where the $(i,j)$-th entry of $\hat{\mathbf{C}}$ is $\hat{C}(i,j)$.

\begin{algorithm}
\caption{Chebyshev Polynomials based \fbl{OrthoMatDot}: \textbf{Inputs}: $\mathbf{A}, \mathbf{B}$,~~\textbf{Output}: $\hat{\mathbf{C}}$} \label{con:chebpoly}
\begin{algorithmic}[1]
\Procedure{MasterNode}{$\mathbf{A},\mathbf{B}$}\Comment{The master node's procedure}
\State $r \gets 1$
\While{$r\not=P+1$}
\State $p_\mathbf{A}(\rho^{(P)}_r) \gets \frac{1}{\sqrt{2}}\mathbf{A}_0+ \sum_{i=1}^{m-1} \mathbf{A}_i T_{i}(\rho^{(P)}_r)$ 
\State $p_\mathbf{B}(\rho^{(P)}_r) \gets \frac{1}{\sqrt{2}}\mathbf{B}_0+\sum_{i=1}^{m-1} \mathbf{B}_i T_{i}(\rho^{(P)}_r)$
\State \textbf{send} $p_{\mathbf{A}}(\rho^{(P)}_r), p_{\mathbf{B}}(\rho^{(P)}_r)$ \textbf{to worker node} $r$
\State $r \gets r+1$
\EndWhile
\EndProcedure
\State
\Procedure{WorkerNode}{$p_{\mathbf{A}}(\rho^{(P)}_r), p_{\mathbf{B}}(\rho^{(P)}_r)$}\Comment{The procedure of worker node $r$}
\State $p_\mathbf{C}(\rho^{(P)}_r) \gets p_{\mathbf{A}}(\rho^{(P)}_r) p_{\mathbf{B}}(\rho^{(P)}_r)$
\State \textbf{send} $p_{\mathbf{C}}(\rho^{(P)}_r)$ \textbf{to the fusion node}
\EndProcedure
\State
\Procedure{FusionNode}{$\{p_{\mathbf{C}}(\rho^{(P)}_{r_1}), \cdots, p_{\mathbf{C}}(\rho^{(P)}_{r_{2m-1}})\}$}\Comment{The fusion node's procedure, $r_i$'s are distinct}
\State  $\mathbf{G}_{\operatorname{inv}}\gets\left(\tilde{\mathbf{G}}_{\mathcal{R}}^{(2m-1,P)}\right)^{-1}$
\For{$i\in [N_1]$}
\For{$j\in [N_3]$}
\State $(c^{(i,j)}_0, \cdots, c^{(i,j)}_{2m-2}) \gets (p_{\mathbf{C}}^{(i,j)}(\rho^{(P)}_{r_1}), \cdots,p_{\mathbf{C}}^{(i,j)}(\rho^{(P)}_{r_{2m-1}}))\mathbf{G}_{\operatorname{inv}}$
\State $(p_\mathbf{C}^{(i,j)}(\rho_1^{(m)}),\hspace{-1mm}\cdots\hspace{-1mm}, p_{\mathbf{C}}^{(i,j)}(\rho_m^{(m)}))\gets (c^{(i,j)}_0, \cdots, c^{(i,j)}_{2m-2}) \tilde{\mathbf{G}}^{(2m-1,m)}(\boldsymbol{\rho}^{(m)})$ 
\State $\hat{C}(i,j)\gets \frac{2}{m} (p_\mathbf{C}^{(i,j)}(\rho^{(m)}_1), \cdots, p_{\mathbf{C}}^{(i,j)}(\rho^{(m)}_m)) (1, \cdots, 1)^T $\Comment{$a_i$'s are all ${2}/{m}$}
\EndFor
\EndFor
\State \textbf{return} $\hat{\mathbf{C}}$
\EndProcedure
\end{algorithmic}
\end{algorithm}

\begin{figure*}[!t]
    \centering
    \includegraphics[scale=0.50]{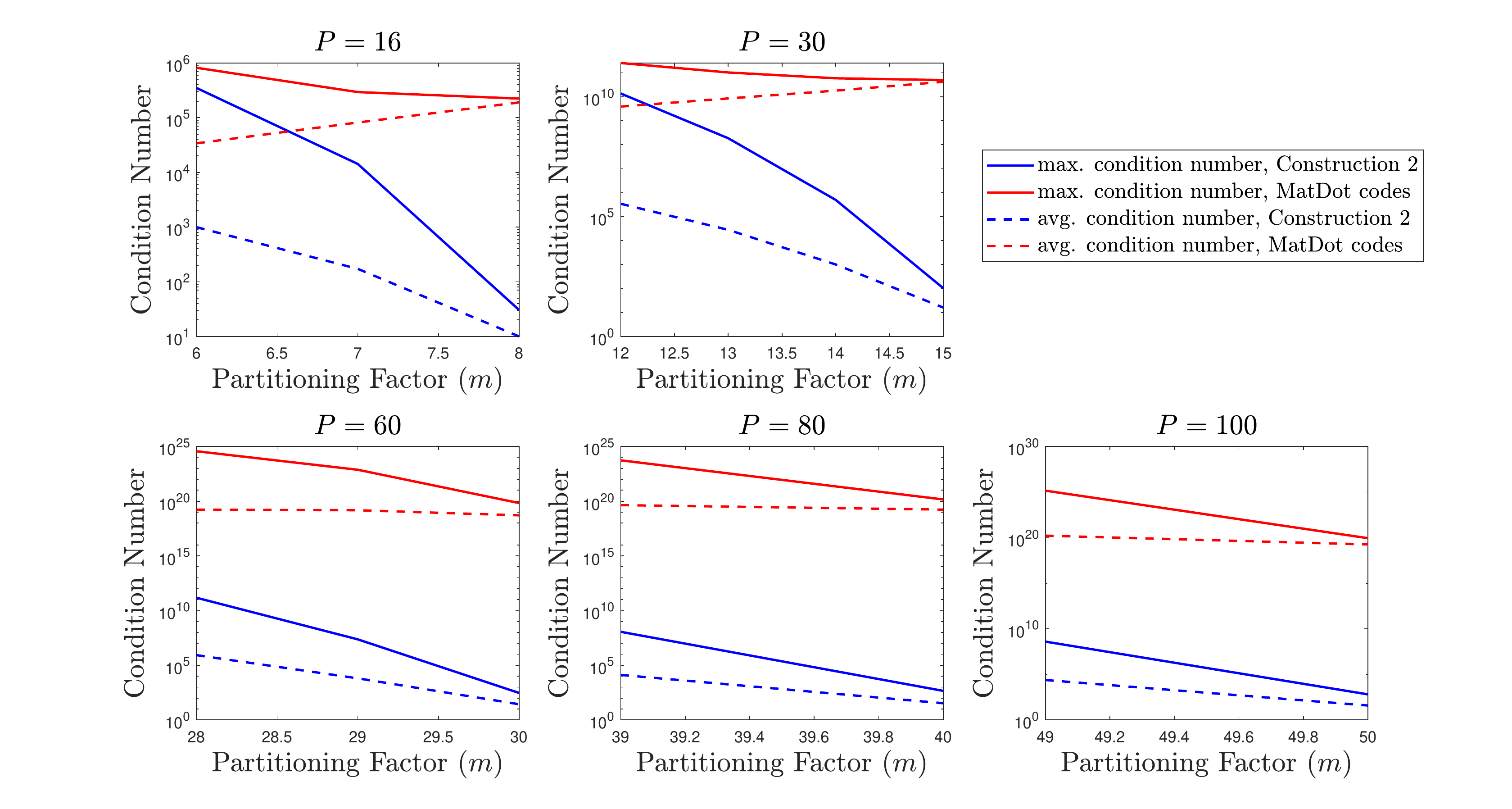}
    \caption{Comparison between the condition number of the interpolating matrix of the Chebyshev polynomials based \fbl{OrthoMatDot} Codes and MatDot Codes in five different distributed systems with $16,30,60,80$, and $100$ worker nodes, respectively.}
    \label{fig:Pall}
  \rule{\textwidth}{0.7pt}
\end{figure*}

\subsubsection{Complexity Analyses:}
The different encoding complexity, computational complexity per worker, decoding complexity and communication cost for \fbl{Chebyshev polynomials based OrthoMatDot are the same as their counterparts of OrthoMatDot} stated in Section \ref{sec:con1complexity}.

\subsection{Evaluation Points and Condition Number Bound}
\label{subsec:condbound}
\vc{When there is no redundancy, i.e., $n=2m-1,$ it is well known that the $n \times n$ decoding matrix $\mathbf{G}^{(n,n)}$ has condition number $n$ with the $\ell_2$ as well as the Frobenius norms \cite{gautschi1990stable}. Note the remarkable contrast with the Vandermonde matrix, whose condition number for real-valued evaluation points grows exponentially in $n$, no matter how the nodes are chosen \cite{gautschi1987lower, gautschi1990stable}. Our problem differs from the standard problem in numerical methods, since we have to choose a rectangular ``generator'' matrix where every square sub-matrix is well-conditioned. In particular,  even for Chebyshev-Vandermonde matrix, if the evaluation points are not chosen carefully, they are poorly conditioned \cite{reichel1991chebyshev} (also see Fig. \ref{fig:fixpar}). Here, we show that choosing $x_i=\rho_i^{(n)}$ leads to a well-conditioned system with $s$ redundant nodes.}
\vc{Our goal is to choose vector $\mathbf{x}$ such that $\kappa^{max}(\mathbf{G}^{(n-s,n)}(\mathbf{x}))$ is sufficiently small, where $\kappa^{max}(\mathbf{G}^{(n-s,n)}(\mathbf{x}))$ denotes the worst case condition number over all possible $n-s \times n-s$ sub-matrices of $\mathbf{G}^{(n-s,n)}(\mathbf{x})$.}

 {\begin{theorem}\label{thm:bound}
For any $s \in [n-1]$, $$\kappa^{max}_F(\mathbf{G}^{(n-s,n)}(\boldsymbol{\rho}^{(n)})) = O\left( (n-s)\sqrt{ns(n-s)}\left({2}n^2\right)^{s-1}\right),$$
where $\kappa^{max}_F$ denotes the worst case condition number over all possible $n-s \times n-s$ sub-matrices of $\mathbf{G}^{(n-s,n)}(\mathbf{x})$ with respect to the Frobenius norm,  $\boldsymbol{\rho}^{(n)}=(\rho_1^{(n)},\rho_2^{(n)},\ldots,\rho_n^{(n)})$ are the roots of the Chebyshev polynomial $T_n$, i.e., $\rho^{(n)}_i =  \cos\left(\frac{2i-1}{2n} \pi\right), i\in [n]$.
\end{theorem}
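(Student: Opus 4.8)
The plan is to bound, for every choice of $n-s$ columns, the condition number of the resulting square submatrix by controlling its Frobenius norm (easy) and the Frobenius norm of its inverse (the substance), the latter through an estimate on the Lagrange basis polynomials of a subset of the Chebyshev grid. Fix $k=n-s$ and $S\subseteq[n]$ with $|S|=k$, and write $\mathbf{G}_S:=\mathbf{G}_S^{(k,n)}(\boldsymbol{\rho}^{(n)})$, so $\kappa_F(\mathbf{G}_S)=\|\mathbf{G}_S\|_F\,\|\mathbf{G}_S^{-1}\|_F$. Since every entry of $\mathbf{G}_S$ equals $T_i(\rho_j^{(n)})$ with $|T_i(x)|\le1$ on $[-1,1]$, we get $\|\mathbf{G}_S\|_F^2\le k^2$, i.e.\ $\|\mathbf{G}_S\|_F\le n-s$.

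For $\|\mathbf{G}_S^{-1}\|_F$, I would first identify the rows of $\mathbf{G}_S^{-1}$. The transpose $\mathbf{G}_S^{T}$ is the linear map carrying the Chebyshev-coefficient vector $(d_0,\dots,d_{k-1})$ of a polynomial $\sum_i d_iT_i$ of degree $\le k-1$ to its vector of values at $\{\rho_j^{(n)}:j\in S\}$; since these $k$ points are distinct this map is invertible, and the preimage of the indicator vector $e_j$ is the Chebyshev-coefficient vector of the Lagrange basis polynomial $L_j$ (degree $k-1$, value $1$ at $\rho_j^{(n)}$ and $0$ at the other selected points). Hence row $j$ of $\mathbf{G}_S^{-1}$ is exactly the coefficient vector of $L_j$, and since $\tfrac{1}{\sqrt2}T_0,T_1,T_2,\dots$ are orthonormal for $w(x)=\tfrac{2}{\pi\sqrt{1-x^2}}$, Parseval gives $\|(\text{row }j)\|_2\le\|L_j\|_{L^2(w)}$, which (using $\int_{-1}^1 w=2$) is $\le\sqrt2\,\|L_j\|_\infty$. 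Therefore $\|\mathbf{G}_S^{-1}\|_F^2\le 2\sum_{j\in S}\|L_j\|_\infty^2$, and the theorem reduces to a uniform bound of the form $\|L_j\|_\infty=O\big(\mathrm{poly}(n,s)\,(2n^2)^{s-1}\big)$ for Lagrange polynomials on an arbitrary $(n-s)$-element subset of the $n$-point Chebyshev grid; substituting it yields $\|\mathbf{G}_S^{-1}\|_F=O\big(\sqrt{ns(n-s)}\,(2n^2)^{s-1}\big)$ once the polynomial factor is tracked, and multiplying by $\|\mathbf{G}_S\|_F\le n-s$ produces the asserted $\kappa_F^{\max}=O\big((n-s)\sqrt{ns(n-s)}\,(2n^2)^{s-1}\big)$.

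The heart of the argument is the estimate on $\|L_j\|_\infty$. Let $R=[n]\setminus S$ be the $s$ discarded indices and $\omega_R(x)=\prod_{p\in R}(x-\rho_p^{(n)})$; from $\prod_{p=1}^n(x-\rho_p^{(n)})=2^{-(n-1)}T_n(x)$ one obtains
$$L_j(x)=\frac{T_n(x)}{(x-\rho_j^{(n)})\,\omega_R(x)}\cdot\frac{\omega_R(\rho_j^{(n)})}{T_n'(\rho_j^{(n)})},$$
an identity in which the $s+1$ apparent poles of the first factor are cancelled by zeros of $T_n$. I would expand that first factor by partial fractions, $\tfrac{T_n(x)}{\prod_{q\in R\cup\{j\}}(x-\rho_q^{(n)})}=\sum_{q\in R\cup\{j\}}\tfrac{c_q\,T_n(x)}{x-\rho_q^{(n)}}$ with $c_q=\big(\prod_{q'\ne q}(\rho_q^{(n)}-\rho_{q'}^{(n)})\big)^{-1}$, and then invoke three facts about Chebyshev points: (i) the single-factor bound $\big|\tfrac{T_n(x)}{x-\rho_q^{(n)}}\big|\le 2n^2$ for every $x\in[-1,1]$ and every Chebyshev point, obtained from the product-to-sum identity $\tfrac{T_n(x)}{x-\rho_q^{(n)}}=\tfrac{\sin\frac{n(\theta+\phi_q)}{2}}{\sin\frac{\theta+\phi_q}{2}}\cdot\tfrac{\sin\frac{n(\theta-\phi_q)}{2}}{\sin\frac{\theta-\phi_q}{2}}$ (writing $x=\cos\theta,\ \rho_q^{(n)}=\cos\phi_q$) together with $|U_{n-1}|\le n$ on $[-1,1]$; (ii) $|T_n'(\rho_q^{(n)})|=n/\sqrt{1-(\rho_q^{(n)})^2}\in[n,n^2]$; and (iii) the minimum-spacing bound $|\rho_a^{(n)}-\rho_b^{(n)}|\ge c\,n^{-2}$ for $a\ne b$. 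The $q=j$ term of the expansion reproduces the full-grid Lagrange polynomial, uniformly bounded by the Chebyshev Lebesgue constant $O(\log n)$; in each of the other $s$ terms the prefactor $\omega_R(\rho_j^{(n)})/T_n'(\rho_j^{(n)})$ partially cancels the near-singular denominator of $c_q$, leaving a bounded product of at most $s-1$ ratios of inter-point distances that (iii) controls, so that each discarded point contributes a factor $O(n^2)$, yielding the factor $(2n^2)^{s-1}$ (one factor being spent in the pairing against $T_n$ via (i)).

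The main obstacle is exactly this last step. Because $L_j$ is a genuine polynomial and all of its denominator singularities cancel, it cannot be bounded factor by factor on $[-1,1]$, and the naive product bound $\prod_{p\in S\setminus\{j\}}|x-\rho_p^{(n)}|\le 2^{\,n-s-1}$ is exponentially lossy; recovering a polynomial-in-$n$ bound forces the partial-fraction reorganization above (equivalently, an inclusion--exclusion over the discarded points) so that the single $|T_n|\le1$ factor and estimate (i) can be spent against the dangerous near-singular denominators, while the accumulated distance ratios are tamed only by the sharp spacing estimate (iii). Keeping the non-asymptotic constants in (i)--(iii), and accounting for exactly how many $n^2$ factors survive the cancellations, is the delicate part, and it is the portion of the argument that genuinely uses numerical approximation theory rather than elementary linear algebra. (An equivalent route to the reduction above bounds each Chebyshev coefficient of $L_j$ directly by $2\|L_j\|_\infty$ via $d_i=\tfrac2\pi\int_{-1}^1 L_j(x)T_i(x)(1-x^2)^{-1/2}\,dx$, leading to the same bottleneck.)
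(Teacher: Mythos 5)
Your setup mirrors the paper's through the identification of row $j$ of $\mathbf{G}_S^{-1}$ with the Chebyshev-coefficient vector of the Lagrange polynomial $L_j$, the Parseval-type reduction of $\|\mathbf{G}_S^{-1}\|_F^2$ to a quantity controlled by $\sum_j\|L_j\|_{L^2(w)}^2$, and the trivial bound $\|\mathbf{G}_S\|_F\le n-s$. The paper then diverges by a single crucial observation that you are missing: since $L_j^2$ has degree $2(n-s-1)<2n$, Gauss quadrature at the full $n$-point Chebyshev grid (all quadrature weights equal to $\pi/n$) gives $\|L_j\|_{L^2(w)}^2=\frac{\pi}{n}\sum_{i=1}^n L_j^2(\rho_i^{(n)})$ \emph{exactly}. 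As $L_j$ vanishes at the $n-s-1$ other retained grid points and equals $1$ at $\rho_{g_j}^{(n)}$, only the $s$ discarded points survive, and each value $L_j(\rho_i^{(n)})$, $i\in\mathcal{S}$, collapses to a single ratio of products $-\frac{T_n'(\rho_i)}{T_n'(\rho_{g_j})}\prod_{p\in\mathcal{S}\setminus\{i\}}\frac{\rho_{g_j}-\rho_p}{\rho_i-\rho_p}$, which the paper bounds directly using $\prod_{j\ne i}(\rho_i-\rho_j)=T_n'(\rho_i)/2^{n-1}$ (your fact (ii)) and the minimum-spacing estimate (your fact (iii)). No sup-norm estimate, no $\bigl|T_n(x)/(x-\rho_q)\bigr|\le2n^2$, and no partial-fraction decomposition are needed.

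Your route, which replaces the quadrature identity by $\|L_j\|_{L^2(w)}\le\sqrt2\,\|L_j\|_\infty$ and then bounds $\|L_j\|_\infty$ via partial fractions, is a genuine alternative but provably yields a weaker polynomial prefactor than the theorem asserts. Even with an optimal sup-norm estimate $\|L_j\|_\infty=O\bigl(n(2n^2)^{s-1}\bigr)$, replacing $\frac{\pi}{n}\sum_i L_j^2(\rho_i^{(n)})$ (which is of order $\frac{s}{n}\|L_j\|_\infty^2$, since only $s$ summands are large) by $2\|L_j\|_\infty^2$ overestimates by a factor $n/s$; the triangle inequality over the $s$ partial-fraction terms, as you wrote it, adds a further factor $s^2$ inside the square. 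Together these inflate $\|\mathbf{G}_S^{-1}\|_F$ by $\sqrt{sn}$ relative to the paper's estimate, giving $\kappa_F^{\max}=O\bigl((n-s)^{3/2}\,sn\,(2n^2)^{s-1}\bigr)$ rather than the stated $O\bigl((n-s)^{3/2}\sqrt{ns}\,(2n^2)^{s-1}\bigr)$. You also concede that the non-asymptotic bookkeeping in the partial-fraction estimate is not carried out. The repair is exactly the step you skipped: do not pass to the sup norm at all; evaluate the weighted $L^2$ norm of $L_j$ by Gauss quadrature at the $n$-point grid. Then only the discarded grid-point values $L_j(\rho_i^{(n)})$ arise, and in your own partial-fraction expansion evaluated at $x=\rho_i^{(n)}$ the factor $T_n(\rho_i^{(n)})=0$ kills every term but $q=i$, so the sum degenerates to one term and the triangle-inequality loss disappears.
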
}
Since $||.||_2 \leq ||.||_F,$ the above bound applies to the standard $\ell_2$ matrix norm as well. The proof uses techniques from numerical methods, and is provided in Appendix \ref{app:proofgen}.
\begin{remark}
Although the bound in Theorem \ref{thm:bound} is derived for $\mathbf{G}^{(n-s,n)}(\boldsymbol{\rho}^{(n)})$, the theorem also applies for $\tilde{\mathbf{G}}^{(n-s,n)}(\boldsymbol{\rho}^{(n)})$. This is because it can be shown using simple matrix operations that for any $\tilde{\mathbf{G}}^{(n-s,n)}_{\mathcal{R}}$, for a subset $\mathcal{R}\subset [n]$ such that $|\mathcal{R}|=n-s$,
$\kappa_F(\tilde{\mathbf{G}}^{(n-s,n)}_{\mathcal{R}})<\sqrt{2}\hspace{1mm}\kappa_F(\mathbf{G}^{(n-s,n)}_{\mathcal{R}}).$
\end{remark} 
\subsection{Numerical Results}
\label{sec:numerical}
\remove{Consider Construction 2, since $p_\mathbf{C}(x)$ has degree $2m-2$ and $T_0, T_1, \cdots, T_{2m-2}$ form a basis for the vector space of polynomials with degree less than $2m-1$ (Remark \ref{rmk:orth}), we can write the matrix polynomial $p_{\mathbf{C}}(x)$ as 
\begin{align}
p_{\mathbf{C}}(x)= \sum_{i=0}^{2m-2} \mathbf{C}_i T_{i}(x),   
\end{align}
for some $N_1\times N_3$ matrix coefficients $\mathbf{C}_0, \cdots, \mathbf{C}_{2m-2}$. Furthermore, for any $i\in \{0, \cdots, 2m-2\}$, let $c^{(i)}_{k,l}$ be the entry of $\mathbf{C}_i$ at the $k$-th row and $l$-th column, we can write,
$$p_{{c}_{k,l}}(x)=\sum_{i=0}^{2m-2} c_{k,l}^{(i)} T_i(x),$$ 
for any $k \in [N_1], l \in [N_3]$, where $p_{c_{k,l}}(x)$ is the polynomial at the $k$-th row and $l$-th column of the matrix polynomial $p_\mathbf{C}(x)$. Now, notice that, for any $k \in [N_1], l \in [N_3]$, the outputs of the worker nodes $p_{c_{k,l}}(x_1), \cdots, p_{c_{k,l}}(x_P)$ can be expressed as 
\begin{align}
    \left(\hspace{-2mm}\begin{array}{c} p_{c_{k,l}}(x_1)\\\vdots\\  p_{c_{k,l}}(x_P)\end{array}\hspace{-2mm}\right)\hspace{-1mm}= \mathbf{G}_T\left(\hspace{-2mm}\begin{array}{c} c_{k,l}^{(0)}\\\vdots\\  c_{k,l}^{(2m-2)}\end{array}\hspace{-3mm}\right),
\end{align}
where $\mathbf{G}_{T}$ is defined in (\ref{eq:GT}).

Comparing to MatDot Codes \cite{allerton17}, the outputs of the worker nodes $p_{c_{k,l}}(x_1), \cdots, p_{c_{k,l}}(x_P)$ in MatDot Codes are expressed as 
\begin{align}
    \left(\hspace{-2mm}\begin{array}{c} p_{c_{k,l}}(x_1)\\\vdots\\  p_{c_{k,l}}(x_P)\end{array}\hspace{-2mm}\right)\hspace{-1mm}= \mathbf{G}_M\left(\hspace{-2mm}\begin{array}{c} c_{k,l}^{(0)}\\\vdots\\  c_{k,l}^{(2m-2)}\end{array}\hspace{-3mm}\right),
\end{align}
where 
\begin{align}
   \mathbf{G}_M= \left(\hspace{-2mm}\begin{array}{cccc} 1& x_1 & \cdots & x_1^{2m-2}\\\vdots & \vdots &\ddots &\vdots \\ 
  1&x_P&\cdots& x_P^{2m-2}\end{array}\hspace{-2mm}\right).
\end{align}}
\begin{figure*}[!t]
    \centering
    \includegraphics[scale=0.50]{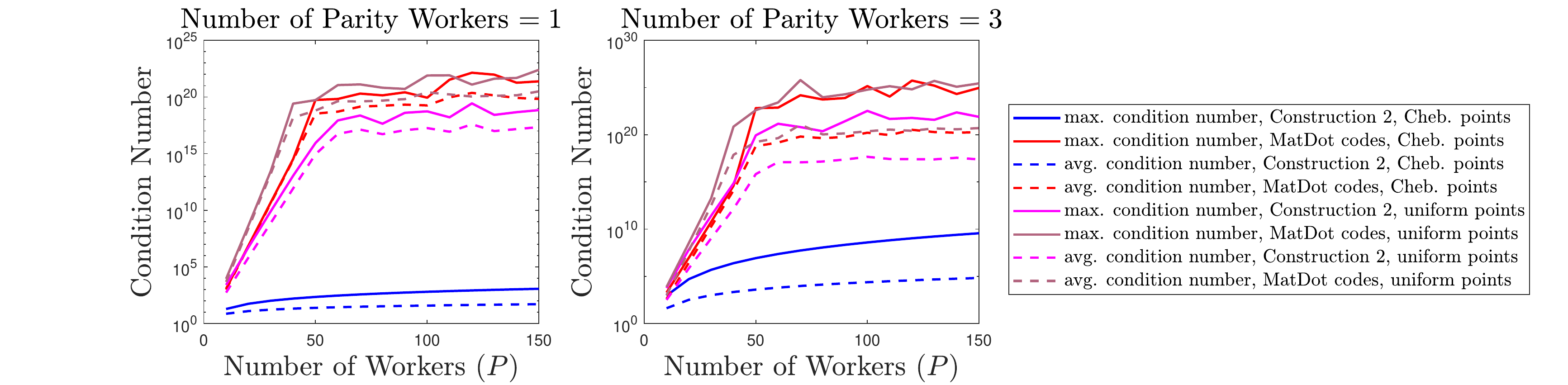}
    \caption{The growth of the condition number, for both Chebyshev polynomials based \fbl{OrthoMatDot} and MatDot Codes, with the system size given a fixed number of redundant worker nodes.}
    \label{fig:fixpar}
  \rule{\textwidth}{0.55pt}
\end{figure*}

\vc{The numerical stability of our codes is determined by the condition number of $2m-1 \times 2m-1$ sub-matrices of $\mathbf{G}^{(2m-1,P)}.$ The natural comparison is with MatDot Codes where the decoding depends on effectively inverting $2m-1\times 2m-1$ square sub-matrices of}
\begin{align}
  \mathbf{M}= \left(\hspace{-2mm}\begin{array}{cccc} 1& 1 & \cdots & 1\\ x_1 & x_2 & \cdots & x_P\\\vdots & \vdots &\ddots &\vdots \\ 
  x_1^{2m-2}& x_2^{2m-2}&\cdots& x_P^{2m-2}\end{array}\hspace{-2mm}\right).
\end{align}

\vc{Based on the result of Theorem \ref{thm:bound}, we choose $x_i=\rho_i^{(P)}.$ In our experiments, we consider systems with various number of worker nodes, namely, $P=16,30,60,80,100$. We compare $\kappa^{max}_2(\mathbf{G}^{(2m-1,P)})$ with $\kappa^{max}_2(\mathbf{M})$. We also compare the average $\ell_2$ condition number of all $2m-1\times 2m-1$ sub-matrices of $\mathbf{G}^{(2m-1,P)}$ and all $2m-1\times 2m-1$ sub-matrices of $\mathbf{M}$. The results, in Fig. \ref{fig:Pall}, show that, for every examined system, the maximum and average condition numbers of the  $2m-1\times 2m-1$ sub-matrices of $\mathbf{G}^{(2m-1,P)}$ are less than its MatDot Codes counterparts, especially for larger systems with $60, 80,$ and $100$ worker nodes. In fact, for these specific systems, the improvement in the condition number is around a scaling of $10^{15}$.}   

\vc{Fig. \ref{fig:fixpar} shows how the maximum/average condition number of the $2m-1\times2m-1$ sub-matrices of $\mathbf{G}^{(2m-1,P)}$ grows with the size of the distributed system given a fixed number of redundant worker nodes, namely 1 and 3, and compares with MatDot Codes.  The figure shows that while MatDot Codes provide a reasonable condition number $(\sim 10^{10})$ to distributed systems with size up to only $25$ worker nodes, Construction 2 can afford distributed systems with size up to $150$ worker nodes for the same condition number bound $\sim 10^{10}$.} 

As a reflection to the significant higher stability of \fbl{Chebyshev polynomials based OrthoMatDot} compared to MatDot Codes, Fig. \ref{fig:fixparerr} shows that \fbl{Chebyshev polynomials based OrthoMatDot}  provides much more accurate outputs compared to MatDot Codes. For the experiments whose results are shown in Fig. \ref{fig:fixparerr}, the entries of the input matrices $\mathbf{A},\mathbf{B}$ are chosen independently according to the standard Gaussian distribution $\mathcal{N}(0,1)$. In addition, for any two input matrices $\mathbf{A}, \mathbf{B}$, let $\hat{\mathbf{C}}$ be the output of the distributed system (which is not necessarily equal to the correct answer $\mathbf{A}\mathbf{B}$), we define the  relative error between $\mathbf{A}\mathbf{B}$ and $\hat{\mathbf{C}}$  to be 
\begin{align}
    E_r(\mathbf{A}\mathbf{B},\hat{\mathbf{C}})=\frac{||\mathbf{A}\mathbf{B}-\hat{\mathbf{C}}||_F}{||\mathbf{A}\mathbf{B}||_F}.
\end{align}
\begin{figure*}[!t]
    \centering
    \includegraphics[scale=0.50]{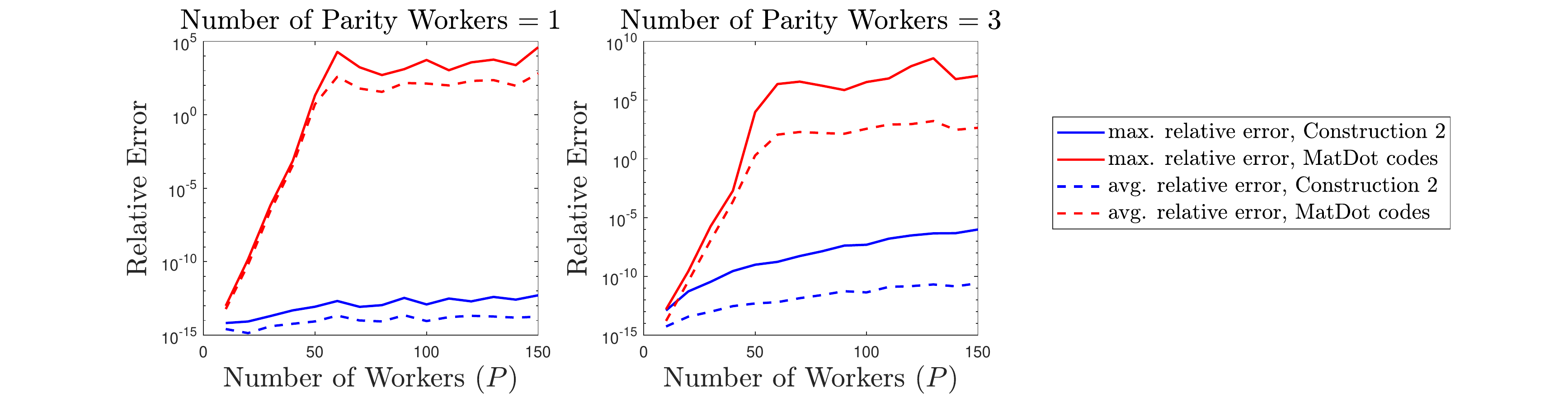}
    \caption{The growth of the relative error, for both \fbl{Chebyshev polynomials based OrthoMatDot} and MatDot Codes, both using Chebyshev points, with the system size given a fixed number of redundant worker nodes.}
    \label{fig:fixparerr}
  \rule{\textwidth}{0.7pt}
\end{figure*}
Fig. \ref{fig:fixparerr} shows how the maximum relative error (the worst case relative error given a fixed number of parity workers $s$ among all the $P-s$ successful nodes scenarios) grows with the size of the distributed system. In Fig. \ref{fig:fixparerr}, we plot the average result of five different realizations of the system at each system size $P$. The figure shows that  MatDot Codes crushes after the size of the system exceeds $50$ workers, providing a relative error of around $10^5$. On the other hand, \fbl{our OrthoMatDot construction} can support systems with sizes up to $150$ worker nodes only allowing for a relative error $< 10^{-5}$. It is also worth mentioning that in our experiments, we use the MATLAB command $inv()$  \cite{matlab}  for matrix inversion. We have also tried matrices inversion through the Bjork-Pereyra algorithm \cite{bjorck1970solution}, however, its results were much less accurate than $inv()$, especially for large systems with a number of worker nodes $> 50$.   
\begin{remark}
A main challenge in this work is that we assume operations over the real field. For  finite fields, one can always perform arithmetic operations with no errors. Although this fact may motivate a simple solution to the numerical stability of real-valued computations by rounding the computation's inputs to a finite field's elements and performing computations over this finite field, such solution has limited applicability, especially for inputs with wide range, due to the following reason. Since performing arithmetic operations over a finite field $\mathbb{F}_{2^n}$ requires representing each element of $\mathbb{F}_{2^n}$ as an element in $\mathbb{F}_{2}^{n}$ through a bit representation, this solution is  applicable in machines with fixed point operations and word sizes of at least $n$. However, the solution is not applicable in machines with floating point operations since in floating point representation not all the intermediate values between the minimum and the maximum representable values can be represented, this is a drawback of the floating point representation over the fixed point representation, though floating point representation can represent a wider range of values than fixed point representation for the same word size.
\end{remark}

\section{\fbl{OrthoPoly: Low Communication/Computation Numerically Stable Codes for Distributed Marix Multiplication}}\label{sec:chebpolycod}
While MatDot Codes \cite{allerton17} have an optimal recovery threshold of $2m-1$, they have relatively higher computation cost per worker and worker node to fusion node communication cost  as compared to Polynomial Codes \cite{polynomialcodes}. 
In this section, motivated by the condition number bound in Theorem \ref{thm:bound}, we use the idea of using Chebyshev polynomials to provide a numerically stable code construction for matrix multiplication that has the same low communication/computation costs as Polynomial Codes, as well as the same recovery threshold. However, as will be shown in this section, our proposed codes, \fbl{denoted by \emph{OrthoPoly}}, provides lower numerical errors than Polynomial Codes.  In this section, we follow the same system model as in Section \ref{sec:sysmod}, and solve  the problem statement formulated    in Section  \ref{sec:chebprobfor}. We  provide a motivating example in Section \ref{sec:chebpolyex}, then we provide the general code construction in Section \ref{sec:chebpolygen}. Finally, in Section \ref{sec:chebpolyexp}, we show experimentally that \fbl{OrthoPoly Codes}  achieve lower numerical errors as compared to Polynomial Codes.

\subsection{Problem Formulation}\label{sec:chebprobfor}
The master node possesses two real-valued input matrices $\mathbf{A}$, $\mathbf{B}$ with dimensions $N_1\times N_2$, $N_2\times N_3$, respectively. Every worker node receives from the master node an encoded matrix of $\mathbf{A}$ of dimension $N_1/m \times N_2$ and  an encoded matrix of $\mathbf{B}$ of dimension $N_2 \times N_3/n$, and performs matrix multiplication of these two received inputs. Upon performing the matrix multiplication, each worker node sends the result to the fusion node. 
The fusion node needs to recover the matrix multiplication $\mathbf{A}\mathbf{B}$ once it receives the results of any $mn$ worker nodes.
\subsection{Example $(m=n=3)$}
\label{sec:chebpolyex}
Consider computing the matrix multiplication $\mathbf{A}\mathbf{B}$, for some two real matrices $\mathbf{A}, \mathbf{B}$ of dimensions $N_1\times N_2$ and $N_2 \times N_3$, respectively, over a distributed system of $P \geq 9$ workers such that:
\begin{enumerate}
    \item Each worker receives an encoded matrix of $\mathbf{A}$ of dimension  $N_1 /3\times N_2$, and an encoded matrix of $\mathbf{B}$ of dimension $N_2 \times N_3/3$.
    \item The product $\mathbf{A}\mathbf{B}$ can be recovered by the fusion node given the results of any $9$ worker nodes. 
\end{enumerate}
A solution can be as follows: First, matrices $\mathbf{A}, \mathbf{B}$ can be partitioned as
\begin{align}
\label{eq:polsplt}
 \mathbf{A}&=\left(\begin{array}{ccc}\mathbf{A}_{0}\\\mathbf{A}_{1}\\\mathbf{A}_{2}\end{array}\right),~~
 \mathbf{B}=\left(\begin{array}{ccc}\mathbf{B}_{0}~~  \mathbf{B}_{1}~~\mathbf{B}_{2}\end{array}\right), 
\end{align}
where, for any $i\in\{0,1,2\}$, $\mathbf{A}_{i}$ has dimension $N_1/3 \times N_2$, and $\mathbf{B}_{i}$ has dimension $N_2 \times N_3/3$.  Next, let 
\begin{align}
p_\mathbf{A}(x)=\mathbf{A}_{0}T_{0}(x)+\mathbf{A}_{1}T_{1}(x)+\mathbf{A}_{2}T_{2}(x),\notag\\
p_\mathbf{B}(x)=\mathbf{B}_{0}T_{0}(x)+\mathbf{B}_{1} T_{3}(x)+\mathbf{B}_{2}T_{6}(x).\notag
\end{align}
Now, $p_\mathbf{A}(x)p_\mathbf{B}(x)$ can be written as 
\begin{align}
p_\mathbf{A}(x)p_\mathbf{B}(x)&=\big(\mathbf{A}_{0}T_{0}(x)+\mathbf{A}_{1}T_{1}(x)+\mathbf{A}_{2}T_{2}(x)\big)\big(\mathbf{B}_{0}T_{0}(x)+\mathbf{B}_{1} T_{3}(x)+\mathbf{B}_{2}T_{6}(x)\big)\notag\\
&= \mathbf{A}_0\mathbf{B}_0+\big(\mathbf{A}_1\mathbf{B}_0+\frac{1}{2}\mathbf{A}_2\mathbf{B}_1\big)T_1(x)+\big(\mathbf{A}_2\mathbf{B}_0+\frac{1}{2}\mathbf{A}_1\mathbf{B}_1\big)T_2(x)\notag\\
&+  \mathbf{A}_0\mathbf{B}_1 T_3(x)+\frac{1}{2}\big(\mathbf{A}_1\mathbf{B}_1+\mathbf{A}_2\mathbf{B}_2\big)T_4(x)+\frac{1}{2}\big(\mathbf{A}_1\mathbf{B}_2+\mathbf{A}_2\mathbf{B}_1\big)T_5(x)\notag\\
&+\mathbf{A}_0\mathbf{B}_2 T_6(x)+\frac{1}{2}\mathbf{A}_1\mathbf{B}_2T_7(x)+\frac{1}{2}\mathbf{A}_2\mathbf{B}_2T_8(x)
\end{align}
Since $p_\mathbf{A}(x)p_\mathbf{B}(x)$ is a degree $8$ polynomial, once the fusion node receives the output of any $9$ workers, it can interpolate  $p_\mathbf{A}(x)p_\mathbf{B}(x)$, i.e., obtain its matrix coefficients, let such matrix coefficients be $\mathbf{C}_{T_0}, \cdots, \mathbf{C}_{T_8}$.
Specifically, for any $i\in \{0, \cdots, 8\}$, let $\mathbf{C}_{T_i}$ be the matrix coefficient of $T_i$ in $p_\mathbf{A}(x)p_\mathbf{B}(x)$. 
Now, recalling (\ref{eq:polsplt}), the product $\mathbf{A}\mathbf{B}$ can be written as 
\begin{align}
\mathbf{A}\mathbf{B}=    \left(\begin{array}{ccc}\mathbf{A}_0\mathbf{B}_0&  \mathbf{A}_0\mathbf{B}_1& \mathbf{A}_0\mathbf{B}_2\\\mathbf{A}_1\mathbf{B}_0& \mathbf{A}_1\mathbf{B}_1 &\mathbf{A}_1\mathbf{B}_2\\\mathbf{A}_2\mathbf{B}_0&\mathbf{A}_2\mathbf{B}_1 &\mathbf{A}_2\mathbf{B}_2\end{array}\right).
\end{align}
While the obtained set of matrix coefficients $\{\mathbf{C}_{T_i}: i \in \{0, \cdots, 8\}\}$ is not equal to $\{\mathbf{A}_i\mathbf{B}_j : i,j\in\{0,1,2\}\}$, $\mathbf{C}_{T_{i}}$'s are linear combinations of $\mathbf{A}_i\mathbf{B}_j$'s. Specifically, for any $\mathbf{C}_{T_i}$, $i \in \{0, \cdots, 8\}$, let $\mathbf{C}_{T_i}^{(k,l)}$ be its $(k,l)$-th entry, and, for any $i,j \in\{0,1,2\}$, let $(\mathbf{A}_i\mathbf{B}_j)^{(k,l)}$ be the $(k,l)$-th entry of the product $\mathbf{A}_i\mathbf{B}_j$, we can write
\begin{align}\label{eq:tcoef}
\left(\begin{array}{c}\mathbf{C}_{T_0}^{(k,l)}\\  \mathbf{C}_{T_1}^{(k,l)}\\ \mathbf{C}_{T_2}^{(k,l)}\\\mathbf{C}_{T_3}^{(k,l)}\\ \mathbf{C}_{T_4}^{(k,l)}\\\mathbf{C}_{T_5}^{(k,l)}\\\mathbf{C}_{T_6}^{(k,l)}\\\mathbf{C}_{T_7}^{(k,l)}\\\mathbf{C}_{T_8}^{(k,l)}\end{array}\right)= \left(\begin{array}{ccccccccc}
1&0&0&0&0&0&0&0&0\\0&1&0&0&0&1/2&0&0&0\\0&0&1&0&1/2&0&0&0&0
\\0&0&0&1&0&0&0&0&0\\0&0&0&0&1/2&0&0&0&1/2\\0&0&0&0&0&1/2&0&1/2&0\\0&0&0&0&0&0&1&0&0\\0&0&0&0&0&0&0&1/2&0\\0&0&0&0&0&0&0&0&1/2
\end{array}\right)\left(\begin{array}{c}(\mathbf{A}_0\mathbf{B}_0)^{(k,l)}\\  (\mathbf{A}_1\mathbf{B}_0)^{(k,l)}\\ (\mathbf{A}_2\mathbf{B}_0)^{(k,l)}\\(\mathbf{A}_0\mathbf{B}_1)^{(k,l)}\\ (\mathbf{A}_1\mathbf{B}_1)^{(k,l)}\\(\mathbf{A}_2\mathbf{B}_1)^{(k,l)}\\(\mathbf{A}_0\mathbf{B}_2)^{(k,l)}\\(\mathbf{A}_1\mathbf{B}_2)^{(k,l)}\\(\mathbf{A}_2\mathbf{B}_2)^{(k,l)}\end{array}\right),
\end{align}
for any $(k,l) \in [N_1/3]\times[N_3/3]$.
Thus, the products $\mathbf{A}_i\mathbf{B}_j,  i,j\in\{0,1,2\}$ can be obtained by computing 
\begin{align}\label{eq:ABcoef}
\left(\begin{array}{c}(\mathbf{A}_0\mathbf{B}_0)^{(k,l)}\\  (\mathbf{A}_1\mathbf{B}_0)^{(k,l)}\\ (\mathbf{A}_2\mathbf{B}_0)^{(k,l)}\\(\mathbf{A}_0\mathbf{B}_1)^{(k,l)}\\ (\mathbf{A}_1\mathbf{B}_1)^{(k,l)}\\(\mathbf{A}_2\mathbf{B}_1)^{(k,l)}\\(\mathbf{A}_0\mathbf{B}_2)^{(k,l)}\\(\mathbf{A}_1\mathbf{B}_2)^{(k,l)}\\(\mathbf{A}_2\mathbf{B}_2)^{(k,l)}\end{array}\right)=\left(\begin{array}{ccccccccc}
1&0&0&0&0&0&0&0&0\\0&1&0&0&0&1/2&0&0&0\\0&0&1&0&1/2&0&0&0&0
\\0&0&0&1&0&0&0&0&0\\0&0&0&0&1/2&0&0&0&1/2\\0&0&0&0&0&1/2&0&1/2&0\\0&0&0&0&0&0&1&0&0\\0&0&0&0&0&0&0&1/2&0\\0&0&0&0&0&0&0&0&1/2
\end{array}\right)^{-1} \left(\begin{array}{c}\mathbf{C}_{T_0}^{(k,l)}\\  \mathbf{C}_{T_1}^{(k,l)}\\ \mathbf{C}_{T_2}^{(k,l)}\\\mathbf{C}_{T_3}^{(k,l)}\\ \mathbf{C}_{T_4}^{(k,l)}\\\mathbf{C}_{T_5}^{(k,l)}\\\mathbf{C}_{T_6}^{(k,l)}\\\mathbf{C}_{T_7}^{(k,l)}\\\mathbf{C}_{T_8}^{(k,l)}\end{array}\right),
\end{align}
for all $(k,l) \in [N_1/3]\times[N_3/3]$.
In the following, we provide the general code construction.

\subsection{\fbl{OrthoPoly} Code Construction}\label{sec:chebpolygen}
We assume that matrix $\mathbf{A}$  is split horizontally into $m$ equal sub-matrices, of dimension $N_1/m\times N_2$ each, and matrix $\mathbf{B}$ is split vertically into $n$ equal sub-matrices, of dimension $N_2 \times N_3/n $ each, as follows:
\begin{equation}\label{eq:spltApol}
\mathbf{A}=\left(\begin{array}c \mathbf{A}_0\\\mathbf{A}_1\\\vdots \\\mathbf{A}_{m-1}\end{array}\right),\;\;\;\mathbf{B} = \left(\mathbf{B}_0 \ \mathbf{B}_1 \ \ldots \ \mathbf{B}_{n-1}\right),
\end{equation}
and define two encoding polynomials $p_\mathbf{A}(x)=\sum_{i=0}^{m-1} \mathbf{A}_i T_{i}(x)$ and $p_\mathbf{B}(x)=\sum_{i=0}^{n-1} \mathbf{B}_i T_{im}(x),$ and let $p_\mathbf{C}(x)=p_{\mathbf{A}}(x)p_\mathbf{B}(x)$. We describe, next, the idea of the general code construction. First, for all $r \in [P]$, the master node sends to the $r$-th worker evaluations of $p_\mathbf{A}(x)$ and $p_\mathbf{B}(x)$ at $x=\rho^{(P)}_r$, that is, it sends $p_\mathbf{A}(\rho^{(P)}_r)$ and $p_\mathbf{B}(\rho^{(P)}_r)$ to the $r$-th worker. Next, for every $r \in [P]$, the $r$-th worker node computes the matrix product $p_\mathbf{C}(\rho^{(P)}_r)=p_{\mathbf{A}}(\rho^{(P)}_r)p_{\mathbf{B}}(\rho^{(P)}_r)$ and sends the result to the fusion node. Once the fusion node receives the output of any $mn$ worker nodes, it interpolates $p_\mathbf{C}(x)$. Next, the fusion node recovers the products $\mathbf{A}_i\mathbf{B}_j, i \in \{0, \cdots, m-1\}, j \in \{0, \cdots, n-1\}$, from the matrix coefficients of $p_\mathbf{C}(x)$ using a low complexity matrix-vector multiplication, specified later in Construction \ref{con:polycheb2}. 
We formally present our \fbl{OrthoPoly} Codes in Construction \ref{con:polycheb2}.  In the following, we explain the notation used in Construction \ref{con:polycheb2}.
The output of the algorithm is the  $N_1\times N_3$ matrix $\hat{\mathbf{C}},$ where the $(k,l)$-th block  of $\hat{\mathbf{C}}$ is the $N_1/m \times N_3/n$ matrix $\hat{\mathbf{C}}_{k,l}$, and   the $(i,j)$-th entry of any matrix $\hat{\mathbf{C}}_{k,l}$ is $\hat{{c}}_{k,l}^{(i,j)}$.  The $(i,j)$-th entry of  the matrix polynomial $p_{\mathbf{C}}(x)$ is denoted as $p^{(i,j)}_{\mathbf{C}}(x)$, and Section \ref{sec:notation} defines matrices $\mathbf{G}^{(mn,P)}(\boldsymbol{\rho}^{(P)})$ and $\mathbf{G}^{(mn,P)}_{\mathcal{R}}(\boldsymbol{\rho}^{(P)})$, for any subset $\mathcal{R}=\{{r_1}, \cdots, {r_{mn}}\}  \subset [P]$.
In addition, $\mathbf{H}$ is an $mn \times mn$ matrix of the following form $\mathbf{H}=\left(\begin{array}{ccc}\mathbf{H}_{0}~~  \mathbf{H}_{1}~~\cdots~~\mathbf{H}_{n-1}\end{array}\right),$ where $\mathbf{H}_0$ is an $mn \times m$ matrix with ones on the main diagonal and zeros elsewhere, and for any $i\in \{1, \cdots, n-1\}$, $\mathbf{H}_i$ is an $mn\times m$ matrix of the following structure 
$$\mathbf{H}_i=\left(\begin{array}{ccccccccc}
0&0&0&\cdots&0\\\vdots&\vdots&\vdots&\vdots&\vdots\\0&0&0&\cdots&0\\\vdots&\vdots&\vdots&\iddots&1/2\\0&0&0&\iddots&0\\0&0&1/2&\iddots&\vdots\\0&1/2&0&\cdots&0\\1&0&0&\cdots&0\\0&1/2&0&\cdots&0\\0&0&1/2&\ddots&\vdots\\0&0&0&\ddots&0\\\vdots&\vdots&\vdots&\ddots&1/2\\0&0&0&\cdots&0\\\vdots&\vdots&\vdots&\vdots&\vdots\\0&0&0&\cdots&0
\end{array}\right),$$
where the value $1$ in the first column is at the $(im+1)$-th row of $\mathbf{H}_i$.

\begin{algorithm}
\caption{\fbl{OrthoPoly}: \textbf{Inputs}: $\mathbf{A}, \mathbf{B}$,~~\textbf{Output}: $\hat{\mathbf{C}}$} \label{con:polycheb2}
\begin{algorithmic}[1]
\Procedure{MasterNode}{$\mathbf{A},\mathbf{B}$}\Comment{The master node's procedure}
\State $r \gets 1$
\While{$r\not=P+1$}
\State $p_\mathbf{A}(\rho^{(P)}_r) \gets \sum_{i=0}^{m-1}\mathbf{A}_{i} T_{i}(\rho^{(P)}_r)$
\State $p_\mathbf{B}(\rho^{(P)}_r)\gets\sum_{i=0}^{n-1} \mathbf{B}_{i} T_{im}(\rho^{(P)}_r)$ 
\State \textbf{send} $p_{\mathbf{A}}(\rho^{(P)}_r), p_{\mathbf{B}}(\rho^{(P)}_r)$ \textbf{to worker node} $r$
\State $r \gets r+1$
\EndWhile
\EndProcedure
\State
\Procedure{WorkerNode}{$p_{\mathbf{A}}(\rho^{(P)}_r), p_{\mathbf{B}}(\rho^{(P)}_r)$}\Comment{The procedure of worker node $r$}
\State $p_\mathbf{C}(\rho^{(P)}_r) \gets p_{\mathbf{A}}(\rho^{(P)}_r) p_{\mathbf{B}}(\rho^{(P)}_r)$
\State \textbf{send} $p_{\mathbf{C}}(\rho^{(P)}_r)$ \textbf{to the fusion node}
\EndProcedure
\State
\Procedure{FusionNode}{$\{p_{\mathbf{C}}(\rho^{(P)}_{r_1}), \cdots, p_{\mathbf{C}}(\rho^{(P)}_{r_{mn}})\}$}\Comment{The fusion node's procedure, $r_i$'s are distinct}
\State  $\mathbf{G}_{\operatorname{inv}}\gets \left(\mathbf{G}^{(mn,P)}_{\mathcal{R}}\right)^{-1}$
\For{$i\in [N_1/m]$}
\For{$j\in [N_3/n]$}
\State $(c^{(i,j)}_0, \cdots, c^{(i,j)}_{mn-1}) \gets (p_{\mathbf{C}}^{(i,j)}(\rho^{(P)}_{r_1}), \cdots,p_{\mathbf{C}}^{(i,j)}(\rho^{(P)}_{r_{mn}}))\mathbf{G}_{\operatorname{inv}}$
\State $(\hat{c}_{0,0}^{(i,j)} \cdots \hat{c}_{m-1,0}^{(i,j)} \cdots\cdots \hat{c}_{0,n-1}^{(i,j)} \cdots \hat{c}_{m-1,n-1}^{(i,j)}) \gets (c^{(i,j)}_0, \cdots, c^{(i,j)}_{mn-1}) (\mathbf{H}^{-1})^T$
\EndFor
\EndFor
\State \textbf{return} $\hat{\mathbf{C}}$
\EndProcedure
\end{algorithmic}
\end{algorithm}
\subsubsection{Complexity Analyses of \fbl{OrthoPoly}}\label{sec:con4complexity}
\hspace{2mm}

\textbf{Encoding Complexity}: 
Encoding for each worker requires performing two additions, the first one adds $m$ scaled matrices of size $N_1 N_2/ m$ and the other adds $n$ scaled matrices of size $N_2 N_3 /n$,  for an overall encoding complexity for each worker of $O(N_1 N_2+N_2N_3)$. Therefore, the overall computational complexity of encoding for $P$ workers is $O(N_1 N_2P+N_2N_3P)$.

\textbf{Computational Cost per Worker}: Each worker multiplies two matrices of  dimensions $N_1  /m\times N_2$ and $N_2\times N_3/n$, requiring $O(N_1N_2N_3/mn)$ operations.

\textbf{Decoding Complexity}:
Since $p_\mathbf{A}(x)p_\mathbf{B}(x)$ has degree $mn-1$, the interpolation of $p_\mathbf{C}(x)$ requires the inversion of a $mn \times mn$ matrix, with complexity $O(m^3n^3)$, and performing $N_1N_3/mn$ matrix-vector multiplications, each of them is between the inverted  matrix and a  column vector of length $mn$ of the received evaluations of the matrix polynomial $p_\mathbf{C}(x)$ at some position $(i,j) \in [N_1/m] \times [N_3/n]$, with complexity $O(N_1N_3m^2n^2/(mn))=O(N_1N_3mn)$.  Thus, assuming that $mn \ll N_1,N_3$, the overall decoding complexity is $O(N_1N_3 m n)$.

\textbf{Communication Cost}:
The master node sends $O( N_1N_2P/m+N_2N_3P/n)$ symbols, and the fusion node receives $O(N_1 N_3)$ symbols from the successful worker nodes. 

\vspace{2mm}
\begin{remark}
With the reasonable assumption that the dimensions of the input matrices $\mathbf{A},\mathbf{B}$ are large enough such that $N_1,N_2,N_3 \gg m,n,P$, we can conclude that the encoding and decoding costs at the master and fusion nodes, respectively, are negligible compared to the computation cost at each worker node.
\end{remark}
\subsection{Numerical Results}\label{sec:chebpolyexp}
In our experiments, the entries of the input matrices $\mathbf{A},\mathbf{B}$ are chosen independently according to the standard Gaussian distribution $\mathcal{N}(0,1)$. In addition, for any two input matrices $\mathbf{A}, \mathbf{B}$, let $\hat{\mathbf{C}}$ be the output of the distributed system, we define the  relative error between $\mathbf{A}\mathbf{B}$ and $\hat{\mathbf{C}}$  to be 
\begin{align*}
    E_r(\mathbf{A}\mathbf{B},\hat{\mathbf{C}})=\frac{||\mathbf{A}\mathbf{B}-\hat{\mathbf{C}}||_F}{||\mathbf{A}\mathbf{B}||_F}.
\end{align*}
Fig. \ref{fig:fixparerrpoly} shows how the maximum relative error (the worst case relative error given a fixed number of parity workers $s$ among all the $P-s$ successful nodes scenarios) grows with the size of the distributed system for both Construction \ref{con:polycheb2} and Polynomial Codes. In Fig. \ref{fig:fixparerrpoly}, we plot the average result of five different realizations of the system at each system size $P$. The figure shows that  Polynomial Codes have unacceptable relative errors after the size of the system exceeds $50$ workers, providing a relative error of around $10^5$. On the other hand, \fbl{OrthoPoly} can support systems with sizes up to $170$ worker nodes only allowing for a relative error $< 10^{-5}$. 
\begin{figure*}[!t]
    \centering
    \includegraphics[scale=0.50]{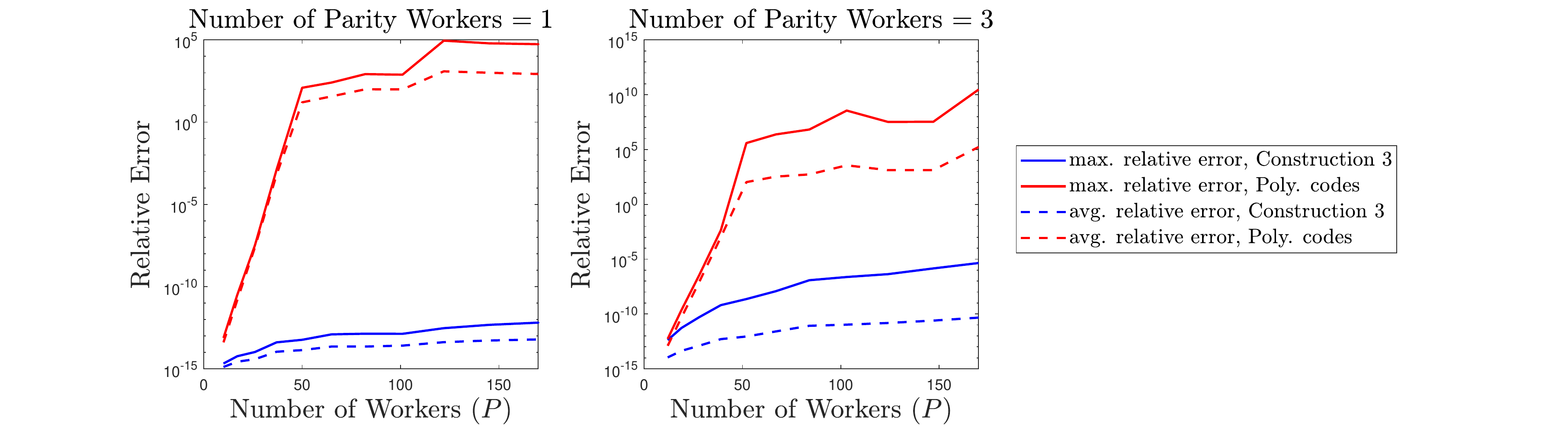}
    \caption{The growth of the relative error, for both \fbl{OrthoPoly} and Polynomial Codes, both using Chebyshev points, with the system size given a fixed number of redundant worker nodes.}
    \label{fig:fixparerrpoly}
  \rule{\textwidth}{0.7pt}
\end{figure*}

\section{\fbl{Generalized OrthoMatDot: Numerically stable  Codes for  Matrix Multiplication with Communication/Computation-Recovery Threshold Trade-off}}\label{sec:polydot-chebyshev}
Although MatDot Codes \cite{allerton17} have a low recovery threshold of $2m-1$ as compared with Polynomial Codes \cite{polynomialcodes} which have a recovery threshold of $mn$, MatDot Codes' worker to fusion nodes communication cost and  computation cost per worker are higher than Polynomial Codes. 
 Codes proposed in  \cite{arxiv_allerton17,genPolyDot,entPoly} offer a trade-off between the communication/computation cost and the recovery threshold. However, all of these codes are based on the ``ill-conditioned'' monomial basis.  In this section, we offer a numerically stable code construction, \fbl{denoted by \emph{Generalized OrthoMatDot},}  that offers  a trade-off between communication/computation costs and recovery threshold. Our construction incurs a higher recovery threshold than the codes of \cite{genPolyDot, entPoly} by a factor of at most $4$ for the same communication/computation cost. We provide in Section \ref{sec:probfor2} the formal problem statement considered in this section. We describe an example of our construction in Section \ref{sec:polydot_example}, provide the general code construction in Section \ref{sec:Chebyshev_polydot}, and describe our numerical experiments in Section \ref{sec:numerical_polydot}.

\subsection{System Model and Problem Formulation}\label{sec:probfor2}
We consider the same system model and problem formulation as in Section \ref{sec:sysprob} with the following change:  We assume that the master node is allowed to send an encoded ${1}/{m}$ fraction of matrix $\mathbf{A}$, and an encoded ${1}/{n}$ fraction of matrix $\mathbf{B}$, where $m$ and $n$ are not necessarily equal, and $\mathbf{A}$ and $\mathbf{B}$ are split as follows
\begin{align}
\label{eq:spltA2}
 \mathbf{A}&=\left(\begin{array}{ccc}\mathbf{A}_{0,0}&  \cdots& \mathbf{A}_{0,m_2-1}\\\vdots& \ddots &\vdots\\\mathbf{A}_{m_1-1,0} &\cdots &\mathbf{A}_{m_1-1,m_2-1}\end{array}\right),\notag\\
 \mathbf{B}&=\left(\begin{array}{ccc}\mathbf{B}_{0,0}&  \cdots& \mathbf{B}_{0,m_3-1}\\\vdots& \ddots &\vdots\\\mathbf{B}_{m_2-1,0} &\cdots &\mathbf{B}_{m_2-1,m_3-1}\end{array}\right), 
\end{align}
 where $m_1,m_2,m_3$ divide $N_1,N_2,N_3$, respectively, and $m=m_1m_2, n=m_2m_3$.
In addition, we assume that each worker node receives a linear combination of sub-matrices $\mathbf{A}_{i,j}$, and another linear combination of sub-matrices $\mathbf{B}_{i,j}$. 

\begin{remark}
Although, in this section, we offer \fbl{Generalized OrthoMatDot,} a code construction  with lower condition numbers than codes in \cite{genPolyDot, entPoly}, the recovery threshold of our codes are higher by a factor of at most $4$ than the codes of these references. Specifically,  \fbl{Generalized OrthoMatDot} codes have a recovery threshold of $4m_1m_2m_3-2(m_1m_2+m_2m_3+m_3m_1)$ $+m_1+2m_2+m_3-1$ while both codes in \cite{genPolyDot,entPoly} have a recovery threshold of $m_1m_2m_3+m_2-1$. This increased recovery threshold is due to the fact that \fbl{Generalized OrthoMatDot} Codes are based on Chebyshev polynomials which have the following property: For any $i,j \in \mathbb{N}$, $T_{i}(x)T_{j}(x)=1/2~(T_{i+j}(x)+T_{|i-j|}(x))$. This property allows for a higher number of \emph{undesired} terms in the multiplication of the encoding polynomials $p_{\mathbf{A}}(x), p_{\mathbf{B}}(x)$. In order to avoid combining undesired and desired terms at the same degree, higher degree Chebyshev polynomials have to be used in $p_{\mathbf{B}}(x)$, yielding a higher recovery threshold. It is still an open question  whether the recovery threshold in \cite{genPolyDot, entPoly} can be achieved using orthonormal polynomials.     
\end{remark}

\subsection{Example $(m_1=m_2=m_3=2)$}
\label{sec:polydot_example}
Consider computing the matrix multiplication $\mathbf{A}\mathbf{B}$, for some two real matrices $\mathbf{A}, \mathbf{B}$ of dimensions $N_1\times N_2$ and $N_2 \times N_3$, respectively, over a distributed system of $P \geq 15$ workers such that:
\begin{enumerate}
    \item Each worker receives an encoded matrix of $\mathbf{A}$ of dimension  $N_1/2 \times N_2/2$, and an encoded matrix of $\mathbf{B}$ of dimension $N_2/2 \times N_3/2$.
    \item The product $\mathbf{A}\mathbf{B}$ can be recovered by the fusion node given the results of any $15$ worker nodes. 
\end{enumerate}
A solution can be as follows: First, matrices $\mathbf{A}, \mathbf{B}$ can be partitioned as
\begin{align}
\label{eq:exsplt}
 \mathbf{A}&=\left(\begin{array}{ccc}\mathbf{A}_{0,0}&  \mathbf{A}_{0,1}\\\mathbf{A}_{1,0}  &\mathbf{A}_{1,1}\end{array}\right),
 \mathbf{B}=\left(\begin{array}{ccc}\mathbf{B}_{0,0}& \mathbf{B}_{0,1}\\\mathbf{B}_{1,0} &\mathbf{A}_{1,1}\end{array}\right), 
\end{align}
where, for $i,j \in\{0,1\}$, $\mathbf{A}_{i,j}$ has dimension $N_1/2 \times N_2/2$, and $\mathbf{B}_{i,j}$ has dimension $N_2/2 \times N_3/2$.  Next, let 
\begin{align}
p_\mathbf{A}(x)=\mathbf{A}_{0,0}T_{1}(x)+\frac{1}{2}\mathbf{A}_{0,1}T_{0}(x)+\mathbf{A}_{1,0}T_{\alpha+1}(x)+\mathbf{A}_{1,1}T_{\alpha}(x),\notag\\
p_\mathbf{B}(x)=\frac{1}{2}\mathbf{B}_{0,0}T_{0}(x)+\mathbf{B}_{1,0} T_{1}(x)+\mathbf{B}_{0,1}T_{\beta}(x)+\mathbf{B}_{1,1}T_{\beta+1}(x),\notag
\end{align}
where $\alpha, \beta$ to be specified next, and define $P$ distinct real numbers $x_1,x_2, \cdots, x_P$ in the range $[-1,1]$. For each worker node $r \in [P]$, the master node sends $p_{\mathbf{A}}(x_r)p_{\mathbf{B}}(x_r)$.

 {Now, in order to specify the best values for $\alpha,\beta$, we expand the polynomial $p_\mathbf{A}(x)p_\mathbf{B}(x)$ in the Chebyshev basis, and then point out some observations. 
\begin{align}\label{eq:expandab}
    p_\mathbf{A}(x)p_\mathbf{B}(x)&=\frac{1}{4} \mathbf{A}_{0,1}\mathbf{B}_{0,0}+\frac{1}{2} \mathbf{A}_{0,0}\mathbf{B}_{0,0}T_{1}(x)+\frac{1}{2}\mathbf{A}_{0,1}\mathbf{B}_{1,0}T_{1}(x)+ \mathbf{A}_{0,0}\mathbf{B}_{1,0}T_1(x)T_{1}(x)\notag\\
    &+ \frac{1}{2} \mathbf{A}_{1,1}\mathbf{B}_{0,0}T_{\alpha}(x)+  \mathbf{A}_{1,1}\mathbf{B}_{1,0}T_{1}(x)T_{\alpha}(x)+ \frac{1}{2} \mathbf{A}_{1,0}\mathbf{B}_{0,0}T_{\alpha+1}(x)+  \mathbf{A}_{1,0}\mathbf{B}_{1,0}T_{1}(x)T_{\alpha+1}(x)\notag\\
    &+ \frac{1}{2} \mathbf{A}_{0,1}\mathbf{B}_{0,1}T_{\beta}(x)+  \mathbf{A}_{0,0}\mathbf{B}_{0,1}T_{1}(x)T_{\beta}(x)+ \frac{1}{2} \mathbf{A}_{0,1}\mathbf{B}_{1,1}T_{\beta+1}(x)+  \mathbf{A}_{0,0}\mathbf{B}_{1,1}T_{1}(x)T_{\beta+1}(x)\notag\\
     &+ \mathbf{A}_{0,1}\mathbf{B}_{1,1}T_{1}(x)T_{\beta+1}(x)+  \mathbf{A}_{1,0}\mathbf{B}_{0,1}T_{\alpha+1}(x)T_{\beta}(x)+\mathbf{A}_{1,1}\mathbf{B}_{1,1}T_{\alpha}(x)T_{\beta+1}(x) \notag\\
     &+\mathbf{A}_{1,0}\mathbf{B}_{1,1}T_{\alpha+1}(x)T_{\beta+1}(x).
\end{align}
Using the property of the Chebyshev polynomials that  for any $i,j \in \mathbb{N}$, $T_{i}(x)T_{j}(x)=1/2~(T_{i+j}(x)+T_{|i-j|}(x))$, (\ref{eq:expandab}) can be rewritten as
\begin{align}\label{eq:expandab2}
    p_\mathbf{A}(x)p_\mathbf{B}(x)&=\frac{1}{4} \mathbf{A}_{0,1}\mathbf{B}_{0,0}+\frac{1}{2}\mathbf{A}_{0,0}\mathbf{B}_{1,0}+\frac{1}{2} \left(\mathbf{A}_{0,0}\mathbf{B}_{0,0}+\mathbf{A}_{0,1}\mathbf{B}_{1,0}\right)T_{1}(x)+\frac{1}{2} \mathbf{A}_{0,0}\mathbf{B}_{1,0}T_{2}(x)\notag\\
    &+\frac{1}{2}\mathbf{A}_{1,1}\mathbf{B}_{1,0}T_{\alpha-1}(x)+\frac{1}{2} \left(\mathbf{A}_{1,1}\mathbf{B}_{0,0}+\mathbf{A}_{1,0}\mathbf{B}_{1,0}\right) T_{\alpha}(x)+\frac{1}{2} \left(\mathbf{A}_{1,0}\mathbf{B}_{0,0}+\mathbf{A}_{1,1}\mathbf{B}_{1,0}\right) T_{\alpha+1}(x)\notag\\
    &+\frac{1}{2}\mathbf{A}_{1,0}\mathbf{B}_{1,0}T_{\alpha+2}(x)+\frac{1}{2}\mathbf{A}_{1,0}\mathbf{B}_{0,1}T_{\beta-\alpha-1}(x)+
    \frac{1}{2}\left(\mathbf{A}_{1,1}\mathbf{B}_{0,1}+\mathbf{A}_{1,0}\mathbf{B}_{1,1}\right)T_{\beta-\alpha}(x)\notag\\
    &+\frac{1}{2}\mathbf{A}_{1,1}\mathbf{B}_{1,1}T_{\beta-\alpha+1}(x)+\frac{1}{2}\mathbf{A}_{0,0}\mathbf{B}_{0,1}T_{\beta-1}(x)+\frac{1}{2}\left(\mathbf{A}_{0,1}\mathbf{B}_{0,1}+\mathbf{A}_{0,0}\mathbf{B}_{1,1}\right)T_{\beta}(x)\notag\\
    &+ \frac{1}{2}\left(\mathbf{A}_{0,0}\mathbf{B}_{0,1}+\mathbf{A}_{0,1}\mathbf{B}_{1,1}\right)T_{\beta+1}(x)+\frac{1}{2}\mathbf{A}_{0,0}\mathbf{B}_{1,1}T_{\beta+2}(x)+\frac{1}{2}\mathbf{A}_{1,1}\mathbf{B}_{0,1}T_{\beta+\alpha}(x)\notag\\
    &+ \frac{1}{2} \left(\mathbf{A}_{1,0}\mathbf{B}_{0,1}+\mathbf{A}_{1,1}\mathbf{B}_{1,1} \right)T_{\beta+\alpha+1}(x)+\frac{1}{2}\mathbf{A}_{1,0}\mathbf{B}_{1,1}T_{\beta+\alpha+2}(x).
\end{align}
Now, note the following regrading $p_{\mathbf{A}}(x)p_{\mathbf{B}}(x)$ in (\ref{eq:expandab2}):
\begin{enumerate}[(i)]
    \item $\frac{1}{2}\left(\mathbf{A}_{0,0}\mathbf{B}_{0,0}+\mathbf{A}_{0,1}\mathbf{B}_{1,0}\right)$ is the coefficient of $T_{1}(x)$,
    \item $\frac{1}{2}\left(\mathbf{A}_{1,0}\mathbf{B}_{0,0}+\mathbf{A}_{1,1}\mathbf{B}_{1,0}\right)$ is the coefficient of $T_{\alpha+1}(x)$,
    \item $\frac{1}{2}\left(\mathbf{A}_{0,0}\mathbf{B}_{0,1}+\mathbf{A}_{0,1}\mathbf{B}_{1,1}\right)$ is the coefficient of $T_{\beta+1}(x)$,
    \item $\frac{1}{2}\left(\mathbf{A}_{1,0}\mathbf{B}_{0,1}+\mathbf{A}_{1,1}\mathbf{B}_{1,1}\right)$ is the coefficient of $T_{\beta+\alpha+1}(x)$.
\end{enumerate}
}
 {
Since $p_{\mathbf{A}}(x)p_{\mathbf{B}}(x)$ has degree $\beta+\alpha+2$, and this polynomial is evaluated at distinct value at each worker node, once the fusion node receives the output of any $\beta+\alpha+3$ worker nodes, it can interpolate $p_{\mathbf{A}}(x)p_{\mathbf{B}}(x)$ and extract the product $\mathbf{A}\mathbf{B}$ (i.e., the matrix coefficients of $T_1(x),T_{\alpha+1}(x)$,
$ T_{\beta+1}(x), T_{\beta+\alpha+1}(x)$). 
Now, we aim for picking values for $\alpha,\beta$ such that the degree of $p_\mathbf{A}(x)p_\mathbf{B}(x)$ is minimal; and hence, the recovery threshold is minimal as well. These minimal values for $\alpha,\beta$ must be chosen such that the desired coefficients in (i)-(iv) are separate. That is, each of them is neither combined with another desired nor undesired term. This constraint leads to the following two inequalities:
}
$$\alpha-1>1,\text{~and }
\alpha+1<\beta-\alpha-1,$$
which implies that $\alpha=3,\beta=9$. Next, we provide our general code construction for the \fbl{Generalized OrthoMatDot Codes.}

\subsection{\fbl{Generalized OrthoMatDot Code Construction}}
\label{sec:Chebyshev_polydot}

\begin{theorem}\label{thm:genCheb}
For the matrix multiplication problem described in Section \ref{sec:probfor2} computed on the system defined in Section \ref{sec:sysmod}, there exists a coding strategy with recovery threshold 
\begin{align}\label{eq:recgen}
\hspace{-0mm}4m_1m_2&m_3-2(m_1m_2+m_2m_3+m_3m_1)\notag\\
&+m_1+2m_2+m_3-1.
\end{align}
\end{theorem}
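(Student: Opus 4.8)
The plan is to give an explicit Chebyshev-polynomial encoding generalizing Construction~\ref{con:polycheb2} and the example of Section~\ref{sec:polydot_example}, and then to establish (i) that the product polynomial $p_{\mathbf{C}}=p_{\mathbf{A}}p_{\mathbf{B}}$ has Chebyshev-degree exactly $R-1$, where $R$ is the claimed recovery threshold, and (ii) that each of the $m_1m_3$ block products $(\mathbf{A}\mathbf{B})_{i,k}=\sum_{j=0}^{m_2-1}\mathbf{A}_{i,j}\mathbf{B}_{j,k}$ occupies a single designated Chebyshev coefficient of $p_{\mathbf{C}}$. Concretely, set $\alpha=2m_2-1$ and $\beta=2\big((m_1-1)\alpha+(m_2-1)\big)+1$, partition $\mathbf{A},\mathbf{B}$ as in (\ref{eq:spltA2}), and define
\begin{align}
p_{\mathbf{A}}(x)&=\sum_{i=0}^{m_1-1}\sum_{j=0}^{m_2-1} c^{A}_{i,j}\,\mathbf{A}_{i,j}\,T_{\,i\alpha+(m_2-1-j)}(x),\notag\\
p_{\mathbf{B}}(x)&=\sum_{j=0}^{m_2-1}\sum_{k=0}^{m_3-1} c^{B}_{j,k}\,\mathbf{B}_{j,k}\,T_{\,j+k\beta}(x),\notag
\end{align}
with all $c^{A}_{i,j}=c^{B}_{j,k}=1$ except that the single block of each polynomial whose Chebyshev index is $0$, namely $\mathbf{A}_{0,m_2-1}$ and $\mathbf{B}_{0,0}$, gets scalar weight $\tfrac12$, in analogy with the $\tfrac12$ factors in Section~\ref{sec:polydot_example}. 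Since $p_{\mathbf{A}}(x_r)$ is a single $\tfrac{N_1}{m_1}\times\tfrac{N_2}{m_2}$ block and $p_{\mathbf{B}}(x_r)$ a single $\tfrac{N_2}{m_2}\times\tfrac{N_3}{m_3}$ block, the per-worker load is that of Section~\ref{sec:probfor2}; worker $r\in[P]$ returns $p_{\mathbf{C}}(x_r)=p_{\mathbf{A}}(x_r)p_{\mathbf{B}}(x_r)$, the points $x_1,\dots,x_P$ being distinct reals in $[-1,1]$ (the Chebyshev points $\boldsymbol{\rho}^{(P)}$ when numerical stability is also desired, via Theorem~\ref{thm:bound}).

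The Chebyshev degree of $p_{\mathbf{A}}$ is at most $(m_1-1)\alpha+(m_2-1)$ and that of $p_{\mathbf{B}}$ at most $(m_2-1)+(m_3-1)\beta$, so $p_{\mathbf{C}}$ has Chebyshev degree at most $D:=(m_1-1)\alpha+2(m_2-1)+(m_3-1)\beta$, and substituting $\alpha$ and $\beta$ gives $D+1=R$. Since $\{T_0,\dots,T_{R-1}\}$ is a basis of the polynomials of degree less than $R$ (Remark~\ref{rmk:orth}) and evaluation at $R$ distinct points is a bijection onto $\mathbb{R}^{R}$, the coefficients of $p_{\mathbf{C}}$ in this basis are recovered from any $R$ worker outputs by inverting the corresponding $R\times R$ Chebyshev-Vandermonde submatrix, exactly as in the fusion node procedure of Construction~\ref{con:polycheb2}. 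It remains to prove the \emph{extraction claim}: writing $p_{\mathbf{C}}(x)=\sum_{\ell\ge0}\mathbf{C}_\ell T_\ell(x)$ and $d_{i,k}:=i\alpha+(m_2-1)+k\beta$, one has $\mathbf{C}_{d_{i,k}}=\tfrac12(\mathbf{A}\mathbf{B})_{i,k}$ for all $i,k$, so the fusion node outputs $(\mathbf{A}\mathbf{B})_{i,k}=2\,\mathbf{C}_{d_{i,k}}$ and reassembles $\mathbf{A}\mathbf{B}$; this is the step that yields the worst-case recovery threshold $R$.

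For the extraction claim, expand $p_{\mathbf{C}}$ using $T_aT_b=\tfrac12\big(T_{a+b}+T_{|a-b|}\big)$, treating $T_0T_b=T_b$ as the degenerate case in which the two branches merge. Write $\theta^{A}_{i',j}=i'\alpha+(m_2-1-j)$ and $\theta^{B}_{j',k'}=j'+k'\beta$; the product $\mathbf{A}_{i',j}\mathbf{B}_{j',k'}$ feeds the coefficients of $T_{\theta^{A}_{i',j}+\theta^{B}_{j',k'}}$ and $T_{|\theta^{A}_{i',j}-\theta^{B}_{j',k'}|}$. The ``$+$'' branch index equals $i'\alpha+(m_2-1)+k'\beta+(j'-j)$, which for $j=j'$ is precisely $d_{i',k'}$; summing over $j$, with the $\tfrac12$ scalars ensuring that the two degenerate products $\mathbf{A}_{i',0}\mathbf{B}_{0,k'}$ and $\mathbf{A}_{0,m_2-1}\mathbf{B}_{m_2-1,k'}$ also contribute weight $\tfrac12$, yields exactly $\tfrac12(\mathbf{A}\mathbf{B})_{i',k'}$ at $T_{d_{i',k'}}$. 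The crux is that nothing else reaches any $d_{i,k}$: a ``$+$'' branch with $j\ne j'$ lies within distance $m_2-1<\alpha/2$ of $d_{i',k'}$, hence, since consecutive desired indices within a fixed block differ by $\alpha$ and the $m_3$ blocks are pairwise separated because $\beta>2\big((m_1-1)\alpha+(m_2-1)\big)$, it can never equal a $d_{i,k}$; a non-degenerate ``$|a-b|$'' branch index $|i'\alpha+(m_2-1)-j-j'-k'\beta|$ satisfies, for $k'=0$, the bound $1\le j+j'\le 2(m_2-1)=\alpha-1$ (non-degeneracy forces $j'\ge1$), placing it strictly between consecutive $d_{\cdot,0}$ or below $d_{0,0}$, while for $k'\ge1$ the choice of $\beta$ forces it strictly between the $(k'-1)$-st and $k'$-th blocks of desired indices; finally the degenerate products involving a $T_0$ block paired with a wrong partner index land at an index less than $m_2-1$ or inside a block gap. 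I expect this last step, namely the absolute-value case analysis together with the verification that $\alpha=2m_2-1$ and $\beta=2\big((m_1-1)\alpha+(m_2-1)\big)+1$ are the smallest parameters killing all collisions (so that the \emph{stated} threshold, and not a larger one, is attained), to be the main obstacle; the remainder is bookkeeping, a complexity count parallel to Section~\ref{sec:con4complexity}, and a direct check of the degenerate case $m_2=1$, which reduces to a Polynomial-Codes-type construction.
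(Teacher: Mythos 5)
Your proposal reproduces the paper's construction exactly: your $\beta=2\bigl((m_1-1)\alpha+(m_2-1)\bigr)+1$ simplifies to $(2m_1-1)(2m_2-1)$, which is precisely the paper's $\gamma$, your $\tfrac12$ weights on the $T_0$-indexed blocks are the paper's $T'_0=\tfrac12T_0$ convention, and your extraction claim ($\mathbf{C}_{d_{i,k}}=\tfrac12(\mathbf{A}\mathbf{B})_{i,k}$) is exactly Claim~\ref{cl:chebcoef}. The degree count and the collision analysis via the $``+"$ branch and the $``|a-b|"$ branch follow the paper's Appendix~\ref{app:proof} argument (its $\mathcal{S}_1$, $\mathcal{S}_2^{(1)}$, $\mathcal{S}_2^{(2)}$ sets), so although you leave the absolute-value case analysis as a flagged sketch, the approach is the same and correct; you do not need to verify that $\alpha,\beta$ are minimal, since the theorem only asserts achievability of the stated threshold.
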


Notice that the problem specified in Section \ref{sec:probfor2} restricts the output matrix of each worker node to be of dimension $N_1/m_1 \times N_3/m_3$, for some positive integers $m_1,m_3$ that divide $N_1,N_3$, respectively.  This is smaller than the dimensions of the output matrix of each worker node according to the problem specified in Section \ref{sec:probfor} (i.e., $N_1\times N_3$)  by a factor of $m_1m_3$. However, according to Theorem \ref{thm:genCheb}, \fbl{this communication advantage, when $m_1>1$ or $m_2>1$,  comes at the expense of a higher recovery threshold compared to OrthoMatDot Codes.} 

\begin{remark}[Notation]
For ease of exposition in the remaining of this section, we use $T^{'}_0,T^{'}_1,T^{'}_2, \cdots$ to denote $\frac{1}{2}T_0,T_1,T_2, \cdots$, respectively.
\end{remark}

In order to prove  Theorem \ref{thm:genCheb}, we first present a  code construction that achieves the recovery threshold in (\ref{eq:recgen}), then we prove that the presented code construction is valid. First, note that in \fbl{the Generalized OrthoMatDot} code construction, we assume that the two input matrices $\mathbf{A}, \mathbf{B}$ are split as in (\ref{eq:spltA2}).  Also, note that given this partitioning of input matrices, we can write $\mathbf{C}=\mathbf{A}\mathbf{B}$, where $\mathbf{C}$ is written as
\begin{align}
\mathbf{C}=    \left(\begin{array}{ccc}\mathbf{C}_{0,0}&  \cdots& \mathbf{C}_{0,m_3-1}\\\vdots& \ddots &\vdots\\\mathbf{C}_{m_1-1,0} &\cdots &\mathbf{C}_{m_1-1,m_3-1}\end{array}\right),
\end{align}
and each of $\mathbf{C}_{i,l}$ has dimension ${N_1}/{m_1}\times{N_3}/{m_3}$ and can be expressed as
$\mathbf{C}_{i,l}=\sum_{j=0}^{m_2-1} \mathbf{A}_{i,j}\mathbf{B}_{j,l},$
for any $i\in \{0, 1, \cdots, m_1-1\},$ and $l\in \{0,1, \cdots, m_3-1\}$.
Also, let $x_1, \cdots, x_P$ be distinct real numbers in the range $[-1,1]$, and define encoding polynomials \begin{align}\label{eq:genpolys}
p_\mathbf{A}(x)&=\sum_{i=0}^{m_1-1}\sum_{j=0}^{m_2-1}\mathbf{A}_{i,j} T^{'}_{m_2-1-j+i(2m_2-1)}(x), \notag\\ p_\mathbf{B}(x)&=\sum_{k=0}^{m_2-1}\sum_{l=0}^{m_3-1} \mathbf{B}_{k,l} T^{'}_{k+l(2m_1-1)(2m_2-1)}(x),
\end{align}
and let $p_\mathbf{C}(x)=p_\mathbf{A}(x)p_\mathbf{B}(x)$. Notice that $p_{\mathbf{C}}(x)$ is a polynomial matrix of  degree equals $\deg_\mathbf{C}:=4m_1m_2m_3-2(m_1m_2+m_2m_3+m_3m_1)$  $+m_1+2m_2+m_3-2$.

\begin{claim}\label{cl:chebcoef}
 For any $i\in \{0, 1, \cdots, m_1-1\}$ and $l\in \{0,1, \cdots, m_3-1\}$, $\frac{1}{2}\mathbf{C}_{i,l}$ is the matrix coefficient of $T_{m_2-1+i(2m_2-1)+l(2m_1-1)(2m_2-1)}$ in  $p_{\mathbf{C}}(x)$, 
\end{claim}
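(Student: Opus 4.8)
The plan is to expand $p_{\mathbf C}(x)=p_{\mathbf A}(x)p_{\mathbf B}(x)$ in the Chebyshev basis using the product-to-sum identity $T_aT_b=\tfrac12\big(T_{a+b}+T_{|a-b|}\big)$ (for $a,b\ge 0$, with $T_{-r}:=T_r$), and to track exactly which block products $\mathbf A_{i',j'}\mathbf B_{k,l'}$ land on the target index $d:=m_2-1+i(2m_2-1)+l(2m_1-1)(2m_2-1)$. Writing $\alpha_{i',j'}:=m_2-1-j'+i'(2m_2-1)$ and $\beta_{k,l'}:=k+l'(2m_1-1)(2m_2-1)$ for the Chebyshev indices attached in (\ref{eq:genpolys}) to $\mathbf A_{i',j'}$ and $\mathbf B_{k,l'}$, the term $\mathbf A_{i',j'}\mathbf B_{k,l'}\,T'_{\alpha_{i',j'}}(x)T'_{\beta_{k,l'}}(x)$ adds $\tfrac12\mathbf A_{i',j'}\mathbf B_{k,l'}$ to the coefficient of $T_{\alpha_{i',j'}+\beta_{k,l'}}$ (a ``sum'' contribution, always present) and, when both $\alpha_{i',j'}\ge1$ and $\beta_{k,l'}\ge1$, a further $\tfrac12\mathbf A_{i',j'}\mathbf B_{k,l'}$ to the coefficient of $T_{|\alpha_{i',j'}-\beta_{k,l'}|}$ (a ``difference'' contribution); when $\alpha_{i',j'}=0$ (forcing $(i',j')=(0,m_2-1)$) or $\beta_{k,l'}=0$ (forcing $(k,l')=(0,0)$) only the sum contribution appears, since $T'_0=\tfrac12T_0$ is constant. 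The design is rigged so that $\alpha_{i,j}+\beta_{j,l}=d$ for every $j\in\{0,\dots,m_2-1\}$, i.e.\ the $m_2$ summands of $\mathbf C_{i,l}=\sum_j\mathbf A_{i,j}\mathbf B_{j,l}$ all contribute to the coefficient of $T_d$ on the ``sum'' side; the work is to show nothing else does.

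First I would pin down the ``sum'' side: $\alpha_{i',j'}+\beta_{k,l'}=d$ is equivalent to
\begin{align}
(k-j')+(i'-i)(2m_2-1)+(l'-l)(2m_1-1)(2m_2-1)=0,\notag
\end{align}
and a mixed-radix argument closes it — since $|k-j'|\le m_2-1$ and $|i'-i|(2m_2-1)\le(m_1-1)(2m_2-1)$ sum to something strictly below $(2m_1-1)(2m_2-1)$, the top term forces $l'=l$, then $|k-j'|<2m_2-1$ forces $i'=i$, hence $k=j'$. So the only contributing quadruples are $(i,j,j,l)$, $j=0,\dots,m_2-1$, and each adds exactly $\tfrac12\mathbf A_{i,j}\mathbf B_{j,l}$ to the coefficient of $T_d$: this is immediate when $\alpha_{i,j},\beta_{j,l}\ge1$ are distinct; when they are equal ($\ge1$) the extra $\tfrac12T_0$ falls on the constant term, not on $T_d$; and when $\alpha_{i,j}=0$ or $\beta_{j,l}=0$ the product degenerates to $\tfrac12T_d$ as noted. (Here I use $d\ge1$, which holds unless $m_2=1$ and $i=l=0$; that single constant-term case is checked directly.) Summing over $j$ gives precisely $\tfrac12\mathbf C_{i,l}$.

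Next I would show the ``difference'' side is empty, i.e.\ no quadruple with $\alpha_{i',j'}\ge1$ and $\beta_{k,l'}\ge1$ has $|\alpha_{i',j'}-\beta_{k,l'}|=d$. If $\beta_{k,l'}>\alpha_{i',j'}$, then $\beta_{k,l'}-\alpha_{i',j'}=d$ rearranges to $(k+j')+(l'-l)(2m_1-1)(2m_2-1)-(i'+i)(2m_2-1)=2m_2-2$; using $0\le k+j'\le 2m_2-2$ and $i'+i\le 2m_1-2$, the left side is $\ge 2m_2-1$ if $l'>l$ and is $\le 2m_2-2$ if $l'\le l$ with equality only when $l'=l$, $i'=i=0$, $k=j'=m_2-1$ — but that forces $\alpha_{i',j'}=0$, a contradiction. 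If $\alpha_{i',j'}>\beta_{k,l'}$, then $\alpha_{i',j'}\le m_2-1+(m_1-1)(2m_2-1)<(2m_1-1)(2m_2-1)\le d$ whenever $l\ge1$, so assume $l=0$; then $\alpha_{i',j'}-\beta_{k,l'}=d$ rearranges to $(i'-i)(2m_2-1)=(j'+k)+l'(2m_1-1)(2m_2-1)$, whose right side is $\ge(2m_1-1)(2m_2-1)$ if $l'\ge1$ (impossible, the left side being smaller) and lies in $\{0,\dots,2m_2-2\}$ if $l'=0$, forcing the multiple of $2m_2-1$ on the left to vanish, i.e.\ $i'=i$ and $j'=k=0$ — but that forces $\beta_{k,l'}=0$, a contradiction. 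Hence only the $m_2$ sum contributions survive and the coefficient of $T_d$ in $p_{\mathbf C}(x)$ equals $\tfrac12\mathbf C_{i,l}$.

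The main obstacle is the second step. In the monomial-basis constructions of \cite{genPolyDot,entPoly} a product of two basis elements is again a single basis element, whereas here every product $T_aT_b$ with $a,b\ge1$ spawns an unwanted lower-degree ``echo'' $T_{|a-b|}$, and one must verify that the gaps built into the exponent schedules of $p_{\mathbf A}$ and $p_{\mathbf B}$ — the role of the moduli $2m_2-1$ and $(2m_1-1)(2m_2-1)$ — are wide enough that no echo can collide with $d$. The bookkeeping is tightest at the boundary blocks $\mathbf A_{0,m_2-1}$ and $\mathbf B_{0,0}$, where the very fact that $T'_0=\tfrac12T_0$ carries no echo is what rescues the argument; organizing the case analysis around exactly these degeneracies is the crux.
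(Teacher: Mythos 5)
Your proof takes essentially the same route as the paper's: you expand $p_{\mathbf A}p_{\mathbf B}$ via $T_aT_b=\tfrac12\bigl(T_{a+b}+T_{|a-b|}\bigr)$, use the mixed-radix comparison in the moduli $2m_2-1$ and $(2m_1-1)(2m_2-1)$ to show that the only sum-side contributions to $T_d$ are the $m_2$ diagonal quadruples $(i,j,j,l)$, and show that any difference-side collision with $T_d$ would force one of the two Chebyshev indices to be $0$, where $T'_0=\tfrac12T_0$ carries no echo. The paper packages the identical bookkeeping as an explicit $p_1/p_2$ split with the sets $\mathcal S_1,\mathcal S_2^{(1)},\mathcal S_2^{(2)}$; the inequalities and the conclusion are the same.

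One small but real error: your parenthetical asserts that the excluded corner case $m_2=1$, $i=l=0$ (so $d=0$) ``is checked directly,'' but if you actually check it the claim fails. In that case the only candidate quadruple is $(0,0,0,0)$ with $\alpha_{0,0}=\beta_{0,0}=0$, so $T'_0T'_0=\tfrac14T_0$ puts $\tfrac14\mathbf A_{0,0}\mathbf B_{0,0}$ — not $\tfrac12\mathbf C_{0,0}$ — on the constant coefficient. This is a pre-existing blind spot: the paper's proof of its Statement~1 also breaks silently for $m_2=1$, and the construction is clearly meant for $m_2\ge2$. Still, do not assert that a case is checked when it is not, especially one that, when worked out, falsifies the statement as written.
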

The proof of this claim is in Appendix \ref{app:proof}.

 We describe, next, the idea of our proposed \fbl{Generalized OrthoMatDot code construction}. First, for all $r \in [P]$, the master node sends to the $r$-th worker evaluations of $p_\mathbf{A}(x)$ and $p_\mathbf{B}(x)$ at $x=\rho^{(P)}_r$, that is, it sends $p_\mathbf{A}(\rho^{(P)}_r)$ and $p_\mathbf{B}(\rho^{(P)}_r)$ to the $r$-th worker. Next, for every $r \in [P]$, the $r$-th worker node computes the matrix product $p_\mathbf{C}(\rho^{(P)}_r)=p_{\mathbf{A}}(\rho^{(P)}_r)p_{\mathbf{B}}(\rho^{(P)}_r)$ and sends the result to the fusion node. Once the fusion node receives the output of any $\deg_\mathbf{C}+1$ worker nodes, it interpolates $p_\mathbf{C}(x)$.

 We formally present our \fbl{Generalized OrthoMatDot code construction} in Construction \ref{con:gencheb}.  \bl{In the following, we explain the notation used in Construction \ref{con:gencheb}.
 The output of the algorithm is the  $N_1\times N_3$ matrix $\hat{\mathbf{C}},$ where the $(k,l)$-th block  of $\hat{\mathbf{C}}$ is the $N_1/m_1 \times N_3/m_3$ matrix $\hat{\mathbf{C}}_{k,l}$, and   the $(i,j)$-th entry of any matrix $\hat{\mathbf{C}}_{k,l}$ is $\hat{{c}}_{k,l}^{(i,j)}$.  
 The $(i,j)$-th entry of  the matrix polynomial $p_{\mathbf{C}}(x)$ is denoted as  $p^{(i,j)}_{\mathbf{C}}(x)$, and 
 Section \ref{sec:notation} defines matrices $\mathbf{G}^{(\deg_\mathbf{C}+1,P)}(\boldsymbol{\rho}^{(P)})$
and $\mathbf{G}^{(\deg_\mathbf{C}+1,P)}_{\mathcal{R}}(\boldsymbol{\rho}^{(P)})$, for any subset $\mathcal{R}=\{{r_1}, \cdots, {r_{\deg_\mathbf{C}+1}}\}  \subset [P]$.
}



\begin{algorithm}
\caption{\fbl{Generalized OrthoMatDot}: \textbf{Inputs}: $\mathbf{A}, \mathbf{B}$,~~\textbf{Output}: $\hat{\mathbf{C}}$} \label{con:gencheb}
\begin{algorithmic}[1]
\Procedure{MasterNode}{$\mathbf{A},\mathbf{B}$}\Comment{The master node's procedure}
\State $r \gets 1$
\While{$r\not=P+1$}
\State $p_\mathbf{A}(\rho^{(P)}_r) \gets \sum_{i=0}^{m_1-1}\sum_{j=0}^{m_2-1}\mathbf{A}_{i,j} T^{'}_{m_2-1-j+i(2m_2-1)}(\rho^{(P)}_r)$
\State $p_\mathbf{B}(\rho^{(P)}_r)\gets\sum_{k=0}^{m_2-1}\sum_{l=0}^{m_3-1} \mathbf{B}_{k,l} T^{'}_{k+l(2m_1-1)(2m_2-1)}(\rho^{(P)}_r)$ 
\State \textbf{send} $p_{\mathbf{A}}(\rho^{(P)}_r), p_{\mathbf{B}}(\rho^{(P)}_r)$ \textbf{to worker node} $r$
\State $r \gets r+1$
\EndWhile
\EndProcedure
\State
\Procedure{WorkerNode}{$p_{\mathbf{A}}(\rho^{(P)}_r), p_{\mathbf{B}}(\rho^{(P)}_r)$}\Comment{The procedure of worker node $r$}
\State $p_\mathbf{C}(\rho^{(P)}_r) \gets p_{\mathbf{A}}(\rho^{(P)}_r) p_{\mathbf{B}}(\rho^{(P)}_r)$
\State \textbf{send} $p_{\mathbf{C}}(\rho^{(P)}_r)$ \textbf{to the fusion node}
\EndProcedure
\State
\Procedure{FusionNode}{$\{p_{\mathbf{C}}(\rho^{(P)}_{r_1}), \cdots, p_{\mathbf{C}}(\rho^{(P)}_{r_{\deg_\mathbf{C}+1}})\}$}\Comment{The fusion node's procedure, $r_i$'s are distinct}

\State  $\mathbf{G}_{\operatorname{inv}}\gets \left(\mathbf{G}^{(\deg_\mathbf{C}+1,P)}_{\mathcal{R}}\right)^{-1}$
\For{$i\in [N_1/m_1]$}
\For{$j\in [N_3/m_3]$}
\State $(c^{(i,j)}_0, \cdots, c^{(i,j)}_{\deg_\mathbf{C}}) \gets (p_{\mathbf{C}}^{(i,j)}(\rho^{(P)}_{r_1}), \cdots,p_{\mathbf{C}}^{(i,j)}(\rho^{(P)}_{r_{\deg_\mathbf{C}+1}}))\mathbf{G}_{\operatorname{inv}}$
\For{$k\in [m_1]$}
\For{$l\in [m_3]$}
\State $\hat{c}_{k,l}^{(i,j)} \gets 2 c^{(i,j)}_{m_2-1+(k-1)(2m_2-1)+(l-1)(2m_1-1)(2m_2-1)}$ 
\EndFor
\EndFor
\EndFor
\EndFor
\State \textbf{return} $\hat{\mathbf{C}}$
\EndProcedure
\end{algorithmic}
\end{algorithm}






\vspace{3mm}
Now, we  prove Theorem \ref{thm:genCheb}.
\begin{proof}[Proof of Theorem \ref{thm:genCheb}:]
To prove the theorem, it suffices to prove that Construction \ref{con:gencheb} is valid. Noting that $p_\mathbf{A}(x)p_\mathbf{B}(x)$ has degree $4m_1m_2m_3-2(m_1m_2+m_2m_3+m_3m_1)+m_1+2m_2+m_3-2$ and every worker node sends an evaluation of $p_\mathbf{A}(x)p_\mathbf{B}(x)$ at  a distinct point, once the fusion node receives the output of any $4m_1m_2m_3-2(m_1m_2+m_2m_3+m_3m_1)+m_1+2m_2+m_3-1$ worker node, it can interpolate $p_\mathbf{A}(x)p_\mathbf{B}(x)$ (i.e., obtain all its matrix coefficients). This includes the coefficients of $T_{m_2-1+i(2m_2-1)+l(2m_1-1)(2m_2-1)}$ for all  $i\in \{0, 1, \cdots, m_1-1\},$ and $l\in \{0,1, \cdots, m_3-1\}$, i.e., $\mathbf{C}_{i,l}$, for all  $i\in \{0, 1, \cdots, m_1-1\},$ and $l\in \{0,1, \cdots, m_3-1\}$ (Claim \ref{cl:chebcoef}), which completes the proof. 
\end{proof}

Next, we provide the different complexity analyses of \fbl{the Generalized OrthoMatDot Codes.}

\subsubsection{Complexity Analyses of \fbl{Generalized OrthoMatDot}}\label{sec:con3complexity}
\hspace{2mm}

\textbf{Encoding Complexity}: 
Encoding for each worker requires performing two additions, the first one adds $m_1m_2$ scaled matrices of size $N_1 N_2/ (m_1m_2)$ and the other adds $m_2 m_3$ scaled matrices of size $N_2 N_3 /(m_2m_3)$,  for an overall encoding complexity for each worker of $O(N_1 N_2+N_2N_3)$. Therefore, the overall computational complexity of encoding for $P$ workers is $O(N_1 N_2P+N_2N_3P)$.

\textbf{Computational Cost per Worker}: Each worker multiplies two matrices of  dimensions $N_1  /m_1\times N_2/m_2$ and $N_2/m_2\times N_3/m_3$, requiring $O(N_1N_2N_3/(m_1m_2m_3))$ operations.

\textbf{Decoding Complexity}:
Since $p_\mathbf{A}(x)p_\mathbf{B}(x)$ has degree $k-1:=4m_1m_2m_3- 2(m_1m_2+m_2m_3+m_3m_1)+m_1+2m_2+m_3-2$, the interpolation of $p_\mathbf{C}(x)$ requires the inversion of a $k \times k$ matrix, with complexity $O(k^3)=O(m_1^3m_2^3m_3^3)$, and performing $N_1N_3/(m_1m_3)$ matrix-vector multiplications, each of them is between the inverted  matrix and a  column vector of length $k$ of the received evaluations of the matrix polynomial $p_\mathbf{C}(x)$ at some position $(i,j) \in [N_1/m_1] \times [N_3/m_3]$, with complexity $O(N_1N_3k^2/(m_1m_3))=O(N_1N_3m_1m_2^2m_3)$.  Thus, assuming that $m_1,m_3 \ll N_1,N_3$, the overall decoding complexity is $O(N_1N_3m_1m_2^2m_3)=O(N_1N_3 m n)$.

\textbf{Communication Cost}:
The master node sends $O( N_1N_2P/(m_1m_2)+N_2N_3P/(m_2m_3))$ symbols, and the fusion node receives $O(N_1 N_3 m_2)$ symbols from the successful worker nodes. 

\vspace{2mm}
 
\begin{remark}
With the reasonable assumption that the dimensions of the input matrices $\mathbf{A},\mathbf{B}$ are large enough such that $N_1,N_2,N_3 \gg m_1,m_2,m_3,P$, we can conclude that the encoding and decoding costs at the master and fusion nodes, respectively, are negligible compared to the computation cost at each worker node.
\end{remark}

\begin{figure*}[!t]
    \centering
    \includegraphics[scale=0.45]{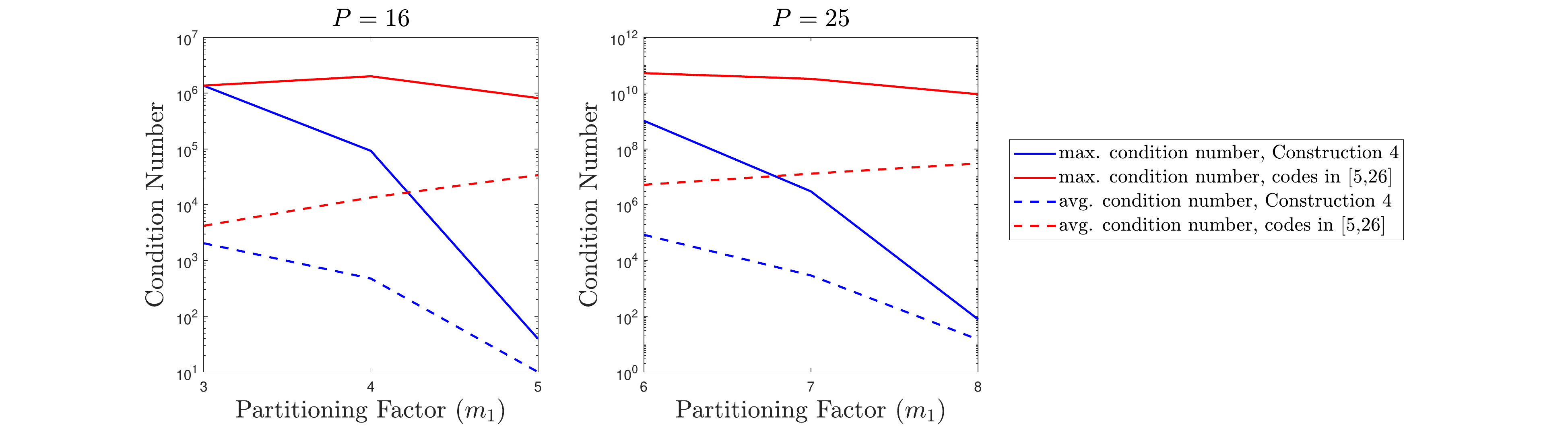}
    \caption{Comparison between the condition number of the interpolating matrix of \fbl{the Generalized OrthoMatDot Codes} and the monomial-based codes \cite{genPolyDot, entPoly} in two distributed systems, one with $16$ worker nodes and the other with $25$ worker nodes, at different partitioning factors $m_1$.}
    \label{fig:Pallgen}
  \noindent\rule{\textwidth}{0.7pt}
\end{figure*}
\subsection{Numerical Results}
\label{sec:numerical_polydot}
In our experiments on Construction \ref{con:gencheb}, we considered distributed systems with $P=16,25$ worker nodes. 
Fig. \ref{fig:Pallgen} shows that, for every examined system, the condition number of the interpolation matrix using the \fbl{Generalized OrthoMatDot} Codes is less than its   counterpart codes in \cite{genPolyDot,entPoly}. The results in Fig. \ref{fig:Pallgen} also show that, for the same system, as the partitioning factor $m_1$ decreases (i.e., as the redundancy in worker nodes increases), the stability of \fbl{the Generalized OrthoMatDot code construction} decreases; however, it is still better than the monomial-basis based codes in any cases.




\section{Numerically Stable Lagrange Coded Computing}\label{sec:lcc}
In this section, we study the numerical stability of Lagrange coded computing \cite{yu2018lagrange} that  lifts coded computing beyond matrix-vector and matrix-matrix multiplications, to multi-variate polynomial computations. As shown in \cite{yu2018lagrange}, Lagrange coded computing has applications in gradient coding, privacy and secrecy. Our main contribution here is to develop a numerically stable approach towards Lagrange coded computing inspired by our result of Theorem \ref{thm:bound}. In particular, our contribution involves (a) careful choice of evaluation points, and (b) a careful decoding algorithm that involves inversion of the appropriate Chebyshev Vandermonde matrix. We describe the system model in Section \ref{sec:lccsysprob}. We overview the Lagrange coded computing technique of \cite{yu2018lagrange} in Section  \ref{sec:Lagrange}. We describe our numerically stable approach in Section \ref{sec:Lagrange_stable}, and present the results of our numerical experiments  in Section \ref{sec:Lagrange_numerical}.

\subsection{System Model and Problem Formulation}\label{sec:lccsysprob}

We consider, for this section, the distributed computing framework depicted in Fig. \ref{fig:lccsys}, that is used in \cite{yu2018lagrange} and consists of a master node, $P$ worker nodes, and a fusion node where the only communication allowed  is from the master node to the different worker nodes and from the worker nodes to the fusion node.
The worker nodes have a prior knowledge of a polynomial function of interest $f: \mathbb{R}^d \rightarrow \mathbb{R}^v$ of degree $\operatorname{deg}(f)$, where $d,v\in\mathbb{N}^{+}$. In addition, the master node possesses a set of data points $\mathcal{X}=\{X_1, \cdots, X_m\}$, where $X_i \in \mathbb{R}^d$, $i \in [m]$. For every worker node $i \in [P]$, the master node is allowed to send some encoded vector $\tilde{X}_i(X_1,\cdots, X_m) \in \mathbb{R}^d$. Once a worker node receives the encoded vector on its input, it evaluates $f$ at this encoded vector and sends the evaluation to the fusion node. That is, for $i\in [P]$, worker node $i$ receives $\tilde{X}_i$ on its input, evaluates $f(\tilde{X}_i)$, then it sends the result to the fusion node. Finally, the fusion node is expected to numerically stably  decode the set of evaluations $\mathcal{F}=\{f(X_1), \cdots, f(X_m)\}$ after it receives the output of any $K$ worker nodes. 
\begin{figure}[t]
    \centering
    \includegraphics[scale=0.50]{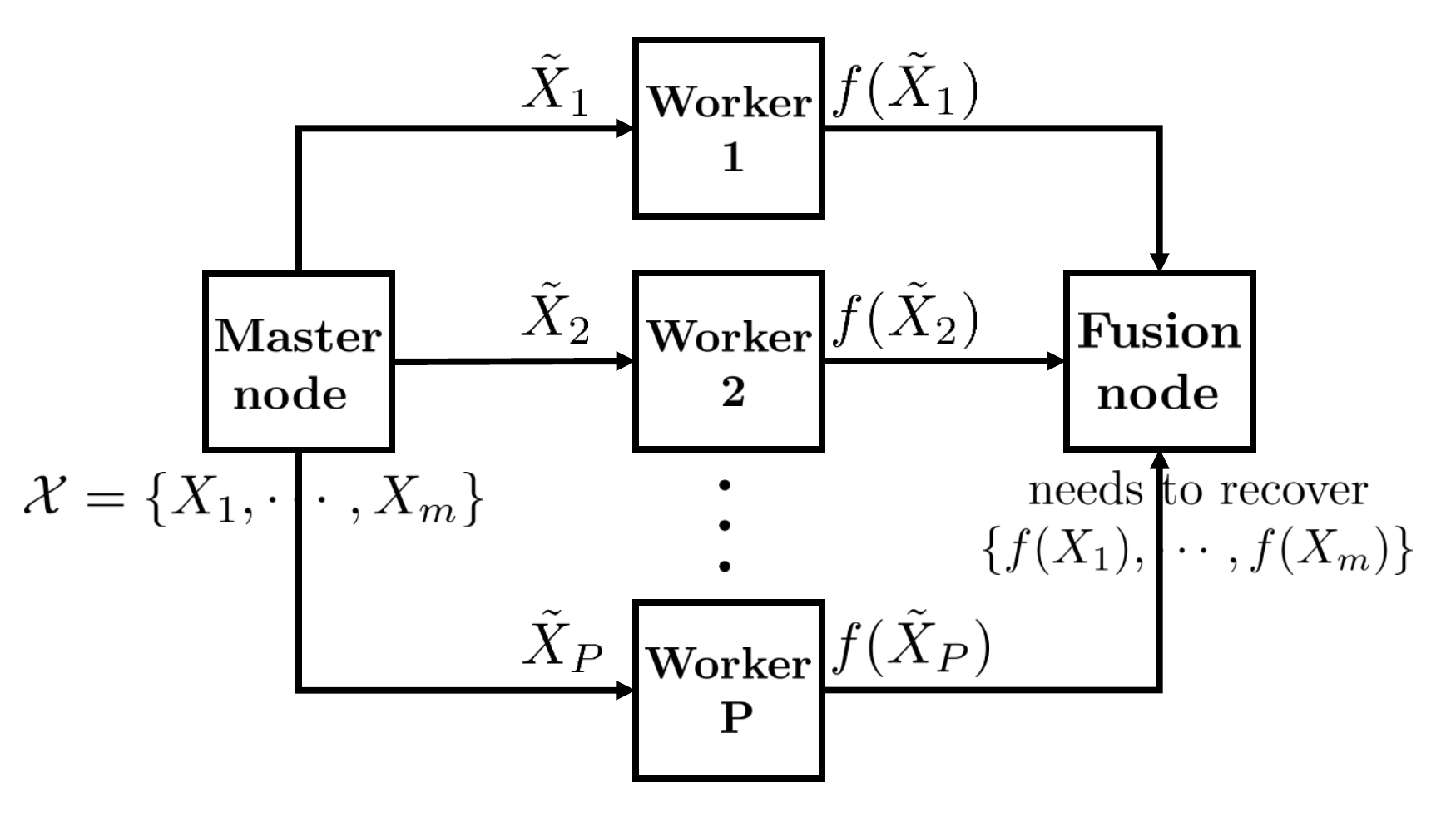}
    \caption{The Lagrange coded computing system framework}
    \label{fig:lccsys}
\end{figure}

\subsection{Background on Lagrange Coded Computing}
\label{sec:Lagrange}

In this section, we review the baseline Lagrange coded computing method introduced in \cite{yu2018lagrange} considering the framework in Section \ref{sec:lccsysprob}. Notice that although the method in \cite{yu2018lagrange} is more general, here, for simplicity, we limit our discussion to the \emph{systematic} Lagrange coded computing. That is, we assume that for $i\in [m]$, worker node $i$ receives the $i$-th data point from the master node. In other words, we assume that $\tilde{X}_i=X_i, i\in[m]$. Now, the encoding procedure goes as follows: First, let $x_1, \cdots, x_P$ be distinct real values, an encoding function $g(x)$ is defined as:
\begin{align}\label{eq:lagenc}
g(x)=\sum_{i=1}^{m} X_i \prod_{j\in[m]-i} \frac{x-x_j}{x_i-x_j}.
\end{align}
Given this encoding function, the master node sends the encoded vector $\tilde{X}_i=g(x_i)$ to the worker node $i$, for every $i \in [P]$. Notice that the encoding function $g(x)$ indeed leads to a systematic encoding since $\tilde{X}_i=g(x_i)=X_i,$ for all $i\in [m]$.  Every worker node $i$ computes $f(\tilde{X}_i)$ upon the reception of $\tilde{X}_i$, and sends the result to the fusion node. The fusion node waits till receiving the output of any $K:=(m-1)\deg(f)+1$. \bl{ Since $f(g(x))$ has degree $(m-1)\deg(f)$  in $x$, the fusion node is able to interpolate $f(g(x))$ after receiving the outputs of any $(m-1)\deg(f)+1$, i.e., $K$, worker nodes.  Since $g(x_i)=X_i, i\in [m]$, the fusion nodes evaluates $\{f(g(x_1)), \cdots, f(g(x_m))\}$ to obtain $\{f(X_1), \cdots, f(X_m)\}.$ }
\subsection{Numerically Stable Lagrange Coded Computing} 
\label{sec:Lagrange_stable}
\bl{Lagrange coded computing requires performing an interpolation at the fusion node to recover the polynomial $f(g(x))$. Performing the interpolation by obtaining the coefficients of the polynomial in a monomial basis requires inverting a square Vandermonde matrix which is numerically  unstable.  Noting that the first $\ell$ Cheybshev polynomials also forms a basis for degree $\ell-1$ polynomials, we provide an alternative decoding procedure whose key idea is to find the coefficients of polynomial $f(g(x))$ in the basis of Chebyshev polynomials. Thereby, our decoding procedure involves inverting the Chebyshev-Vandermonde matrix\footnote{Since both systematic and non-systematic Lagrange coded computing require the inversion of the same Chebyshev-Vandermonde matrix, our numerically stable decoding procedure in Construction \ref{con:Lag} naturally extends to non-systematic Lagrange coded computing, with the only difference is in the last step of evaluating $f(g(x))$ at $x_1, \cdots, x_m$, where in the non-systematic case, $f(g(x))$ is instead evaluated at some predefined values $y_1, \cdots, y_m$ such that $g(y_i)=X_i$ for all $i\in[m].$}. Guided by Theorem \ref{thm:bound}, we choose the evaluation points to be the $P$-point Chebyshev grid $\boldsymbol{\rho}^{(P)}$ to obtain a decoding procedure that is more stable than one that uses the monomial basis.}

Our numerically stable algorithm for Lagrange coded computing is formally described in Construction \ref{con:Lag}.
\bl{
In the following, we explain the notation used in Construction \ref{con:Lag}. 
We let the  polynomial at the $i$-th entry of $f(g(x))$ be denoted $f^{(i)}{(x)}$ and written as $f^{(i)}(x)=\sum_{l=0}^{K-1} c_l^{(i)} T_{l}(x)$. Following the notation in Section \ref{sec:notation}, we use the  {Chebyshev-Vandermonde} matrices $\mathbf{G}^{(K,P)}(\boldsymbol{\rho}^{(P)})$, and $\mathbf{G}^{(K,P)}_{\mathcal{R}}(\boldsymbol{\rho}^{(P)})$, for any subset $\mathcal{R}=\{{r_1}, \cdots, {r_{K}}\}  \subset [P]$, we also define the matrix $\mathbf{G}^{(K,P)}_{[m]}(\boldsymbol{\rho}^{(P)})$.} Finally, we assume that our construction returns as output the set of evaluations $\hat{\mathcal{F}}=$ $\{\hat{f}(X_1),$   $\cdots, \hat{f}(X_m)\}$, where for each $\hat{f}(X_i), i\in [m]$, we have $\hat{f}(X_i)= (\hat{f}^{(1)}(x_i), \cdots, \hat{f}^{(v)}(x_i))$, where for every $i\in[m], j \in [v], \hat{f}^{(j)}(x_i)$ and ${f}^{(j)}(x_i)$ would be  the same if the machine had infinite precision.  

In the following, we show through numerical experiments the stability of our proposed Construction \ref{con:Lag}.
\begin{algorithm}
\caption{Numerically Stable Lagrange Coded Computing \textbf{Inputs}:$f, \mathcal{X}=\{X_1, \cdots, X_m\}$ ,~~\textbf{Output}: $\hat{\mathcal{F}}=\{\hat{f}(X_1), \cdots, \hat{f}(X_m)\}$} \label{con:Lag}
\begin{algorithmic}[1]
\Procedure{MasterNode}{$\mathcal{X}$}\Comment{The master node's procedure}
\State $r \gets 1$
\While{$r\not=P+1$}
\If{$r \in [m]$}
\State $\tilde{X}_r \gets X_i$
\Else
\State $\tilde{X}_r \gets\sum_{i=1}^{m} X_i \prod_{j\in[m]-i} \frac{\rho^{(P)}_r-\rho^{(P)}_j}{\rho^{(P)}_i-\rho^{(P)}_j}$
\EndIf
\State \textbf{send} $\tilde{X}_r$ \textbf{to worker node} $r$
\State $r \gets r+1$
\EndWhile
\EndProcedure
\State
\Procedure{WorkerNode}{$f, \tilde{X}_r$}\Comment{The procedure of worker node $r$}
\State $\Out_r \gets f(\tilde{X}_r)$
\State \textbf{send} $\Out_r$ \textbf{to the fusion node}
\EndProcedure
\State
\Procedure{FusionNode}{$\Out_{r_1}, \cdots, \Out_{r_K}$}\Comment{The fusion node's procedure, $r_i$'s are distinct}
\State $\mathbf{G}_{\text{inv}} \gets \left(\mathbf{G}_{\mathcal{R}}^{(K,P)}\right)^{-1}$ 
\For{$i\in [v]$}
\State $(c^{(i)}_0, \cdots, c^{(i)}_{K-1}) \gets (\Out^{(i)}_{{r_1}}, \cdots,\Out^{(i)}_{{r_{K}}})\mathbf{G}_{\operatorname{inv}}$
\State $(\hat{f}^{(i)}(x_1), \cdots, \hat{f}^{(i)}(x_m))\gets (c^{(i)}_0, \cdots, c^{(i)}_{K-1}) \mathbf{G}^{(K,P)}_{[m]}$ 
\EndFor
\State \textbf{return} $\hat{\mathcal{F}}$
\EndProcedure
\end{algorithmic}
\end{algorithm}
\subsection{Numerical Results}
\label{sec:Lagrange_numerical}

In our experiments, we assume that we have a distributed system of $P$ worker nodes, $m=P-2$ data points/input vectors $X_1, \cdots, X_m$, each of them is of dimension $d=10$, where each entry of every input vector is picked independently, according to the standard  Gaussian distribution $\mathcal{N}(0,1)$.  The function of interest in this system is $f(X)=Y^TX$, where $Y$ is some $d$-dimensional vector with entries picked independently according to the standard Gaussian distribution $\mathcal{N}(0,1)$. In our experiments, we compare between Construction \ref{con:Lag}, where the Chebyshev basis is used for interpolation, and the case where the monomial basis is used for interpolation instead. Let $\hat{\mathbf{f}}=(\hat{f}(X_1) \cdots \hat{f}(X_m))$ be the system's output vector, and ${\mathbf{f}}=({f}(X_1) \cdots {f}(X_m))$ be the correct output vector, we define the  relative error between $\mathbf{f}$ and $\hat{\mathbf{f}}$  to be 
\begin{align}
    E_r(\mathbf{f},\hat{\mathbf{f}})=\frac{||\mathbf{f}-\hat{\mathbf{f}}||_2}{||\mathbf{f}||_2}.
\end{align}
\begin{figure*}[!t]
    \centering
    \includegraphics[scale=0.45]{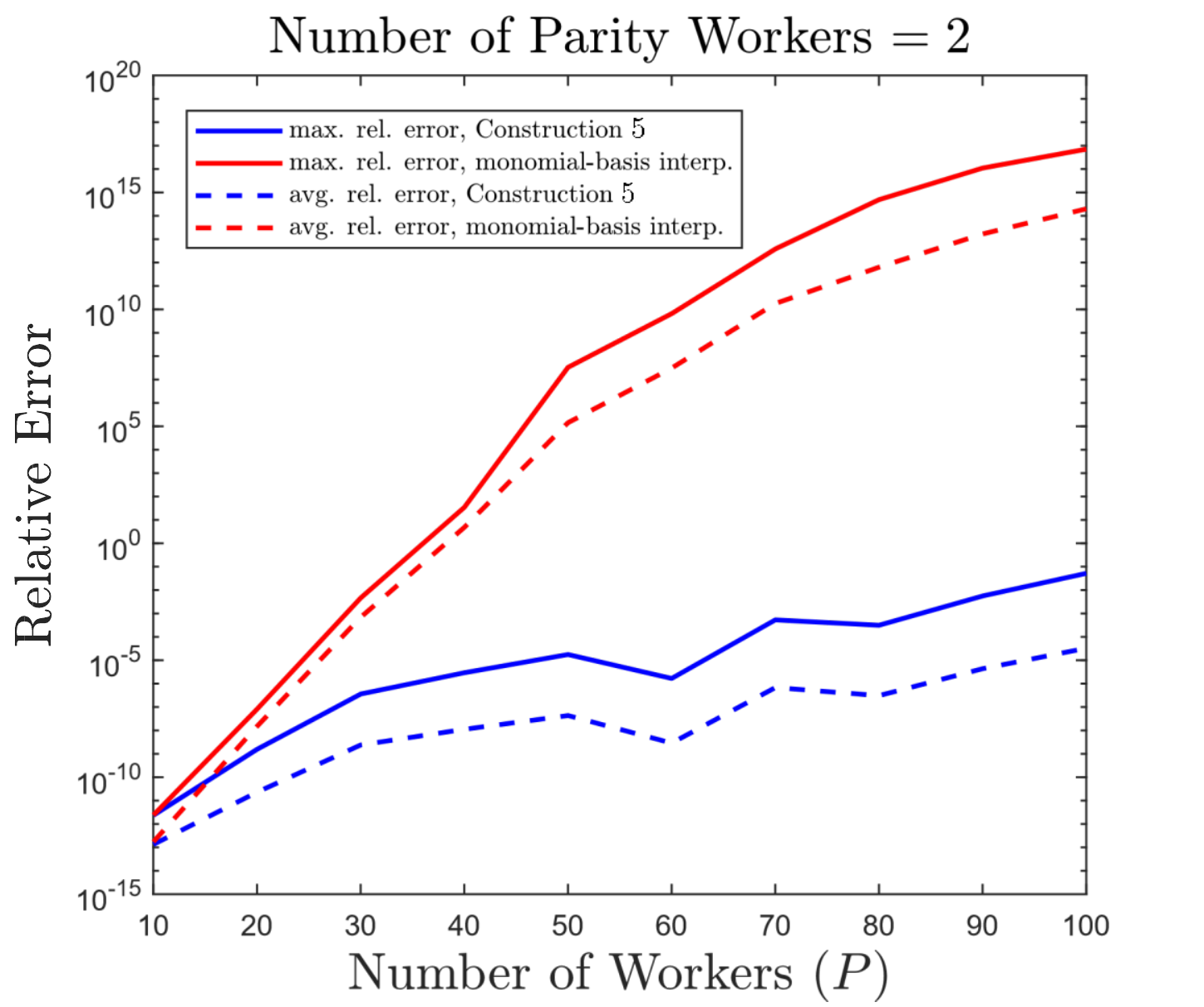}
    \caption{The growth of the relative error, for  Construction \ref{con:Lag}, using both Chebyshev basis interpolation and monomial basis interpolation, both using Chebyshev points, with the system size given a fixed number of redundant worker nodes equals 2.}
    \label{fig:lagfixpar}
  \rule{\textwidth}{0.7pt}
\end{figure*}
The results, shown in Fig. \ref{fig:lagfixpar}, illustrates that using the Chebyshev basis for interpolation provides less relative error/higher stability than the monomial basis at every system size. Fig. \ref{fig:lagfixpar} also shows that under a certain relative error constraint, Construction \ref{con:Lag} provides higher scalability than the monomial basis case. Specifically, let us assume that a relative error up to $0.1$ can be tolerated, Fig. \ref{fig:lagfixpar} shows that the monomial-basis interpolation construction can support systems with a number of worker nodes only less than $40$. However, for the same relative error constraint, Construction \ref{con:Lag} can support systems with a number of worker nodes up to $100$. 

\section{Concluding Remarks}\label{sec:conc}

In this paper, we develop numerically stable codes for matrix-matrix multiplication and Lagrange coded computing. A distinctive character of our work is the infusion of principles of numerical approximation theory into coded computing towards the end goal of numerical stability. In particular, our work is marked by the use of orthogonal polynomials for encoding, Gauss quadrature techniques for decoding and new bounds on the condition number of Chebyshev Vandermonde matrices. Notably, our constructions obtain the same recovery threshold as MatDot Codes and Polynomial Codes for matrix multiplication as well as for Lagrange Coded Computing. However, our construction in Section \ref{sec:polydot-chebyshev} obtains a weaker (higher) recovery threshold than previous constructions \cite{entPoly, genPolyDot} for the problem of coded matrix multiplication when the computation/communication cost is constrained to be lower than that of MatDot Codes. The search of numerically stable codes for this application with the same recovery threshold as \cite{entPoly, genPolyDot} remains open.  

While our paper focuses on applications where polynomial based encoding are particularly useful, our results might be useful for other applications as well. For instance, for the simple matrix-multiplication problem $\mathbf{A}\mathbf{x}$ performed in a distributed setting over $P$ worker nodes, where the goal is to encode $\mathbf{A}$ such that  each worker stores a partition $1/m$ of matrix $\mathbf{A},$ it is well known that MDS type codes can be used \cite{Huang_TC_84, lee2016speeding}. Specifically, let $\mathbf{A}=\begin{bmatrix}\mathbf{A}_{1} \\ \mathbf{A}_{2} \\ \vdots \\ \mathbf{A}_{m}\end{bmatrix}$ and let $\mathbf{H}=(h_{ij})$ be an $m \times P$ matrix where every $m \times m$ submatrix of $\mathbf{H}$ has a full rank of $m$. Then the $p$-th worker for $p \in \{1,2,\ldots,P\}$ can compute $\left(\sum_{i=1}^{m} {h}_{ip} \mathbf{A}_{i}\right)\mathbf{x};$ the product $\mathbf{A}\mathbf{x}$ can be recovered from any $m$ of the $P$ nodes. The instinctual, Reed-Solomon inspired solution of choosing $\mathbf{H}$ to be a Vandermode matrix  is ill-conditioned over real numbers. Note however that, unlike the matrix multiplication problem, the matrix $\mathbf{H}$ does not need to have a polynomial structure. Indeed, choosing $\mathbf{H}$ to be a random Gaussian matrix leads to well-conditioned solutions with high probability. In particular, the following result follows from elementary arguments that build on \cite{azais2004}.

\begin{theorem}
Let $\mathbf{H}$ be an $m \times P$ matrix, $P\geq m\geq 3$, and let the entries of $\mathbf{H}$ be independent and identically distributed standard Gaussian random variables. Then,    $$\operatorname{Pr}\Big(\kappa_2^{max}(\mathbf{H}) > m P^{2(P-m)}\big)\Big) <\frac{5.6}{P^{(P-m)}}.$$
\label{thm:iidG}
\end{theorem}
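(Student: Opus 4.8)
The plan is to reduce the statement about the worst-case condition number $\kappa_2^{max}(\mathbf{H})$ to a one-matrix tail bound via a union bound over column subsets. By definition $\kappa_2^{max}(\mathbf{H})=\max_{\mathcal S}\kappa_2(\mathbf{H}_{\mathcal S})$, the maximum ranging over the $\binom{P}{m}$ subsets $\mathcal S\subset\{1,\dots,P\}$ of size $m$, where $\mathbf{H}_{\mathcal S}$ is the $m\times m$ submatrix of $\mathbf{H}$ formed by the columns in $\mathcal S$. The elementary observation is that each $\mathbf{H}_{\mathcal S}$, being a selection of $m$ columns of a matrix with i.i.d.\ $\mathcal N(0,1)$ entries, is itself distributed as a generic $m\times m$ matrix $\mathbf{G}$ with i.i.d.\ standard Gaussian entries; hence for any threshold $t$,
\begin{align}
\operatorname{Pr}\!\big(\kappa_2^{max}(\mathbf{H})>t\big)\;\le\;\binom{P}{m}\,\operatorname{Pr}\!\big(\kappa_2(\mathbf{G})>t\big).\notag
\end{align}

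The second ingredient is a sharp tail bound for the $\ell_2$ condition number of a square Gaussian matrix. I would invoke the estimate of Aza\"{\i}s and Wschebor \cite{azais2004}: there is an explicit absolute constant $C_0$ --- the number $5.6$ in the statement being (an upper bound on) this constant --- such that for all $m\ge 3$ and all $\lambda\ge 1$,
\begin{align}
\operatorname{Pr}\!\big(\kappa_2(\mathbf{G})>m\lambda\big)\;\le\;\frac{C_0}{\lambda}.\notag
\end{align}
Before applying it, dispose of the degenerate case $P=m$, for which the claimed event has probability at most $1<5.6$ and there is nothing to prove; so assume $P>m$, in which case the value we will use, $\lambda=P^{2(P-m)}\ge P^2\ge 16$, lies comfortably inside the range of validity, so any moderate lower threshold on $\lambda$ in the cited bound is harmless. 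Taking $t=m P^{2(P-m)}$, i.e.\ $\lambda=P^{2(P-m)}$, gives $\operatorname{Pr}(\kappa_2(\mathbf{G})>t)\le C_0/P^{2(P-m)}$.

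It then remains to combine the two displays with the crude binomial estimate $\binom{P}{m}=\binom{P}{P-m}\le P^{P-m}/(P-m)!\le P^{P-m}$:
\begin{align}
\operatorname{Pr}\!\big(\kappa_2^{max}(\mathbf{H})>m P^{2(P-m)}\big)
&\le \binom{P}{m}\,\frac{C_0}{P^{2(P-m)}}\notag\\
&\le \frac{C_0\,P^{P-m}}{P^{2(P-m)}}=\frac{C_0}{P^{P-m}}\le\frac{5.6}{P^{P-m}}.\notag
\end{align}
The last inequality is strict: when $P-m\ge 2$ the factor $1/(P-m)!\le 1/2$ already supplies a full factor of two of slack, while when $P-m=1$ one uses that the union bound over the events $\{\kappa_2(\mathbf{H}_{\mathcal S})>t\}$ is strict, since distinct $\mathbf{H}_{\mathcal S}$ share columns and thus several of these events can occur simultaneously with positive probability.

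The only genuine work --- and the reason this is ``elementary arguments that build on \cite{azais2004}'' rather than an immediate corollary --- is pinning down the precise non-asymptotic form and explicit constant of the condition-number tail bound in \cite{azais2004} and verifying that it holds uniformly for all $m\ge 3$ (at argument $m\lambda$ with $\lambda\ge 16$), not merely asymptotically in $m$. If the reference states it only asymptotically, the remaining finitely many small cases $m=3,4,\dots$ would be handled directly by combining a lower tail bound for $\sigma_{\min}(\mathbf{G})$ (of Edelman/Szarek type, $\operatorname{Pr}(\sigma_{\min}(\mathbf{G})\le\epsilon/\sqrt m)\le\epsilon$) with an upper tail bound for $\sigma_{\max}(\mathbf{G})$ (Davidson--Szarek) and optimizing how the total failure probability splits between the two events --- essentially the computation behind the constant in \cite{azais2004}. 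I expect this constant-chasing, rather than anything structural, to be the main (and only) obstacle.
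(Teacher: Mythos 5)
Your proof follows essentially the same route as the paper's: a union bound over the $\binom{P}{m}$ column subsets, the Aza\"is--Wschebor tail bound $\operatorname{Pr}(\kappa_2(\mathbf{G})>m\alpha)<5.6/\alpha$ applied with $\alpha=P^{2(P-m)}$, and the crude estimate $\binom{P}{m}\le P^{P-m}$. Your additional care in disposing of the $P=m$ case (where the cited bound's hypothesis $\alpha>1$ would fail) and in noting the slack from $(P-m)!$ are small refinements the paper glosses over, but the argument is the same.
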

The theorem which is proved in Appendix \ref{app:iidG}, formally demonstrates that for a fixed number of redundant workers $s=P-m,$ the worst case condition number grows as $O(mP^{2s})$ with high probability.   However, the random Gaussian matrix approach has two drawbacks: (i) for a given realization of the random variables, it is difficult to verify whether it is well-conditioned, and (ii) the lack of structure could lead to more complex decoding. Our result of Theorem \ref{thm:bound} also indicates that choosing $\mathbf{H}=\mathbf{G}^{(m,P)}(\boldsymbol{\rho}^{(P)}),$ i.e., to be a Chebyshev Vandermonde matrix, naturally provides a well-conditioned solution to this problem. Another solution for the matrix-vector multiplication problem is provided in \cite{ramamoorthy2019universally} via universally decodable matrices \cite{ganesan2007existence}; in this work numerical stability is demonstrated empirically. 

It is, however, important to note that the problems resolved in our paper here are more restrictive since matrix multiplication codes - where both matrices are to be encoded so that the product can be recovered - require much more structure than matrix-multiplication where only one matrix is to be encoded. For instance, random Gaussian encoding does not naturally work for matrix multiplication to get a recovery threshold of $2m-1$, and it is not clear whether the solution of  \cite{ramamoorthy2019universally} is applicable either. The utility of Chebyshev-Vandermonde matrices for a variety of coded computing problems including matrix-vector multiplication, matrix multiplication and Lagrange coded computing motivates the study of low-complexity decoding and error correction mechanisms for these systems.

\bibliographystyle{IEEEtran}
\bibliography{IEEEabrv,sample}

\appendices
\section{Proof of Claim \ref{cl:orthAB}}\label{app:ortho}
We have,
\begin{align}\label{eq:clorth}
\int_{a}^{b} p_{\mathbf{A}}(x) p_{\mathbf{B}}(x) w(x) dx 
&=\int_{a}^{b} \left(\sum_{i=0}^{m-1} \mathbf{A}_i q_{i}(x) \right)\left(\sum_{j=0}^{m-1} \mathbf{B}_j q_{j}(x)\right) w(x) dx\notag\\
&=\int_{a}^{b} \sum_{i=0}^{m-1} \sum_{j=0}^{m-1} \mathbf{A}_i \mathbf{B}_j q_{i}(x)  q_{j}(x) w(x) dx\notag\\
&= \sum_{i=0}^{m-1} \sum_{j=0}^{m-1} \mathbf{A}_i \mathbf{B}_j\int_{a}^{b} q_{i}(x)  q_{j}(x) w(x) dx\notag\\
&=\sum_{i=0}^{m-1} \sum_{j=0}^{m-1} \mathbf{A}_i \mathbf{B}_j\ \langle q_{i},  q_{j}\rangle \notag\\
&=\sum_{i=0}^{m-1} \mathbf{A}_i \mathbf{B}_i\  \notag\\
&=\mathbf{A}\mathbf{B}.
\end{align}
In addition, noting that $p_\mathbf{A}(x)p_\mathbf{B}(x)$ (i.e., $p_\mathbf{C}(x)$) is of degree $2m-2$ (less than $2m$), Theorem \ref{thm:GQ} implies that 
\begin{align}\label{eq:clorth2}
\int_{a}^{b}   p_\mathbf{A}(x)p_\mathbf{B}(x) w(x) dx &= \sum_{r=1}^{m} a_r p_{\mathbf{A}}(\eta_r) p_{\mathbf{B}}(\eta_r)    \notag\\
&=\sum_{r=1}^{m} a_r p_{\mathbf{C}}(\eta_r). 
\end{align}
Finally, combining (\ref{eq:clorth}) and  (\ref{eq:clorth2}) completes the proof.\qedw

\section{Proof of Theorem \ref{thm:bound}}\label{app:proofgen}
We use the following trigonometric identity in our proof.

\begin{lemma}
For $n \geq 0$, let $x_i$ be chosen as (\ref{eq:Chebnodes}). Then
$\prod_{j \neq i} (x_i - x_j) = (-1)^{i-1} \frac{2^{1-n}n}{\sin(\frac{(2i-1) \pi}{2n})}$
\label{lem:trig}
\end{lemma}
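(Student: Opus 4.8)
The plan is to compute the product $\prod_{j \neq i}(x_i - x_j)$ by relating it to the derivative of the Chebyshev polynomial $T_n$ at its roots. Recall from Remark \ref{rmk:orth} that $T_n(x) = 2^{n-1}\prod_{k=1}^{n}(x - \rho_k^{(n)})$, where $\rho_k^{(n)} = x_k$ in the notation of the lemma. Differentiating this product form and evaluating at $x = x_i$, every term in the product rule vanishes except the one where the factor $(x - x_i)$ is differentiated, so $T_n'(x_i) = 2^{n-1}\prod_{j \neq i}(x_i - x_j)$. Hence it suffices to evaluate $T_n'(x_i)$ and rearrange, which will give $\prod_{j\neq i}(x_i-x_j) = 2^{1-n}T_n'(x_i)$.

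Next I would compute $T_n'(x_i)$ using the trigonometric representation $T_n(x) = \cos(n\arccos x)$ valid on $[-1,1]$ (Example \ref{ex:cheb}). Writing $x = \cos\theta$, we have $\frac{d}{dx}\cos(n\theta) = \frac{n\sin(n\theta)}{\sin\theta}$ by the chain rule (since $\frac{d\theta}{dx} = -1/\sin\theta$ and $\frac{d}{d\theta}\cos(n\theta) = -n\sin(n\theta)$). At $x_i = \cos\left(\frac{(2i-1)\pi}{2n}\right)$ we have $\theta_i = \frac{(2i-1)\pi}{2n}$, so $n\theta_i = \frac{(2i-1)\pi}{2}$, giving $\sin(n\theta_i) = \sin\left(\frac{(2i-1)\pi}{2}\right) = (-1)^{i-1}$. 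Therefore $T_n'(x_i) = \frac{n(-1)^{i-1}}{\sin\left(\frac{(2i-1)\pi}{2n}\right)}$.

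Combining the two steps yields $\prod_{j\neq i}(x_i - x_j) = 2^{1-n}T_n'(x_i) = (-1)^{i-1}\frac{2^{1-n}n}{\sin\left(\frac{(2i-1)\pi}{2n}\right)}$, which is exactly the claimed identity. There is no real obstacle here — the argument is a routine manipulation — but the one point requiring a modicum of care is the sign computation $\sin\left(\frac{(2i-1)\pi}{2}\right) = (-1)^{i-1}$, which one should verify holds for all integers $i \geq 1$ (at $i=1$ it is $\sin(\pi/2) = 1$; each increment of $i$ advances the argument by $\pi$, flipping the sign), and ensuring the domain restriction to $[-1,1]$ is in force so that the identity $T_n(x) = \cos(n\arccos x)$ applies and all $x_i$ lie in the open interval $(-1,1)$, which they do since $0 < \frac{(2i-1)\pi}{2n} < \pi$ for $i \in [n]$.
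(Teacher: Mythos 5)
Your proof is correct and follows essentially the same route as the paper: equate $T_n'(x_i)$ with $2^{n-1}\prod_{j\neq i}(x_i - x_j)$ via the factored form of $T_n$, then evaluate the derivative using the trigonometric representation $T_n(x)=\cos(n\arccos x)$ at the Chebyshev points. The only difference is that you carry out the sign evaluation $\sin\bigl(\tfrac{(2i-1)\pi}{2}\bigr)=(-1)^{i-1}$ explicitly, whereas the paper leaves that final substitution to the reader.
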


\begin{proof}
Note that $2^{n-1}\prod_{i=1}^{n}(x-x_i) = T_n(x) =  \cos(n\cos^{-1}(x))$. Therefore, 
$$2^{n-1}\prod_{j \neq i} (x_i - x_j) = T_n'(x_i) = \frac{n}{ \sqrt{1-x_i^2}} \sin(n \cos^{-1}(x_i)) $$
where $T_n'(x)$ denotes the derivative of $T_n(x).$ Using $x_i=\cos(\frac{(2i-1) \pi}{2n})$ above we get the desired result.
\end{proof}
\begin{proof}[Proof of Theorem \ref{thm:bound}:]
 {We  show that any square sub-matrix of $\mathbf{G}^{(n-s,n)}(\boldsymbol{\rho}^{(n)})$ formed by any $n-s$ columns of $\mathbf{G}^{(n-s,n)}(\boldsymbol{\rho}^{(n)})$ satisfies the bound stated in the theorem.} 
 {Let $\mathcal{S}$ be a subset of $[n]$ such that $|\mathcal{S}|=s$, for some $s<n$, and define $\mathbf{G}^{(n-s,n)}_{[n]-\mathcal{S}}(\boldsymbol{\rho}^{(n)})$ to be  the square $n-s \times n-s$ submatrix of $\mathbf{G}^{(n-s,n)}(\boldsymbol{\rho}^{(n)})$ after removing the columns with indices in $\mathcal{S}$. Recalling the structure of $\mathbf{G}^{(n-s,n)}(\boldsymbol{\rho}^{(n)})$ from (\ref{eq:GT}), we can write it as 
\begin{align}
  \mathbf{G}^{(n-s,n)}(\boldsymbol{\rho}^{(n)})= \left(\hspace{-2mm}\begin{array}{ccc} T_0(\rho^{(n)}_1)&  \cdots & T_{0}(\rho^{(n)}_n)\\\vdots & \ddots &\vdots \\ 
   T_{n-s-1}(\rho^{(n)}_{1})&\cdots& T_{n-s-1}(\rho^{(n)}_{n})\end{array}\hspace{-2mm}\right).\notag
\end{align}
Moreover, for any $\mathcal{S}\subset[n]$ such that $|\mathcal{S}|=s$, we can write 
\begin{align}
 \mathbf{G}_{[n]-\mathcal{S}}^{(n-s,n)}(\boldsymbol{\rho}^{(n)}):= \mathbf{G}_{\Gamma}^{(n-s,n)}:= \left(\hspace{-2mm}\begin{array}{ccc} T_0(\gamma_1)&  \cdots & T_{0}(\gamma_{n-s})\\\vdots & \ddots &\vdots \\ 
   T_{n-s-1}(\gamma_{1})&\cdots& T_{n-s-1}(\gamma_{n-s})\end{array}\hspace{-2mm}\right),\notag
\end{align}
 where ${\Gamma}=(\gamma_1, \gamma_2, \cdots, \gamma_{n-s})=(\rho^{(n)}_{g_1}, \rho^{(n)}_{g_2}, \cdots, \rho^{(n)}_{g_{n-s}})$, where $\{g_i\}_{i\in[n-s]}=[n]-\mathcal{S}$ and $g_1 < g_2 < \cdots < g_{n-s}$.}
 {Now, notice that $||\mathbf{G}_{\Gamma}^{(n-s,n)}||^2_{F}=\sum_{i=1}^{n-s}\sum_{j=1}^{n-s} |T_{i-1}(\gamma_j)|^2$, and $|T_i(\gamma_j)| \leq 1$ for any $i,j \in [n-s]$.  Therefore, we have 
\begin{align}\label{eq:GT2}
||\mathbf{G}_{\Gamma}^{(n-s,n)}||^2_{F} \leq (n-s)^2.
\end{align}
In the following, we obtain an upper bound on $||(\mathbf{G}_{\Gamma}^{(n-s,n)})^{-1}||_{F}$. Let $L_{\Gamma,k}$ be the $k$-th Lagrange polynomial associated with $\Gamma$, that is, 
\begin{align}\label{eq:kLag}
L_{\Gamma,k}(x)= \prod_{i\in[n-s]-\{k\}} \frac{x-\gamma_i}{\gamma_k-\gamma_i}     
\end{align}
Since $L_{\Gamma,k}(x)$ has a degree of $n-s-1$, it can be written in terms of the Chebyshev basis $T_{0}(x), \cdots, T_{n-s-1}(x)$ as 
\begin{align}\label{eq:kLagCh}
L_{\Gamma,k}(x)=\sum_{i=0}^{n-s-1} a_{i,k} T_{i}(x),    
\end{align}
for some real coefficients $a_{0,k}, \cdots, a_{n-s-1,k}$. Now, from (\ref{eq:kLag}), note the following property regarding $L_{\Gamma,k}(x)$:
\begin{align}
L_{\Gamma,k}(x)=\left\{\begin{array}{cl} 1, &\text{if } x=\gamma_k\\0, &\text{if } x\in\{\gamma_i\}_{i\in[n-s]-{k}}.\end{array}\right. \notag    
\end{align}
Using this property and observing (\ref{eq:kLagCh}), we conclude that, for any $j \in [n-s]$, $\sum_{i=0}^{n-s-1} a_{i,k} T_{i}(\gamma_j)=\delta(k-j)$. Therefore,
\begin{align}\left(
\begin{array}{ccc}
a_{0,1}    &\cdots & a_{n-s-1,1}  \\
\vdots    &\ddots & \vdots \\
a_{0,n-s}    & \cdots & a_{n-s-1,n-s}
\end{array}\right)    \mathbf{G}_{\Gamma}^{(n-s,n)}=\mathbf{I}_{n-s},\notag
\end{align}
where $\mathbf{I}_{n-s}$ is the $n-s\times n-s$ identity matrix. That is, 
\begin{align}
\left(\mathbf{G}_{\Gamma}^{(n-s,n)}\right)^{-1}=\left(
\begin{array}{ccc}
a_{0,1}    &\cdots & a_{n-s-1,1}  \\
\vdots    &\ddots & \vdots \\
a_{0,n-s}    & \cdots & a_{n-s-1,n-s}
\end{array}\right),
\end{align}
Therefore, 
\begin{align}\label{eq:inv1}
\Big| \Big|\left(\mathbf{G}_{\Gamma}^{(n-s,n)}\right)^{-1}\Big|\Big|_F^2=\sum_{i=1}^{n-s}\sum_{j=1}^{n-s} |a_{i-1,j}|^2.
\end{align}
In addition, we have that
\begin{align}\label{eq:inv2}
\sum_{k=1}^{n-s} \int_{-1}^{1} L^2_{\Gamma,k}(x) w(x )dx&=  \sum_{k=1}^{n-s} \int_{-1}^{1} \sum_{i=0}^{n-s-1}\sum_{j=0}^{n-s-1} a_{i,k}a_{j,k} T_{i}(x)T_{j}(x) w(x)dx\notag\\
&=\sum_{k=1}^{n-s}  \sum_{i=0}^{n-s-1}\sum_{j=0}^{n-s-1} a_{i,k}a_{j,k}\int_{-1}^{1} T_{i}(x)T_{j}(x) w(x)dx\notag\\
&=\sum_{k=1}^{n-s}  \sum_{i=0}^{n-s-1}\sum_{j=0}^{n-s-1} a_{i,k}a_{j,k} \langle T_{i}, T_{j}\rangle \notag\\
&=\sum_{k=1}^{n-s}  \sum_{i=0}^{n-s-1} |a_{i,k}|^2.
\end{align}
From (\ref{eq:inv1}) and (\ref{eq:inv2}), we conclude that 
$||(\mathbf{G}_{\Gamma}^{(n-s,n)})^{-1}||_F^2=\sum_{k=1}^{n-s} \int_{-1}^{1} L^2_{\Gamma,k}(x) w(x )dx$.}

 Now, we express the integral $\int_{-1}^{1} L^2_{\Gamma,k}(x) w(x )dx$ in the Gauss quadrature form using  the $n$ roots of $T_n(x):$ $\rho_1^{(n)}, \cdots, \rho_n^{(n)}$. Note that this is a ``trick'' we use in the proof - it is possible to use the Gauss quadrature formula over $n-s$ nodes to express the integral of the degree $2(n-s-1)$ polynomial $L^2_{\Gamma,k}(x)$. However, the use of $n$ nodes instead of $n-s$ nodes leads to simple tractable bound for $||(\mathbf{G}_{\Gamma}^{(n-s,n)})^{-1}||^2_F.$
Now, we can write
\begin{align}
    \int_{-1}^{1} L^2_{\Gamma,k}(x) w(x )dx=\sum_{i=1}^{n} c_i L_{\Gamma,k}^2(\rho_i^{(n)}),
\end{align}
for some constants $c_1, \cdots, c_n$. Moreover, $c_1, \cdots, c_n$ for the Chebyshev polynomials of the first kind are, in fact, all equal to $\pi /n$. Therefore, we have   
\begin{align}\label{eq:kthLSum}
    \int_{-1}^{1} L^2_{\Gamma,k}(x) w(x )dx=\frac{\pi}{n}\sum_{i=1}^{n}  L_{\Gamma,k}^2(\rho_i^{(n)}),
\end{align}
and, consequently, 
\begin{align}\label{eq:kthLSum2}
    \Big|\Big|\left(\mathbf{G}_{\Gamma}^{(n-s,n)}\right)^{-1}\Big|\Big|_F^2=\frac{\pi}{n}\sum_{k=1}^{n-s} \sum_{i=1}^{n} L^2_{\Gamma,k}(\rho_i^{(n)}).
\end{align}
Now, from (\ref{eq:kLag}), note that  $L_{\Gamma,k}(x)$ has the following evaluations
\begin{align}
L_{\Gamma,k}(\rho_{i}^{(n)})=\left\{\begin{array}{cl} 1, &\text{if } i=g_k\\0, &\text{if } i\in\{g_i\}_{i\in[n-s]-{k}}\\\prod_{j\in[n-s]-\{k\}} \frac{\rho_{i}^{(n)}-\gamma_j}{\gamma_k-\gamma_j}   , &\text{if } i\in \mathcal{S}.\end{array}\right. 
\end{align}
Therefore, (\ref{eq:kthLSum2}) can be written as 
\begin{align}\label{eq:kthLSum3}
    \Big|\Big|\left(\mathbf{G}_{\Gamma}^{(n-s,n)}\right)^{-1}\Big|\Big|_F^2&=\frac{\pi}{n}\sum_{k=1}^{n-s}\left( 1+\sum_{i\in \mathcal{S}} \prod_{j\in[n-s]-\{k\}} \left(\frac{\rho_{i}^{(n)}-\gamma_j}{\gamma_k-\gamma_j}\right)^2\right)\notag\\
    &=\frac{\pi(n-s)}{n}+\frac{\pi}{n}\sum_{k=1}^{n-s}\sum_{i\in \mathcal{S}} \prod_{j\in[n-s]-\{k\}} \left(\frac{\rho_{i}^{(n)}-\gamma_j}{\gamma_k-\gamma_j}\right)^2
\end{align}
In order to obtain our upper bound on $||(\mathbf{G}_{\Gamma}^{(n-s,n)})^{-1}||_F^2$, in the following, we get an upper bound on the term $\prod_{j\in[n-s]-\{k\}} \left(\frac{\rho_{i}^{(n)}-\gamma_j}{\gamma_k-\gamma_j}\right)^2$ in (\ref{eq:kthLSum3}). Notice that $\prod_{j\in[n-s]-\{k\}} \left(\frac{\rho_{i}^{(n)}-\gamma_j}{\gamma_k-\gamma_j}\right)^2$ can be written as 
\begin{align}\label{eq:product}
    \prod_{j\in[n-s]-\{k\}} \left(\frac{\rho_{i}^{(n)}-\gamma_j}{\gamma_k-\gamma_j}\right)^2&= \prod_{j\in[n-s]-\{k\}} \left(\frac{\rho_{i}^{(n)}-\rho_{g_j}^{(n)}}{\rho_{g_k}^{(n)}-\rho^{(n)}_{g_j}}\right)^2\notag\\
    &=\left[\prod_{j\in[n-s]-\{k\}} \left(\frac{\rho_{i}^{(n)}-\rho_{g_j}^{(n)}}{\rho_{g_k}^{(n)}-\rho^{(n)}_{g_j}}\right)^2\right] \frac{\prod_{j\in \mathcal{S}\cup \{g_k\}-\{i\}}\left(\rho_{i}^{(n)}-\rho_{j}^{(n)}\right)^2}{\prod_{j\in \mathcal{S}\cup \{g_k\}-\{i\}}\left(\rho_{i}^{(n)}-\rho^{(n)}_{j}\right)^2}\notag\\&=\frac{\prod_{j\in[n]-\{i\}}\left(\rho_{i}^{(n)}-\rho_{j}^{(n)}\right)^2}{\prod_{j\in[n-s]-\{k\}}\left(\rho_{g_k}^{(n)}-\rho^{(n)}_{g_j}\right)^2\prod_{j\in \mathcal{S}\cup\{g_k\}-\{i\}}\left(\rho_{i}^{(n)}-\rho^{(n)}_{j}\right)^2}\notag\\
    &\stackrel{}{=}\frac{ \left(2^{1-n}n / {\sin(\frac{(2i-1) \pi}{2n})}\right)^2}{\prod_{j\in[n-s]-\{k\}}\left(\rho_{g_k}^{(n)}-\rho^{(n)}_{g_j}\right)^2\prod_{j\in \mathcal{S}\cup\{g_k\}-\{i\}}\left(\rho_{i}^{(n)}-\rho^{(n)}_{j}\right)^2},
\end{align}
where the last equality follows from Lemma \ref{lem:trig}. Moreover, the product  $\prod_{j\in[n-s]-\{k\}}\left(\rho_{g_k}^{(n)}-\rho^{(n)}_{g_j}\right)^2$ in (\ref{eq:product}) can be written as 
\begin{align}\label{eq:product2}
    \prod_{j\in[n-s]-\{k\}}\left(\rho_{g_k}^{(n)}-\rho^{(n)}_{g_j}\right)^2&=
   \frac{ \prod_{j\in[n-s]-\{k\}}\left(\rho_{g_k}^{(n)}-\rho^{(n)}_{g_j}\right)^2\prod_{j\in \mathcal{S}} \left(\rho^{(n)}_{g_k}-\rho^{(n)}_{j}\right)^2}{\prod_{j\in \mathcal{S}} \left(\rho^{(n)}_{g_k}-\rho^{(n)}_{j}\right)^2}\notag\\
   &= \frac{ \left(2^{1-n}n / {\sin(\frac{(2g_k-1) \pi}{2n})}\right)^2}{\prod_{j\in \mathcal{S}} \left(\rho^{(n)}_{g_k}-\rho^{(n)}_{j}\right)^2},
\end{align}
where the last equality follows from Lemma \ref{lem:trig}. Now, substituting from (\ref{eq:product2}) in (\ref{eq:product}) yields
\begin{align}\label{eq:product3}
    \prod_{j\in[n-s]-\{k\}} \left(\frac{\rho_{i}^{(n)}-\gamma_j}{\gamma_k-\gamma_j}\right)^2&= \frac{ \left({\sin(\frac{(2g_k-1) \pi}{2n})}\right)^2}{\left({\sin(\frac{(2i-1) \pi}{2n})}\right)^2}\frac{\prod_{j\in \mathcal{S}} \left(\rho^{(n)}_{g_k}-\rho^{(n)}_{j}\right)^2}{\prod_{j\in \mathcal{S}\cup\{g_k\}-\{i\}}\left(\rho_{i}^{(n)}-\rho^{(n)}_{j}\right)^2}\notag\\
    &= \frac{ \left({\sin(\frac{(2g_k-1) \pi}{2n})}\right)^2}{\left({\sin(\frac{(2i-1) \pi}{2n})}\right)^2}\frac{\prod_{j\in \mathcal{S}-\{i\}} \left(\rho^{(n)}_{g_k}-\rho^{(n)}_{j}\right)^2}{\prod_{j\in \mathcal{S}-\{i\}}\left(\rho_{i}^{(n)}-\rho^{(n)}_{j}\right)^2},\notag\\
    &\leq\frac{ \left({\sin(\frac{(2g_k-1) \pi}{2n})}\right)^2}{\left({\sin(\frac{(2i-1) \pi}{2n})}\right)^2}\left[\frac{\max_{j\in \mathcal{S}-\{i\}} \left(\rho^{(n)}_{g_k}-\rho^{(n)}_{j}\right)^2}{\min_{j\in \mathcal{S}-\{i\}}\left(\rho_{i}^{(n)}-\rho^{(n)}_{j}\right)^2}\right]^{s-1}\notag\\
    &= \frac{ 1}{\left({\sin(\frac{(2i-1) \pi}{2n})}\right)^2}\left[\frac{4}{ \left(\cos(\frac{\pi}{2n})-\cos(\frac{3\pi}{2n})\right)^2}\right]^{s-1}\notag\\
    &=\frac{4^{s-1}}{\left({\sin(\frac{(2i-1) \pi}{2n})}\right)^2 \left(\cos(\frac{\pi}{2n})-\cos(\frac{3\pi}{2n})\right)^{2(s-1)}}\notag\\
    &=O(4^{s-1}n^{2+4(s-1)}). 
\end{align}
Using (\ref{eq:product3}) in (\ref{eq:kthLSum3}), we conclude that 
\begin{align}\label{eq:invG}
    \Big|\Big|\left(\mathbf{G}_{\Gamma}^{(n-s,n)}\right)^{-1}\Big|\Big|_F^2=O(4^{s-1}(n-s)sn^{1+4(s-1)}).
\end{align}
Finally, combining (\ref{eq:GT2}) and (\ref{eq:invG}), we conclude that $$\kappa^{max}_F(\mathbf{G}^{(n-s,n)}(\boldsymbol{\rho}^{(n)})) = O\left( (n-s)\sqrt{ns(n-s)}\left({2}n^2\right)^{s-1}\right).$$
\end{proof}
\section{Proof of Claim \ref{cl:chebcoef}}\label{app:proof}
Let $\alpha=2m_2-1, \gamma=\alpha (2m_1-1)$. $p_{\mathbf{A}}(x)$ in (\ref{eq:genpolys}) can be written as
\begin{align}
  p_\mathbf{A}(x)&=\sum_{i=0}^{m_1-1}\sum_{j=0}^{m_2-1}\mathbf{A}_{i,j} T^{'}_{m_2-1-j+i\alpha}(x) \notag\\
  &=\notag  \sum_{j=0}^{m_2-2}\mathbf{A}_{0,j} T_{m_2-1-j}(x)+ 1/2~ \mathbf{A}_{0,m_2-1} T_{0}(x)+ \sum_{i=1}^{m_1-1}\sum_{j=0}^{m_2-1}\mathbf{A}_{i,j} T_{m_2-1-j+i\alpha}(x)
\end{align}
Similarly, $p_{\mathbf{B}}(x)$ in (\ref{eq:genpolys}) can be written as
\begin{align}
    p_\mathbf{B}(x)&=\sum_{k=0}^{m_2-1}\sum_{l=0}^{m_3-1} \mathbf{B}_{k,l} T^{'}_{k+l\gamma}(x)\notag\\
    &= 1/2~\mathbf{B}_{0,0} T_0(x) + \sum_{k=1}^{m_2-1} \mathbf{B}_{k,0} T_{k}(x)+\sum_{k=0}^{m_2-1}\sum_{l=1}^{m_3-1} \mathbf{B}_{k,l} T_{k+l\gamma}(x)
\end{align}
Now, the product $p_{\mathbf{A}}(x)p_{\mathbf{B}}(x)$ can be written as 
\begin{align}
    p_\mathbf{A}&(x)p_\mathbf{B}(x)=\frac{1}{2}\big(p_1(x)+p_2(x)\big)
\end{align}
where, 
\begin{align}
p_1(x)=& \sum_{j=0}^{m_2-2}\mathbf{A}_{0,j} \mathbf{B}_{0,0}T_{m_2-1-j}(x)+ \frac{1}{2} \mathbf{A}_{0,m_2-1}\mathbf{B}_{0,0} T_0(x)+\sum_{i=1}^{m_1-1}\sum_{j=0}^{m_2-1}\mathbf{A}_{i,j}\mathbf{B}_{0,0} T_{m_2-1-j+i\alpha}(x)\notag\\
    &+  \sum_{j=0}^{m_2-2}\sum_{k=1}^{m_2-1}\mathbf{A}_{0,j}  \mathbf{B}_{k,0} T_{m_2-1-j+k}(x)+ \sum_{k=1}^{m_2-1}\mathbf{A}_{0,m_2-1}  \mathbf{B}_{k,0} T_{k}(x)\notag\\
    &+\sum_{i=1}^{m_1-1}\sum_{j=0}^{m_2-1}\sum_{k=1}^{m_2-1}\mathbf{A}_{i,j}  \mathbf{B}_{k,0} T_{m_2-1-j+i\alpha+k}(x)+\sum_{j=0}^{m_2-2}\sum_{k=0}^{m_2-1}\sum_{l=1}^{m_3-1}\mathbf{A}_{0,j} \mathbf{B}_{k,l} T_{m_2-1-j+k+l\gamma}(x)\notag\\
    &+\ \sum_{k=0}^{m_2-1}\sum_{l=1}^{m_3-1}\mathbf{A}_{0,m_2-1} \mathbf{B}_{k,l} T_{k+l\gamma}(x)+\sum_{i=1}^{m_1-1}\sum_{j=0}^{m_2-1}\sum_{k=0}^{m_2-1}\sum_{l=1}^{m_3-1}\mathbf{A}_{i,j}  \mathbf{B}_{k,l} T_{m_2-1-j+i\alpha+k+l\gamma}(x)\notag\\
\end{align}
and,
\begin{align}\label{eq:p2}
&p_2(x)=     \sum_{j=0}^{m_2-2}\sum_{k=1}^{m_2-1}\mathbf{A}_{0,j}\mathbf{B}_{k,0} T_{|m_2-1-j-k|}(x) + \sum_{i=1}^{m_1-1}\sum_{j=0}^{m_2-1}\sum_{k=1}^{m_2-1}\mathbf{A}_{i,j}\mathbf{B}_{k,0}  T_{|m_2-1-j+i\alpha-k|}(x)\notag\\
    &+ \sum_{j=0}^{m_2-2}\sum_{k=0}^{m_2-1}\sum_{l=1}^{m_3-1}\mathbf{A}_{0,j}\mathbf{B}_{k,l} T_{|m_2-1-j-k-l\gamma|}(x) + \sum_{i=1}^{m_1-1}\sum_{j=0}^{m_2-1}\sum_{k=0}^{m_2-1}\sum_{l=1}^{m_3-1} \mathbf{A}_{i,j}\mathbf{B}_{k,l} T_{|m_2-1-j+i\alpha-k-l\gamma|}(x).
\end{align}
Now, in order to prove the claim, it suffices to prove the following two statements:
\begin{enumerate}
    \item For any $i\in\{0, \cdots, m_1-1\}, l\in \{0, \cdots, m_3-1\}$, $\mathbf{C}_{i,l}$ is the matrix coefficient of $T_{m_2-1+i\alpha+l\gamma}$ in $p_1(x)$.
    \item For any $i\in\{0, \cdots, m_1-1\}, l\in \{0, \cdots, m_3-1\}$, the matrix coefficient of $T_{m_2-1+i\alpha+l\gamma}$ in $p_{2}(x)$ is $\mathbf{0}_{{N_1}/{m_1}\times{N_3}/{m_3}}$, where  $\mathbf{0}_{{N_1}/{m_1}\times{N_3}/{m_3}}$ is the ${N_1}/{m_1}\times{N_3}/{m_3}$ all zeros matrix. 
\end{enumerate}
\vspace{2mm}
In the following, we prove that statement 1) is true. In order to find the coefficient of $T_{m_2-1+i\alpha+l\gamma}$ in $p_1(x)$, we find the set $\mathcal{S}_1=\{(i',j',k',l'): m_2-1-j'+i'\alpha+k'+l'\gamma=m_2-1+i\alpha+l\gamma\}$. Rewriting $m_2-1-j'+i'\alpha+k'+l'\gamma=m_2-1+i\alpha+l\gamma$, we have
\begin{align}\label{eqn:coef}
    (k'-j')+(i'-i) \alpha+(l'-l)\gamma=0.
\end{align}
(\ref{eqn:coef}) implies that $l'=l$. Suppose $l'\neq l$, this means  that $(k'-j')+(i'-i)\alpha= c \gamma$ for some integer $c$. However, this is a contradiction since $|(k'-j')+(i'-i)\alpha| < \gamma$, for any $i,i',j',k'$. Now, (\ref{eqn:coef}) can be written as 
\begin{align}\label{eqn:coef2}
    (k'-j')+(i'-i) \alpha=0.
\end{align}
Again, (\ref{eqn:coef2}) implies $i'=i$. Suppose $i'\neq i$, this means $k'-j'=c\alpha$, for some integer $c$. However, this is a contradiction since $|k'-j'|<\alpha$. Now, since $i'=i$, (\ref{eqn:coef2}) implies $j'=k'$. Thus, $\mathcal{S}_1=\{(i,j',j',k): j'\in\{0, \cdots, m_2-1\}\}$. That is, for any $i\in\{0, \cdots, m_1-1\}, j\in \{0, \cdots, m_3-1\}$,  the matrix coefficient of $T_{m_2-1+i\alpha+l\gamma}$ in $p_1(x)$ is $\sum_{j'=0}^{m_2-1} \mathbf{A}_{i,j'}\mathbf{B}_{j',l} =\mathbf{C}_{i,l}$.

Now, it remains to prove statement 2). That is,  for any $i\in\{0, \cdots, m_1-1\}, l\in \{0, \cdots, m_3-1\}$, the matrix coefficient of $T_{m_2-1+i\alpha+l\gamma}$ in $p_{2}(x)$ is $\mathbf{0}_{{N_1}/{m_1}\times{N_3}/{m_3}}$. In order to find the coefficient of $T_{m_2-1+i\alpha+l\gamma}$ in $p_2(x)$, we find the sets $\mathcal{S}_2^{(1)}=\{(i',j',k',l'): m_2-1-j'+i'\alpha-k'-l'\gamma=m_2-1+i\alpha+l\gamma\}$, and $\mathcal{S}_2^{(2)}=\{(i',j',k',l'): -m_2+1+j'-i'\alpha+k'+l'\gamma=m_2-1+i\alpha+l\gamma\}$.

First, for the set $\mathcal{S}_2^{(1)}$, rewriting $m_2-1-j'+i'\alpha-k'-l'\gamma=m_2-1+i\alpha+l\gamma$, we get
\begin{align}\label{eq:coef3}
(-j'-k')+(i'-i)\alpha+(l+l')\gamma=0.    
\end{align}
From (\ref{eq:coef3}), we conclude that 
$l+l'=0$. Otherwise, $(-j'-k')+(i'-i)\alpha=c \gamma$, for some integer $c$, a contradiction since $|(-j'-k')+(i'-i)\alpha| < \gamma$. Since $l+l'=0$ and both $l,l'$ are non-negative, we conclude that $l'=l=0$. Moreover, now (\ref{eq:coef3}) reduces to 
\begin{align}\label{eq:coef4}
(-j'-k')+(i'-i)\alpha=0.    
\end{align}
Again, since $|-j'-k'| < \alpha$, we conclude that $i'=i$, which implies that $j'+k'=0$.  Since $j'+k'=0$ and both $j',k'$ are non-negative, we conclude that $j'=k'=0$. Thus, $\mathcal{S}_2^{(1)}=\{(i,0,0,0)\}$. Now, noticing from (\ref{eq:p2}) that $\mathbf{A}_{i,0}\mathbf{B}_{0,0}$ does not contribute to any term in $p_2(x)$, we conclude that the matrix coefficient of  $T_{m_2-1+i\alpha+l\gamma}$ in $p_{2}(x)$ is only due to the  set $\mathcal{S}_2^{(2)}$.  Recall that $\mathcal{S}_2^{(2)}=\{(i',j',k',l'): -m_2+1+j'-i'\alpha+k'+l'\gamma=m_2-1+i\alpha+l\gamma\}$, we rewrite $ -m_2+1+j'-i'\alpha+k'+l'\gamma=m_2-1+i\alpha+l\gamma$ as 
\begin{align}\label{eq:coef5}
    (j'+k'-2m_2+2)-(i'+i)\alpha+(l'-l)\gamma=0
\end{align}
From (\ref{eq:coef5}), we conclude that $l=l'$. Otherwise, $(j'+k'-2m_2+2)-(i'+i)\alpha=c \gamma$, for some integer $c$, a contradiction since $|(j'+k'-2m_2+2)-(i'+i)\alpha| < \gamma$. Moreover, now (\ref{eq:coef5}) reduces to 
\begin{align}\label{eq:coef6}
(j'+k'-2m_2+2)+(i'+i)\alpha=0.    
\end{align}
Again, since $|j'+k'-2m_2+2| < \alpha$, we conclude that $i'+i=0$.  Since $i'+i=0$ and both $i,i'$ are non-negative, we conclude that $i'=i=0$, which implies that $j'+k'=2m_2-2$.  Since $j'+k'=2m_2-2$ and both $j',k'\leq m_2-1$, we conclude that $j'=k'=m_2-1$. Thus, $\mathcal{S}_2^{(2)}=\{(0,m_2-1,m_2-1,l)\}$. Now, noticing from (\ref{eq:p2}) that $\mathbf{A}_{0,m_2-1}\mathbf{B}_{m_2-1,l}$ does not contribute to any term in $p_2(x)$, we conclude that the matrix coefficient of  $T_{m_2-1+i\alpha+l\gamma}$ in $p_{2}(x)$ is $\mathbf{0}_{{N_1}/{m_1}\times{N_3}/{m_3}}$.\qedw

\section{Upper Bound on the Condition Number of Gaussian Matrices}\label{app:iidG}
We first introduce the following theorem from \cite{azais2004}.
\begin{theorem}\label{thm:azais}
Let $\mathbf{A}$ be an $m\times m$ matrix, $m\geq 3$, and let the entries of $\mathbf{A}$ be independent and identically distributed standard Gaussian random variables. Then,  for all $\alpha > 1$, $$\operatorname{Pr}(\kappa_2(\mathbf{A})>m\alpha)<\frac{5.6}{\alpha},$$ where $\kappa_2(\mathbf{A})$ is the condition number of $\mathbf{A}$ with respect to the matrix norm induced by $\ell_2$.
\end{theorem}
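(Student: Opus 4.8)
\textbf{Proof proposal for Theorem~\ref{thm:azais}.}
The plan is to control the two extreme singular values of $\mathbf{A}$ separately and then recombine, using that $\kappa_2(\mathbf{A}) = \sigma_{\max}(\mathbf{A})/\sigma_{\min}(\mathbf{A})$. For any splitting level $s>0$,
\[
\Pr\big(\kappa_2(\mathbf{A}) > m\alpha\big) \;\le\; \Pr\big(\sigma_{\max}(\mathbf{A}) > s\big) \;+\; \Pr\!\Big(\sigma_{\min}(\mathbf{A}) < \tfrac{s}{m\alpha}\Big),
\]
and I would take $s$ of order $\sqrt{m}$, tuned to mildly grow in $\alpha$ so that the two terms balance. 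It is worth noting up front that the claim is only nontrivial when $\alpha > 5.6$ (otherwise $5.6/\alpha \ge 1$), which leaves comfortable slack in one of the two terms.

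For the upper tail of $\sigma_{\max}$, the map $\mathbf{A}\mapsto\sigma_{\max}(\mathbf{A})$ is $1$-Lipschitz in the $m^2$ entries with respect to the Euclidean (Frobenius) norm, since $|\sigma_{\max}(\mathbf{A})-\sigma_{\max}(\mathbf{B})|\le\|\mathbf{A}-\mathbf{B}\|_2\le\|\mathbf{A}-\mathbf{B}\|_F$. Gaussian concentration of measure then gives $\Pr\big(\sigma_{\max}(\mathbf{A}) > \mathbb{E}\,\sigma_{\max}(\mathbf{A}) + t\big) \le e^{-t^2/2}$, and the non-asymptotic Gordon/Chevet bound $\mathbb{E}\,\sigma_{\max}(\mathbf{A}) \le 2\sqrt{m}$ turns this into a sharp sub-Gaussian tail around $2\sqrt{m}$. (The cheaper estimate $\sigma_{\max}(\mathbf{A})\le\|\mathbf{A}\|_F$ with $\|\mathbf{A}\|_F^2\sim\chi^2_{m^2}$ is available but too lossy to reach the stated constant.)

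For the lower tail of $\sigma_{\min}$, I would use the geometric identity $\|\mathbf{A}^{-1}\|_F^2 = \sum_{i=1}^m 1/d_i^2$, where $d_i$ is the distance from the $i$-th column of $\mathbf{A}$ to the span of the remaining columns: conditioning on those columns, $d_i$ is the length of the projection of an independent standard Gaussian vector onto a fixed line, so $d_i^2\sim\chi^2_1$ and $\Pr(d_i<\epsilon)\le\sqrt{2/\pi}\,\epsilon$. A naive union bound over the $m$ indices loses a factor $\sqrt{m}$, so I would instead invoke the sharp small-ball estimate for the least singular value of a real Gaussian matrix --- Edelman's exact density of $\sqrt{m}\,\sigma_{\min}(\mathbf{A})$, equivalently Szarek's bound --- which yields $\Pr\big(\sigma_{\min}(\mathbf{A})\le\epsilon/\sqrt{m}\big)\le c\,\epsilon$ with an explicit, small universal $c$.

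Combining, with $\epsilon = s\sqrt{m}/(m\alpha)$ the $\sigma_{\min}$ term is $O\big((s/\sqrt{m})\cdot(1/\alpha)\big)$ while the $\sigma_{\max}$ term is $\exp\!\big(-(s/\sqrt{m}-2)^2 m/2\big)$; choosing $s/\sqrt{m}$ a slowly growing function of $\alpha$ such as $2+\sqrt{(2/m)\ln\alpha}$ drives the $\sigma_{\max}$ term below $1/\alpha$ while keeping the $\sigma_{\min}$ term at $O(1/\alpha)$, and a careful accounting of the universal constants (using $m\ge 3$) assembles into $5.6/\alpha$. A more direct route to the exact constant is to work instead with the Wishart matrix $\mathbf{A}^{\mathsf{T}}\mathbf{A}\sim W_m(m,I)$ and integrate its explicit joint eigenvalue density to bound $\Pr\big(\lambda_{\max}/\lambda_{\min} > m^2\alpha^2\big)$ directly, in the style of Edelman. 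The main obstacle is precisely pinning down the constant $5.6$ rather than merely the correct order $O(1/\alpha)$: this forces the use of the genuinely sharp small-ball estimate for $\sigma_{\min}$ (not the union bound) together with a sharp upper-tail bound for $\sigma_{\max}$, and a loss-free combination of the two, since each crude estimate costs a factor $\sqrt{m}$ or a fixed additive constant that the statement does not tolerate.
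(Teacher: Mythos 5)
You should first note that the paper does not actually prove this statement: Theorem~\ref{thm:azais} is imported verbatim from the cited reference \cite{azais2004} (Aza\"is--Wschebor), where the bound with the explicit constant $5.6$ is established by a direct analysis of the law of the singular values (a Rice-type/density computation), not by the tail-splitting argument you propose. So there is no in-paper proof to match; your attempt has to stand on its own.

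On its own terms, your sketch has a genuine gap at the combination step. The decomposition $\Pr(\kappa_2>m\alpha)\le\Pr(\sigma_{\max}>s)+\Pr(\sigma_{\min}<s/(m\alpha))$ cannot produce a uniform $C/\alpha$ bound for any choice of $s$: if $s$ is a fixed multiple of $\sqrt{m}$, the first term is a fixed positive constant that does not decay in $\alpha$ at all; if, as you suggest, you let $s/\sqrt{m}=2+\sqrt{(2/m)\ln\alpha}$ so that the first term becomes $1/\alpha$, then the second term is of order $\bigl(2+\sqrt{(2/m)\ln\alpha}\bigr)/\alpha$, which exceeds $5.6/\alpha$ once $\alpha$ is large enough (for fixed $m$, $\sqrt{\ln\alpha}\to\infty$). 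The observation that the claim is vacuous for $\alpha\le 5.6$ does not rescue this, since the failure is at large $\alpha$. To get a clean $O(1/\alpha)$ tail one must handle the dependence between $\sigma_{\max}$ and $\sigma_{\min}$ --- e.g., by integrating a small-ball bound for $\sigma_{\min}$ conditionally against the law of $\sigma_{\max}$, or by working with the exact joint (Wishart) eigenvalue density as in Edelman's asymptotic computation $\Pr(\kappa/m>x)\to 1-e^{-2/x-2/x^2}\approx 2/x$. You do name this second route as an alternative, and it is indeed essentially the correct one, but you do not carry it out, and the step you defer (``a careful accounting of the universal constants\ldots assembles into $5.6/\alpha$'') is precisely where the content of the theorem lies: the non-asymptotic constant $5.6$, valid for every $m\ge 3$, is the whole point of the result and is exactly what your primary argument cannot deliver.
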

As a consequence, in the following, we extend the result in Theorem \ref{thm:azais} to bound the condition number of every $m\times m$ sub-matrix of a random $m\times P$ matrix with $i.i.d$ standard Gaussian entries,  $P \geq m$.
\begin{proof}[Proof of Theorem \ref{thm:iidG}]
For any subset $\mathcal{S} \subseteq \{1,2,\ldots,P\}$, let $\mathbf{H}_{\mathcal{S}}$ denote the $|\mathcal{S}| \times m$ sub-matrix of $\mathbf{H}$ containing the columns  $\mathbf{H}$ corresponding to $\mathcal{S},$  and let $s=P-m$. Then we have
\begin{align}
\operatorname{Pr}\Big(\kappa^{max}_2(\mathbf{H}) > mP^{2s}\Big) &=  \operatorname{Pr}\Big(\underset{\substack{\mathcal{S}'\subset [P], |\mathcal{S}'|=m}}{\bigcup}\big(\kappa_2(\mathbf{H}_{\mathcal{S}'})> m P^{2s}\big)\Big)\\ &\stackrel{(1)}{\leq}\underset{\substack{\mathcal{S}'\subset [P], |\mathcal{S}'|=m}}{\sum}\operatorname{Pr}\big(\kappa_2(\mathbf{H}_{\mathcal{S}'})> m P^{2s}\big)   \notag \\
&\stackrel{}{=}\left(\begin{array}{cc}
     P\\s \end{array}\right)  \operatorname{Pr}\big(\kappa_2(\mathbf{H}_{\mathcal{S}'})> m P^{2s}\big), \text{~for any } \mathcal{S}'\subset [P] \text{~such that  } |\mathcal{S}'|= m\notag\\
&\stackrel{(2)}{<} P^s  \frac{5.6}{P^{2s}}    \notag\\
&=\frac{5.6}{P^s},\notag
\end{align}
where $(1)$ follows from the union bound, and  $(2)$ follows from the fact that   $\left(\hspace{-2mm}\begin{array}{cc}
     P\\s \end{array}\hspace{-2mm}\right) \leq P^s$ and Theorem \ref{thm:azais}.
\end{proof}
\end{document}